\renewcommand{\arraystretch}{2.5}
\newcolumntype{C}[1]{>{\centering\arraybackslash}p{#1}}
\newcolumntype{?}{!{\vrule width 0.6\arrayrulewidth}}
\newtheorem{theorem}{Theorem}
\newtheorem{lemma}{Lemma}
\newtheorem{proposition}{Proposition}
\newtheorem{remark}{Remark}
\newtheorem{corollary}{Corollary}
\newtheorem{definition}{Definition}
\newcommand{\Psr}{\mathcal{P}_{\rm{sr}}}
\newcommand{\PsrR}{\mathcal{P}_{\rm{sr,RF}}}
\newcommand{\Prd}{\mathcal{P}_{\rm{rd}}}
\newcommand{\PrdR}{\mathcal{P}_{\rm{rd,RF}}}
\newcommand{\Psrbar}{\overline{\mathcal{P}}_{\rm{sr}}}
\newcommand{\Prdbar}{\overline{\mathcal{P}}_{\rm{rd}}}
\newcommand{\Pout}{\mathcal{P}_{\rm{E-E}}}
\newcommand{\PoutR}{\mathcal{P}_{\rm{E-E,RF}}}
\newcommand{\PUB}{f}
\newcommand{\PR}{x}
\newcommand{\skipval}{0.87mm} 
\newacronym{5G}{5G}{fifth generation}
\newacronym{AWGN}{AWGN}{additive white Gaussian noise}
\newacronym{CDF}{CDF}{cumulative distribution function}
\newacronym{CSI}{CSI}{channel state information}
\newacronym{FDR}{FDR}{full-duplex relaying}
\newacronym{HDR}{HDR}{half-duplex relaying}
\newacronym{IC}{IC}{interference channel}
\newacronym{IGS}{IGS}{improper Gaussian signaling}
\newacronym{MHDF}{MHDF}{multi-hop decode-and-forward}
\newacronym{MIMO}{MIMO}{multiple-input multiple-output}
\newacronym{MISO}{MISO}{multiple-input single-output}
\newacronym{MRC}{MRC}{maximum ratio combining}
\newacronym{PDF}{PDF}{probability density function}
\newacronym{PGS}{PGS}{proper Gaussian signaling}
\newacronym{RSI}{RSI}{residual self-interference}
\newacronym{RV}{RV}{random variable}
\newacronym{SISO}{SISO}{single-input single-output}
\newacronym{CEMSE}{CEMSE}{Computer, Electrical, and Mathematical Sciences and Engineering}
\newacronym{KAUST}{KAUST}{King Abdullah University of Science and Technology}
\begin{document}
\title{Full-Duplex Relaying with Improper Gaussian Signaling over Nakagami-$m$ Fading Channels}
\author{\IEEEauthorblockN{Mohamed Gaafar, \textit{Student Member, IEEE}, Mohammad Galal Khafagy, \textit{Senior Member, IEEE}, Osama Amin, \\\textit{Senior Member, IEEE}, Rafael F. Schaefer, \textit{Senior Member, IEEE} and Mohamed-Slim Alouini, \textit{Fellow, IEEE}
\thanks{M. Gaafar and R. F. Schaefer are with the Information Theory and Applications Chair, Technische Universit\"at Berlin, Germany. Email:  \{mohamed.gaafar, rafael.schaefer\}@tu-berlin.de.
\newline \indent O. Amin and M.-S. Alouini are with the \ac{CEMSE} Divison, \ac{KAUST}, Thuwal, Makkah Province, Saudi Arabia. Email: \{osama.amin, slim.alouini\}@kaust.edu.sa.
\newline \indent M. G. Khafagy was with the \ac{CEMSE} Division, \ac{KAUST}, Thuwal, Makkah Province, Saudi Arabia. Currently, he is with the Computer Science and Engineering Department, College of Engineering, Qatar University, Doha, Qatar. Email: mohammad.khafagy@kaust.edu.sa. }}
}


\maketitle
\begin{abstract}
We study the potential employment of \ac{IGS} in \ac{FDR} with non-negligible \ac{RSI} under Nakagami-$m$ fading. \Ac{IGS} is recently shown to outperform traditional \ac{PGS} in several interference-limited settings. In this work, \ac{IGS} is employed as an attempt to alleviate \ac{RSI}. We use two performance metrics, namely, the outage probability and the ergodic rate. First, we provide upper and lower bounds for the system performance in terms of the relay transmit power and circularity coefficient, a measure of the signal impropriety. Then, we numerically optimize the relay signal parameters based only on the channel statistics to improve the system performance. Based on the analysis, \ac{IGS} allows \ac{FDR}  to operate even with high \ac{RSI}.  The results show that \ac{IGS} can leverage higher power budgets to enhance the performance, meanwhile it relieves  \ac{RSI} impact via tuning the signal impropriety. Interestingly, one-dimensional optimization of the circularity coefficient, with maximum relay power, offers a similar performance as the joint optimization, which reduces the optimization complexity. From a throughput standpoint, it is shown that \ac{IGS}-\ac{FDR} can outperform not only \ac{PGS}-\ac{FDR}, but also \acl{HDR} with/without \acl{MRC} over certain regions of the target source rate.

\end{abstract}
\begin{keywords}
improper Gaussian signaling, asymmetric complex signaling, interference mitigation, full-duplex relay, residual self-interference, outage probability, ergodic rate, coordinate descent.
\end{keywords}

\glsresetall

\section{Introduction}
Contrary to a long-held acceptance that radio front-ends cannot simultaneously transmit and receive, a truly promising potential for \emph{full-duplex communications} has been shown by recent hardware developments \cite{mobicom2011fullduplex},\cite{201212_TWC_Duarte}. Indeed, by multiplexing inbound and outbound traffic over the same channel resource, a full-duplex radio can recover the spectral efficiency loss known to be encountered by its half-duplex counterpart. Performance merits of full-duplex radio have been recently investigated in different communication settings, including full-duplex bidirectional communication, full-duplex base stations, and \ac{FDR} \cite{201409_JSAC_FD_Tutorial}, with the latter being the focus of this work.  These merits have qualified full-duplex communication to be considered as a candidate technology for future \ac{5G} wireless networks \cite{201402_COMMAG_5G_full_duplex}. 

\Ac{FDR} allows a relay node to listen to an information source and simultaneously forward to its intended destination. Theoretically, this simultaneous transmission/reception doubles the spectral efficiency in the relay channel. However, in practice, this comes at the expense of a self-interference level introduced at the receiver of the relay node from its own transmitter. Even with the application of advanced self-interference isolation and cancellation techniques, a level of \ac{RSI} persists. Such a persistent \ac{RSI} link and the means to mitigate it represent the main challenge in full-duplex communications, especially with the fact that its adverse effect can typically be an increasing function of the relay power. Therefore, increasing the relay power no longer guarantees an enhanced end-to-end performance. For instance, by increasing the relay power in a fixed-rate transmission scheme, the relay may forward more reliably to the destination in the second hop. However, it also increases the \ac{RSI} level which negatively affects the reliability in the first hop. Hence, higher relay power budgets cannot be always utilized beyond a certain threshold. Consequently, employing interference mitigation strategies in \ac{FDR} is crucial to attain a satisfactory performance of full-duplex transmissions.

\Ac{IGS} has been first introduced in \cite{cadambe2010interference}, where higher degrees of freedom for the $3$-user  \ac{SISO} \ac{IC} were shown to be achievable. This comes in contrary to other communication settings with interference-free channels where \ac{PGS} is the optimal choice. \Ac{PGS} assumes the zero-mean complex Gaussian transmit signal to be statistically circularly symmetric with uncorrelated real and imaginary components. On the other hand, \ac{IGS} is a class of signals where circularity and uncorrelatedness conditions can be relaxed \cite{schreier2010statistical}. The results in \cite{cadambe2010interference} motivate the need to further study the potential gains of \ac{IGS} in communication scenarios where interference imposes a noticeable limitation. 

\Ac{IGS} has been recently adopted to improve the performance of different interference-limited communication settings, namely, \ac{SISO}-\ac{IC} \cite{ho2012improper,lameiro2013degrees}, \ac{MISO}-\ac{IC} \cite{zeng2013optimized}, \ac{MIMO}-\ac{IC} \cite{wang2011optimality,lagen2016coexisting}, Z-\ac{IC} \cite{kurniawan2015improper, lagen2016improper, lameiro2016rate}, \ac{MIMO} X-channel \cite{yang2014interference}, interference broadcast channels \cite{shin2012new}, cognitive radio channels using underlay \cite{lameiro2015benefits,Gaafar2015Spectrum,lameiro2016maximally,amin2016underlay,gaafar2017underlay} and overlay \cite{amin2017overlay} communication paradigms, alternate (two-path, virtual full-duplex) relaying \cite{gaafar2016letter}, symbol error rate reduction \cite{nguyen2015improper} and asymmetric hardware distortions ~\cite{sidrah2017asymmetric}.

The potential gains of \ac{IGS} have been also recently studied in \cite{Hellings2014OnOptimal} for the \ac{MIMO} relay channel when a partial decode-and-forward strategy is adopted. In such a relaying strategy, the relay only decodes a part of the message, while the rest of the message is treated as an additional interference term. It was shown in \cite{Hellings2014OnOptimal} that \ac{PGS} is optimal within the class of Gaussian signals. However, the work in \cite{Hellings2014OnOptimal} assumed an ideal \ac{FDR}, where the self-interference imposed by the relay's transmitter on its own receiver is perfectly canceled. In \cite{kim2012asymmetric}, a simplified model for \ac{FDR} with \ac{IGS} was considered, where the transmit signal only occupies the real dimension of the two-dimensional complex signal space. Also, all the communicating nodes in \cite{kim2012asymmetric} are assumed to perfectly share the instantaneous Rayleigh-fading channel coefficients. For such a simplified model, it was shown that \ac{IGS} was able to effectively improve the achievable rate by eliminating the \ac{RSI} via its alignment in only one orthogonal signal space dimension, and decoding the desired signal from the other. The more general problem, however, remains more challenging with further interesting aspects to investigate, where complex transmit signals are employed and only channel statistics are made available at the transmitter side.

In this work\footnote{While this work was in progress, preliminary results have been accepted and presented in IEEE ICC'16 \cite{gaafar2016improper}. In this work, we consider a more generalized framework under Nakagami-$m$ fading, which includes the work in \cite{gaafar2016improper} as a special case. We present herein the derived expressions for another performance metric, the ergodic rate, to evaluate the performance of \ac{IGS} in \ac{FDR}. Furthermore, we provide more detailed analysis and insights addressing the outage performance. Finally, more numerical results are presented to verify the mathematical analysis.}, we study the potential performance merits of employing \ac{IGS} in \ac{FDR} systems with non-negligible \ac{RSI} over Nakagami-$m$ fading channels. The main contributions of this paper can be summarized as follows: 
\begin{itemize}
\item We first derive the exact end-to-end outage probability of the \ac{FDR} system adopting \ac{IGS} at the relay over Nakagami-$m$ fading as a function of the relay's transmit power and circularity coefficient  in an  integral form. Further, we derive a lower bound of the end-to-end  outage probability in closed form. Moreover, we provide simpler forms of the derived expressions for Rayleigh fading and derive an upper bound for the end-to-end outage probability based on some convexity properties that are also proven herein. Based on this upper bound, we present an asymptotic analysis that shows the benefits which can be reaped from \ac{IGS} in \ac{FDR} systems.
\item We derive an integral form of the exact end-to-end ergodic rate of the \ac{IGS}-\ac{FDR} system as a function of the relay's transmit power and circularity coefficient over Nakagami-$m$ fading. In addition, we derive an upper bound for the end-to-end  ergodic rate. We also present a lower bound for the end-to-end ergodic rate over Rayleigh fading.
\item We prove unimodality properties of the presented end-to-end outage probability upper bound in the relay's transmit power and circularity coefficient over Rayleigh fading, which allow for efficient numerical optimization using standard tools. Also, we design the relay's transmit signal characteristics,  represented in the power and circularity coefficient, by a coordinate descent algorithm based on the outage upper bound over Rayleigh fading.
\item Finally, with the aid of numerical results, we first validate the derived bounds for the outage probability and ergodic rate. Next, we discuss the benefits that can be reaped by employing \ac{IGS} in \ac{FDR} in comparison to \ac{PGS}, and also relative to \ac{HDR}, under different system and fading parameters. We show that \ac{IGS}-\ac{FDR}  is not only able to outperform \ac{PGS}-\ac{FDR} in terms of the end-to-end throughput, but also that of \ac{HDR} even with \ac{MRC} over certain ranges of the targeted source rate. Finally, we show that one may not seriously compromise the benefits attained via the joint tuning of the power and circularity, and still attain a close end-to-end performance merits through a simple one-dimensional tuning of the circularity coefficient.   
\end{itemize}
The rest of the paper is organized as follows. In Section II, we describe the adopted \ac{FDR}  system model. Section III derives the outage probability of the \ac{FDR} system over Nakagami-$m$ fading, while Section IV deals with the ergodic rate of the system. The design of the signal characteristics for the  relay transmissions based on the outage probability over Rayleigh fading is presented in Section V. We validate the performance of \ac{IGS} in \ac{FDR} systems in Section VI through numerical simulations. Finally, we conclude the paper in Section VII. \\

\textbf{\textit{Notation and Special Functions:}} Throughout the rest of the paper, we use $ \left| \cdot \right| $ to denote the absolute value operation. $\mathbb{P}\{A\}$ denotes the probability of occurrence of an event $A$. The operator $\mathbb{E}\{\cdot\}$ is used to denote the statistical expectation, with the mean of a \ac{RV} $X$ defined as $\bar{X} = \mathbb{E}\{ X \}$. $\binom nk$ is the binomial coefficient and $\min \left(x,y\right)$ is the minimum of the two quantities $x$ and $y$. Throughout the analysis, we use the following special functions \cite{abramowitz1966handbook}.  $\Gamma(a)=\int\nolimits_0^\infty  {{{t^{a-1}}}}{{{{e^{ - t}}}}} dt$ denotes the gamma function and $\Gamma(a,x)=\int\nolimits_x^\infty  {{{t^{a-1}}}}{{{{e^{ - t}}}}} dt$ is the upper incomplete gamma function. ${E_n}(x) = \int\nolimits_1^\infty  {\frac{{{e^{ - xt}}}}{{{t^n}}}} dt$ and $U(a,b,z) = \frac{1}{{\Gamma \left( a \right)}}\int\nolimits_0^\infty  {{t^{a - 1}}{{\left( {t + 1} \right)}^{b - a - 1}}{e^{ - zt}}dt}\;, a, z > 0 $ are the exponential integral and the Tricomi's confluent hypergeometric functions, respectively.  When $n=1$, we use $E_1(x)=-\rm{Ei}(-x),\;x>0$, where the exponential integral $\rm{Ei}(x)=-\int\nolimits_{-x}^\infty  {\frac{{{e^{ - t}}}}{{{t}}}} dt$.  Also, we use ${\Xi _n}\left( x \right) = {e^x}{E_n}(x)$ to combine the exponential function and integral.

\section{System Model}\label{sec:sys_mod}
\begin{figure}[!t]
\centering
\includegraphics[width=0.7\columnwidth]{./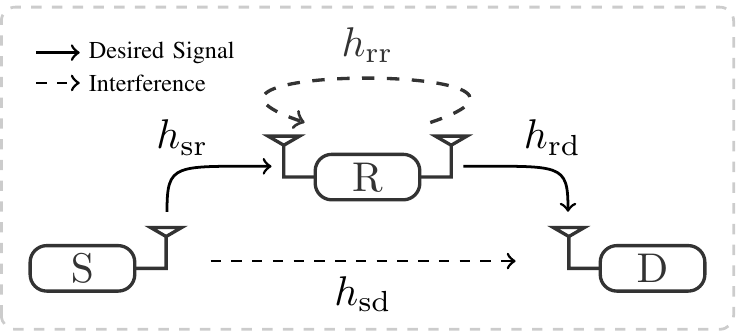}
\caption{A full-duplex cooperative setting in coverage extension scenarios.}
\label{sysmodfig}
\end{figure}
We consider the communication setting depicted in Fig. \ref{sysmodfig}, where a source (${\rm S}$) intends to communicate with a distant destination (${\rm D}$). The direct ${\rm S -  D}$ link is assumed of a relatively weak gain due to path loss and shadowing effects. Accordingly, a full-duplex relay (${\rm R}$) is utilized to assist the end-to-end communication and extend the coverage.  \Ac{FDR} can offer higher spectral efficiency when compared to \ac{HDR}. However, \ac{FDR} in practice suffers from an \ac{RSI} level which imposes an additional communication challenge. In addition, the received signal component via the ${\rm S -  D}$ link is assumed to be weak and hence, it is considered as interference at the destination for simpler decoding purposes as commonly assumed in the literature \cite{kwon2010optimal,Riihonen2011Hybrid}. Thus, the \ac{FDR} system under consideration suffers from two interference sources; the \ac{RSI} at the relay, and the direct ${\rm S -  D}$ link signal received at the destination. 
\subsection{Channel Model}\label{subsec:chan_model}
The fading coefficient of the $i-j$ link is denoted by $h_{ij}$, for $i\in\{\rm{s},\rm{r}\}$ and $j\in\{\rm{r},\rm{d}\}$, where $\rm{s}$, $\rm{r}$ and $\rm{d}$ refer to the source, relay, and destination nodes, respectively. Moreover, the $i-j$ link gain is denoted by $g_{ij}=|h_{ij}|^2$. All channels are assumed to follow a block fading model, where $h_{ij}$ remains constant over one block, and varies independently from one block to another following a 
Nakagami-$m$ fading distribution with a shaping parameter $m_{ij}$ and average power $\mathbb{E}\left\{|h_{ij}|^2\right\}=\pi_{ij}$. Accordingly, the channel gain $g_{ij}$ is a gamma \ac{RV} with a shaping parameter\footnote{In this paper, we assume only \textit{integer} shape parameter.  Typical analysis involving wireless communications over Nakagami-$m$ channels uses values of $m=1$ up to $m=4$ to keep the fading effects \cite{simon2005digital}.  } $m_{ij}$ and a scale parameter $\theta_{ij}=\frac{\pi_{ij}}{m_{ij}}$. Furthermore, it should be noted that the \ac{PDF} of $g_{ij}$ is defined as
\begin{equation}
f_{g_{ij}}(x;m_{ij},\theta_{ij})=\frac{x^{m_{ij}-1}e^{-\frac{x}{\theta_{ij}}}}{\Gamma(m_{ij})\theta_{ij}^{m_{ij}}},\;\;x\geq 0\cdot
\end{equation}
In the special case $m_{ij}= 1$, Rayleigh fading is obtained. All channel fading gains are assumed to be mutually independent. 

The relay operates in a full-duplex mode where simultaneous listening/forwarding is allowed with an introduced level of loopback interference. The link coefficient $h_{\rm{rr}}$\footnote{Depending on the communication setting, transmit power ranges, and the employed RSI cancellation techniques, the average RSI link gain is typically assumed to range from 3 to 10/15 dB above the noise floor \cite[Section IV.B]{201212_TWC_Duarte},  \cite[Chapter 2]{Khafagy2016FDRDissertation},  \cite{2013_SIGCOMM_Stanford_Single_FD}.} is assumed to represent the \ac{RSI} after undergoing all possible isolation and cancellation techniques \cite{kwon2010optimal,Riihonen2011Hybrid, khafagy2015efficient}. The source and the relay powers are denoted by $P_{\rm{s}}$ and $P_{\rm{r}}$, respectively, where both are restricted to a maximum allowable value of $P_{\mathrm{max}}$. Also, $n_{\rm{r}}$ and $n_{\rm{d}}$ denote the circularly-symmetric complex \ac{AWGN} components at the relay and the destination, with variance $\sigma_{\rm{r}}^2$ and $\sigma_{\rm{d}}^2$, respectively. Without loss of generality, we assume that $\sigma_{\rm{r}}^2=\sigma_{\rm{d}}^2=1$. 
\subsection{Signal Model}
The transmit signals at the source and the relay at time $t$ are denoted by $x_{\rm{s}}[t]$ and $x_{\rm{r}}[t]$, respectively.  \ac{PGS} is adopted at the source \footnote{For the ease of exposition, we use PGS at the source. This assumption can be further justified since the use of IGS at the source calls for a joint source/relay improper signal optimization. Although an IGS source is expected to offer further gains, such a joint optimization turns out to be mathematically involved and intractable.} which transmits with its  power budget, $P_{\mathrm{s}}=P_{\mathrm{max}}$.  On the other hand, with the availability of statistical transmit \ac{CSI} at the relay, zero-mean \ac{IGS} is adopted at the relay in order to mitigate the non-negligible \ac{RSI} at its receiver. The degree of impropriety of $x_{\mathrm{r}}[t]$ is measured based on the following definitions. 
\begin{definition}\label{def:1}\cite{Neeser1993proper}
The variance and pseudo-variance of the relay's transmit signal, $x_{\mathrm{r}}[t]$, are given by $\sigma _x^2={\mathbb{E}} {{{\{\left| x_{\rm{r}} \right|}^2\}}}$ and $\tilde\sigma _x^2=\mathbb{E} {{\{{ x_{\rm{r}}^2 }\}}}$, respectively. 
\end{definition}
\begin{definition}\label{def:2}
A signal is called \emph{proper} if it has a zero pseudo-variance $\tilde\sigma _x^2$, while an \emph{improper} signal has a non-zero $\tilde\sigma _x^2$. 
\end{definition}
\begin{definition}\label{def:3}\cite{lameiro2015benefits}
A circularity coefficient is a measure of the degree of impropriety of the signal $x_{\mathrm{r}}[t]$, which is given as ${{\cal C}_x} = {{\left| {\tilde \sigma _x^2} \right|}}/{{\sigma _x^2}}$.
\end{definition}
\noindent \noindent Following from Definitions \ref{def:2} and \ref{def:3}, the circularity coefficient satisfies $0\leq\mathcal{C}_x \leq 1$. In particular, $\mathcal{C}_x=0$ implies a \emph{proper} signal, while $\mathcal{C}_x=1$ implies a \emph{maximally improper} signal.

 The received signals at the relay and the destination at time $t$ are given, respectively, by
\begin{eqnarray}
y_{\rm{r}}[t]	
&=&
\sqrt{P_{\rm{s}}} h_{\rm{sr}} x_{\rm{s}}[t]+ \sqrt{P_{\rm{r}}} h_{\rm{rr}}x_{\rm{r}}[t] + n_{\rm{r}}[t],\label{relay_rec_signal}\\
y_{\rm{d}}[t]	
&=&
\sqrt{P_{\rm{r}}} h_{\rm{rd}} x_{\rm{r}}[t]+ \sqrt{P_{\rm{s}}} h_{\rm{sd}}x_{\rm{s}}[t] + n_{\rm{d}}[t]\cdot\label{dest_rec_signal}
\end{eqnarray}
The relay is assumed to adopt a decode-and-forward relaying strategy, where it does not transmit any message of its own, but forwards the regenerated source message after decoding. Due to the source and relay asynchronous transmissions, the signal transmitted by the relay (source) is considered as an additional noise term at the relay (destination) in the decoding stage as commonly treated in the related literature \cite{kwon2010optimal,Riihonen2011Hybrid}. 
\subsection{Achievable Rates with \ac{IGS}}
From the adopted signal model in \eqref{relay_rec_signal} and \eqref{dest_rec_signal}, each transmit signal, i.e., from the source and the relay transmitter, is considered as a desired signal at one receiver while treated as interference at the other. Hence, the rate expressions for the first and second hops have the same form of those of a two-user IC.
\begin{lemma}\label{single_link_rate}\cite{zeng2013transmit}
The achievable rate of a single link that is subjected to  interference-plus-noise  $z$ when $x$ is transmitted and observed as $y$ is expressed as 
\begin{equation} 
R = \frac{1}{2}\log_2 \left( {\frac{{\sigma _{{{{y}}}}^4 - {{\left| {\tilde \sigma _{{{{y}}}}^2} \right|}^2}}}{{\sigma _z^4 - {{\left| {\tilde \sigma _z^2} \right|}^2}}}} \right)\cdot
\end{equation}
\end{lemma}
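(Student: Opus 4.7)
The plan is to recover the expression by invoking the standard information-theoretic machinery for improper (non-circular) complex Gaussian signals, in which all second-order statistics of a complex random variable $u$ are captured by the augmented covariance
\[
\Cm_u = \begin{pmatrix} \sigma_u^2 & \tilde\sigma_u^2 \\ (\tilde\sigma_u^2)^{*} & \sigma_u^2 \end{pmatrix},
\qquad \det\Cm_u = \sigma_u^4 - |\tilde\sigma_u^2|^2 .
\]
The two facts from the improper-signaling toolbox that I would cite (or briefly rederive) are: (i) the Gaussian distribution is entropy-maximizing under the augmented second-order constraint, so a zero-mean improper complex Gaussian $u$ has differential entropy $h(u)=\log_2(\pi e)+\tfrac{1}{2}\log_2\det\Cm_u$; and (ii) mutual information for a linear channel with additive, signal-independent (possibly improper) Gaussian interference-plus-noise is still $I(x;y)=h(y)-h(y\mid x)=h(y)-h(z)$, because conditioning on $x$ leaves only $z$ at the receiver.

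First, I would write the received signal as $y = \alpha x + z$ for some deterministic channel coefficient implicit in the model of \eqref{relay_rec_signal}--\eqref{dest_rec_signal}, with $x$ and $z$ independent and jointly complex Gaussian. Since independence preserves additivity of the augmented covariance, $\Cm_y = \alpha\Cm_x\alpha^{\mathsf H} + \Cm_z$, so $\sigma_y^2$ and $\tilde\sigma_y^2$ are determined from those of $x$ and $z$; this is however never needed explicitly because the lemma is stated in terms of $\sigma_y^2,\tilde\sigma_y^2,\sigma_z^2,\tilde\sigma_z^2$ directly. Second, I would apply fact (ii) and then substitute the augmented-entropy formula (i) to both terms, yielding
\[
R = \tfrac{1}{2}\log_2\det\Cm_y - \tfrac{1}{2}\log_2\det\Cm_z
   = \tfrac{1}{2}\log_2\frac{\sigma_y^4-|\tilde\sigma_y^2|^2}{\sigma_z^4-|\tilde\sigma_z^2|^2},
\]
which is the claimed expression. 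The constant $\log_2(\pi e)$ cancels between $h(y)$ and $h(z)$, and the factor $\tfrac{1}{2}$ is the geometric trace of the $2\times 2$ augmented representation — it is not an artifact of real-valued signaling.

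The only real obstacle I anticipate is justifying the Gaussianity of $y$ and $z$ as \emph{improper} Gaussians with the stated second-order parameters, and ensuring that the maximum-entropy argument in (i) is applied with the correct (augmented, not ordinary) covariance constraint; a reader unfamiliar with widely linear processing might otherwise expect the usual $\log_2(1+\mathrm{SNR})$ form. Once the augmented covariance viewpoint is adopted, everything else is an immediate determinant calculation, and the lemma follows in one line. The result is already in \cite{zeng2013transmit}, so in the paper itself it suffices to cite that reference and state the augmented-entropy identity; the derivation above is the self-contained version one would present if asked for a proof.
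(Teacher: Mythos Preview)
Your derivation is correct and is exactly the standard augmented-covariance argument used in the improper-signaling literature. Note, however, that the paper does not actually prove this lemma: it is stated with a citation to \cite{zeng2013transmit} and used as a black box, so there is no ``paper's proof'' to compare against---your self-contained version is precisely what one would supply if a proof were required, and you correctly anticipated this in your final paragraph.
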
 
 As a result of using \ac{IGS}, while treating the interference as a Gaussian noise, the achievable rates of the \ac{FDR} system are analyzed based on Lemma \ref{single_link_rate}. First,  the achievable rate supported by the ${\rm S -  R}$ link can be expressed as
\begin{align}\label{R_sr_zeng}
\!\!\!{{R}_{\rm{sr}}}\!\left( { {P_{\rm{r}}},{{\cal C}_x}} \right) &={\log _2}\left(\!{1 + \frac{{{P_{\rm{s}}}{g_{\rm{sr}}}}}{{{P_{\rm{r}}}{g_{\rm{rr}}} + 1}}}\!\right)\!+\!\frac{1}{2}{\log _2}\left( {\frac{{1 - {\cal C}_{{{{y}}_{\rm{r}}}}^2}}{{1 - {\cal C}_{{{{I}}_{\rm{r}}}}^2}}} \right),\!\!
\end{align} 
where ${\cal C}_{{{{y}}_{\rm{r}}}}$ and ${\cal C}_{{{{I}}_{\rm{r}}}}$ are the circularity coefficients of the received and
interference-plus-noise signals at the relay, respectively.
Hence, \eqref{R_sr_zeng} can be simplified as
\begin{align}\label{R_sr_zeng_simplified}
\!\!\!\!\!\!{{R}_{\rm{sr}}}\!\left( { {P_{\rm{r}}},{{\cal C}_x}} \right) \!&=\!\frac{1}{2} {\log _2}\!\left(\!{\frac{{{{\left( {{P_{\rm{s}}}{g_{\rm{sr}}}+ {P_{\rm{r}}}{g_{\rm{\rm{rr}}}} + 1} \right)}^2}\!-\!{{\left( {{P_{\rm{r}}}{g_{\rm{rr}}}{{\cal C}_x}} \right)}^2}}}{{{{\left( {{P_{\rm{r}}}{g_{\rm{rr}}} + 1} \right)}^2}\!-\!{{\left( {{P_{\rm{r}}}{g_{\rm{rr}}}{{\cal C}_x}} \right)}^2}}}}\!\right)\!\cdot\!\!\!
\end{align}
Similarly, the achievable rate supported by the ${\rm R -  D}$ link is given by
\begin{align}\label{R_rd_zeng_simplified}
\!\!\!\!\!\!{{R}_{\rm{rd}}}\!\left( { {P_{\rm{r}}},{{\cal C}_x}} \right) \!&=\!\frac{1}{2}{\log _2}\!\left(\!{\frac{{{{\left( {{P_{\rm{r}}}{g_{\rm{rd}}}\!+\!{P_{\rm{s}}}{g_{\rm{sd}}} + 1} \right)}^2}\!-\!{{\left( {{P_{\rm{r}}}{g_{\rm{rd}}}{{\cal C}_x}} \right)}^2}}}{{{{\left( {{P_{\rm{s}}}{g_{\rm{sd}}} + 1} \right)}^2}}}}\!\right)\!\cdot\!\!\!
\end{align}    
One can notice that if $\mathcal{C}_x=0$, we obtain the well known achievable rates of \ac{PGS}. From \eqref{R_sr_zeng_simplified} and  \eqref{R_rd_zeng_simplified}, by adopting \ac{IGS} at the relay transmitter, the rate of the ${\rm S -  R}$ hop improves while the  rate of the ${\rm R -  D}$ hop deteriorates which creates a trade-off that can be optimized to yield a better performance than \ac{PGS}.
\section{Outage Performance}
In this section, we derive the end-to-end outage probability of the \ac{FDR} system  depicted in Fig. \ref{sysmodfig} under the assumption of Nakagami-$m$ fading while the relay transmits improper signals. Moreover, we obtain simpler forms for the Rayleigh fading case.
%
%
 The end-to-end outage probability is given by
\begin{eqnarray}\label{p_out_overall}
\Pout &=& 1- \Psrbar~\Prdbar,
\end{eqnarray}
where $\Psr$ and $\Prd$ denote the outage probability in the ${\rm S - R}$ and the ${\rm R - D}$ links, respectively, while $\overline{\mathcal{P}}_{ij}=1-\mathcal{P}_{ij}$. In what follows, we derive the outage probability expressions in the individual links, i.e., $\Psr$ and $\Prd$.
\subsection{Outage Probability of ${\rm S - R}$ Link}
 Let $r$ (bits/sec/Hz) denote the target rate of the ${\rm S - R}$ link, then its outage probability is defined ~as
 \begin{equation}\label{p_out_sr_prob_m}
\Psr\left( { {P_{\rm{r}}},{{\cal C}_x}} \right) = \mathbb{P}\left\{ {{{R}_{\rm{sr}}}\left( { {P_{\rm{r}}},{{\cal C}_x}} \right) < r} \right\}\cdot
 \end{equation}
The outage probability of the ${\rm S - R}$ can be calculated from Lemma \ref{OP_S-R_int}.
 \begin{lemma}\label{OP_S-R_int}
 The exact outage probability of the ${\rm S - R}$ link with a target rate $r$ is given by
 \begin{align}
&{{\cal P}_{\rm{sr}}}\left( {{P_{\rm{r}}},{{\cal C}_x}} \right) = 1 - \frac{1}{{\Gamma \left( {{m_{\rm{rr}}}} \right)\Gamma \left( {{m_{{\rm{sr}}}}} \right)\theta _{\rm{rr}}^{{m_{\rm{rr}}}}}} \nonumber \\
&\times \int\limits_0^\infty  {{x^{{m_{\rm{rr}}} - 1}}\Gamma \left( {{m_{{\rm{sr}}}},\frac{{\left( {{P_{\rm{r}}}x + 1} \right)}}{{{P_{\rm{s}}}{\theta _{{\rm{sr}}}}}}\Psi_r \left( {\frac{{{P_{\rm{r}}}x{{\cal C}_x}}}{{{P_{\rm{r}}}x + 1}}} \right)} \right){e^{ - \frac{x}{{{\theta _{\rm{rr}}}}}}}} dx, \label{lemma_1_Psr}
\end{align}
where $\Psi_r\left( x\right) = \left( {\sqrt {1 + \gamma \left( {1 - {x^2}} \right)}  - 1} \right)$ and  $\gamma  = {2^{2r}} - 1$.
 \end{lemma}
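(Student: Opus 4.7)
The plan is to convert the outage event into a simple threshold condition on $g_{\rm{sr}}$ conditional on $g_{\rm{rr}}$, and then average the resulting conditional CDF of $g_{\rm{sr}}$ against the Nakagami-$m$ density of $g_{\rm{rr}}$. The bulk of the argument is the algebraic manipulation that reveals the $\Psi_r$ structure hidden inside \eqref{R_sr_zeng_simplified}; the probabilistic step afterwards is standard.

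Starting from \eqref{p_out_sr_prob_m} together with the rate expression \eqref{R_sr_zeng_simplified}, the event $R_{\rm{sr}}<r$ is equivalent to forcing the argument of the logarithm below $2^{2r}=1+\gamma$. Writing $A=P_{\rm{s}}g_{\rm{sr}}$ and $B=P_{\rm{r}}g_{\rm{rr}}$ for brevity, this reads
\begin{equation*}
(A+B+1)^2-(B\mathcal{C}_x)^2 \;<\; (1+\gamma)\bigl[(B+1)^2-(B\mathcal{C}_x)^2\bigr],
\end{equation*}
which, after moving the $(B\mathcal{C}_x)^2$ terms to the right, rearranges to $(A+B+1)^2 < (1+\gamma)(B+1)^2-\gamma(B\mathcal{C}_x)^2$. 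Since $0\le\mathcal{C}_x\le 1$, the right-hand side is bounded below by $(B+1)^2>0$, so both sides are non-negative and the square root preserves the inequality. Factoring $(B+1)$ out of the radical and recognizing the definition of $\Psi_r$ yields $A<(B+1)\,\Psi_r\!\bigl(B\mathcal{C}_x/(B+1)\bigr)$, and restoring $A,B$ and dividing by $P_{\rm{s}}$ gives the equivalent outage condition
\begin{equation*}
g_{\rm{sr}}\;<\;\frac{P_{\rm{r}}g_{\rm{rr}}+1}{P_{\rm{s}}}\,\Psi_r\!\left(\frac{P_{\rm{r}}g_{\rm{rr}}\mathcal{C}_x}{P_{\rm{r}}g_{\rm{rr}}+1}\right).
\end{equation*}

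To finish, I would condition on $g_{\rm{rr}}=x$ and invoke the gamma CDF identity $\mathbb{P}\{g_{\rm{sr}}<t\}=1-\Gamma(m_{\rm{sr}},t/\theta_{\rm{sr}})/\Gamma(m_{\rm{sr}})$ (valid for the integer shape parameter assumed in the paper) at the threshold obtained above. Averaging this conditional outage probability against the Nakagami-$m$ density $f_{g_{\rm{rr}}}(x)=x^{m_{\rm{rr}}-1}e^{-x/\theta_{\rm{rr}}}/(\Gamma(m_{\rm{rr}})\theta_{\rm{rr}}^{m_{\rm{rr}}})$ via the total probability theorem and pulling the constant $1$ outside the integral reproduces \eqref{lemma_1_Psr} verbatim.

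The only non-routine step is the algebraic reduction that turns the outage inequality into a clean threshold on $g_{\rm{sr}}$; I expect the square-root manipulation and the recognition of the factor $\Psi_r\!\bigl(P_{\rm{r}}g_{\rm{rr}}\mathcal{C}_x/(P_{\rm{r}}g_{\rm{rr}}+1)\bigr)$ to be the main conceptual hurdle. Once that form is exposed, the conditional-CDF plus marginal-integration argument is direct and does not rely on any special-function identities beyond the definition of the upper incomplete gamma function.
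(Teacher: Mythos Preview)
Your proposal is correct and follows essentially the same route as the paper: both reduce the outage event to a quadratic condition in $g_{\rm{sr}}$ (the paper states it as a quadratic and picks the positive root, while you complete the square and take the square root after checking positivity), identify the threshold $g_{\rm{sr}}^\circ=\tfrac{P_{\rm r}g_{\rm rr}+1}{P_{\rm s}}\Psi_r\!\bigl(\tfrac{P_{\rm r}g_{\rm rr}\mathcal{C}_x}{P_{\rm r}g_{\rm rr}+1}\bigr)$, and then apply the gamma CDF followed by averaging over $g_{\rm{rr}}$. The only difference is cosmetic.
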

 
 \begin{proof}
 By substituting \eqref{R_sr_zeng_simplified} into \eqref{p_out_sr_prob_m}, we get
 \begin{align}
{{\cal P}_{\rm{sr}}}\left( {{P_{\rm{r}}},{{\cal C}_x}} \right) = \mathbb{P}\Big\{ &P_{\rm{s}}^2g_{\mathrm{sr}}^2 + 2{P_{\rm{s}}}{g_{\mathrm{sr}}}\left( {{P_{\rm{r}}}{g_{\mathrm{rr}}} + 1} \right) \nonumber\\
& - \gamma \left( {{{\left( {{P_{\rm{r}}}{g_{\mathrm{rr}}} + 1} \right)}^2} - {{\left( {{P_{\rm{r}}}{g_{\mathrm{rr}}}{{\cal C}_x}} \right)}^2}} \right) < 0 \Big\}\cdot
\end{align}
The expression inside the probability expression is a quadratic function in $g_{\mathrm{sr}}$. Hence, the conditional outage probability given  $g_{\mathrm{rr}}$ is equivalent to integrating the gamma \ac{PDF} of  $g_{\mathrm{sr}}$ over the region in which the quadratic function is less than $0$ which gives
\begin{align}\label{OP_S-R_conditioned}
{{\cal P}_{\rm{sr}}}\left( {{P_{\rm{r}}},{{\cal C}_x}\left| {{g_{\mathrm{rr}}}} \right.} \right) &= \frac{1}{{\Gamma \left( {{m_{{\rm{sr}}}}} \right)\theta _{{\rm{sr}}}^{{m_{{\rm{sr}}}}}}}\int\limits_0^{g_{\mathrm{sr}}^\circ } {{x^{{m_{{\rm{sr}}}} - 1}}{e^{ - \frac{x}{{{\theta _{{\rm{sr}}}}}}}}\;} dx  \nonumber \\
&= 1 - \frac{{\Gamma \left( {{m_{{\rm{sr}}}},\frac{{g_{\mathrm{sr}}^\circ }}{{{\theta _{{\rm{sr}}}}}}} \right)}}{{\Gamma \left( {{m_{{\rm{sr}}}}} \right)}},
\end{align} 
where $g_{\mathrm{sr}}^\circ  = \frac{{\left( {{P_{\rm{r}}}{g_{\mathrm{rr}}} + 1} \right)}}{{{P_{\rm{s}}}}}{\Psi_r }\left( {\frac{{{P_{\rm{r}}}{g_{\mathrm{rr}}}{{\cal C}_x}}}{{{P_{\rm{r}}}{g_{\mathrm{rr}}} + 1}}} \right)$ represents the positive root of the inequality inside the probability sign. Therefore, by averaging over the statistics of $g_{\rm{rr}}$ in \eqref{OP_S-R_conditioned}, we obtain
\begin{align}
{{\cal P}_{\rm{sr}}}\left( {{P_{\rm{r}}},{{\cal C}_x}} \right) = 1-\mathbb{E}_{g_{\rm{rr}}}\left\{\frac{{\Gamma \left( {m_{{\rm{sr}}}},{{\frac{{\left( {{P_{\rm{r}}}{g_{\mathrm{rr}}} + 1} \right)}}{{{P_{\rm{s}}}{{{\theta _{{\rm{sr}}}}}}}}{\Psi_r }\left( {\frac{{{P_{\rm{r}}}{g_{\mathrm{rr}}}{{\cal C}_x}}}{{{P_{\rm{r}}}{g_{\mathrm{rr}}} + 1}}} \right)}} \right)}}{{\Gamma \left( {{m_{{\rm{sr}}}}} \right)}} \right\}\!\!,
\end{align}
which directly yields the result in \eqref{lemma_1_Psr}.
 \end{proof}
Unfortunately, there is no closed form  expression for the integral in Lemma \ref{OP_S-R_int}. In the following, we derive a lower bound on the ${\rm S - R}$ outage probability for Nakagami-$m$ fading, in addition to an upper bound for the special Rayleigh fading case.
\vspace{5pt}
\subsubsection{Lower Bound}
A lower bound on ${{\cal P}_{\rm{sr}}}$ is provided in the following lemma.
\begin{lemma}\label{OP_S-R_LB}
The ${\rm S -  R}$ link outage probability can be lower-bounded as
\begin{align}
{{\cal P}_{{\rm{sr}}}} &\ge 1 - \frac{{{e^{ - \frac{{{\Psi _{{r}}}\left( {{{\cal C}_x}} \right)}}{{{P_{\rm{s}}}{\theta _{{\rm{sr}}}}}}}}}}{{\Gamma \left( {{m_{{\rm{rr}}}}} \right)\theta _{{\rm{rr}}}^{{m_{{\rm{rr}}}}}}}  \nonumber\\
&\times\sum\limits_{m = 0}^{{m_{{\rm{sr}}}} - 1} {\sum\limits_{k = 0}^m \binom mk \frac{{P_{\rm{r}}^k\Gamma \left( {k + {m_{{\rm{rr}}}}} \right){{\left( {\frac{{{\Psi _{{r}}}\left( {{{\cal C}_x}} \right)}}{{{P_{\rm{s}}}{\theta _{{\rm{sr}}}}}}} \right)}^m}}}{{\Gamma \left( {m + 1} \right){{\left( {\frac{{{P_{\rm{r}}}{\Psi _{{r}}}\left( {{{\cal C}_x}} \right)}}{{{P_{\rm{s}}}{\theta _{{\rm{sr}}}}}} + \frac{1}{{{\theta _{{\rm{rr}}}}}}} \right)}^{k + {m_{{\rm{rr}}}}}}}}} \nonumber \\
&  \buildrel \Delta \over = {\cal P}_{{\rm{sr}}}^{{\rm{LB}}}\left( {{P_{\rm{r}}},{{\cal C}_x}} \right)\cdot
\end{align}
\end{lemma}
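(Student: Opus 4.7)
My approach starts from the exact integral expression for $\Psr$ in Lemma~\ref{OP_S-R_int} and produces the stated lower bound by (i) bounding the argument of the upper incomplete gamma pointwise in $x$, and (ii) evaluating the resulting integral in closed form using the finite series representation of $\Gamma(n,z)$ valid for integer $n$.

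\textbf{Pointwise bound.} I would first record two monotonicity facts. The map $x\mapsto \frac{P_{\rm r} x\, \mathcal{C}_x}{P_{\rm r} x+1}$ has derivative $\frac{P_{\rm r}\mathcal{C}_x}{(P_{\rm r} x+1)^2}\ge 0$, is increasing on $[0,\infty)$, and tends to $\mathcal{C}_x$; hence it lies in $[0,\mathcal{C}_x]$ for every $x\ge 0$. The function $\Psi_r(u)=\sqrt{1+\gamma(1-u^2)}-1$ has derivative $-\gamma u/\sqrt{1+\gamma(1-u^2)}\le 0$ on $[0,1]$, so it is decreasing there. Composing, for all $x\ge 0$,
\begin{equation*}
\Psi_r\!\left(\frac{P_{\rm r} x\,\mathcal{C}_x}{P_{\rm r} x+1}\right)\;\ge\;\Psi_r(\mathcal{C}_x).
\end{equation*}
Because $\Gamma(a,z)$ is decreasing in $z$, replacing the larger inner argument by $\Psi_r(\mathcal{C}_x)$ \emph{enlarges} $\Gamma(m_{\rm sr},\cdot)$, which \emph{enlarges} the integrand in \eqref{lemma_1_Psr}. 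Since this integrand enters $\Psr$ with a minus sign, the replacement yields a lower bound on $\Psr$.

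\textbf{Closed-form evaluation.} Setting $\alpha := \Psi_r(\mathcal{C}_x)/(P_{\rm s}\theta_{\rm sr})$, the bounding integral becomes $\int_0^\infty x^{m_{\rm rr}-1}\,\Gamma\!\left(m_{\rm sr},(P_{\rm r} x+1)\alpha\right) e^{-x/\theta_{\rm rr}}\,dx$. For integer $m_{\rm sr}$ I would expand
\begin{equation*}
\Gamma(m_{\rm sr},z) = (m_{\rm sr}-1)!\,e^{-z}\sum_{m=0}^{m_{\rm sr}-1}\frac{z^m}{m!},
\end{equation*}
apply the binomial theorem to $(P_{\rm r} x+1)^m=\sum_{k=0}^{m}\binom{m}{k}(P_{\rm r} x)^k$, and reduce to standard gamma integrals $\int_0^\infty x^{k+m_{\rm rr}-1} e^{-(P_{\rm r}\alpha+1/\theta_{\rm rr})x}\,dx = \Gamma(k+m_{\rm rr})/(P_{\rm r}\alpha+1/\theta_{\rm rr})^{k+m_{\rm rr}}$. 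Multiplying by the prefactor $1/(\Gamma(m_{\rm rr})\Gamma(m_{\rm sr})\theta_{\rm rr}^{m_{\rm rr}})$ from Lemma~\ref{OP_S-R_int} cancels $\Gamma(m_{\rm sr})=(m_{\rm sr}-1)!$ and, after relabeling, produces precisely the double sum in the statement of Lemma~\ref{OP_S-R_LB}.

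\textbf{Expected difficulty.} No individual step is deep. The one place to be careful is that three sign/direction flips appear in the bounding step: the monotonicity of $\Psi_r$ reverses the inequality coming from $\frac{P_{\rm r} x\mathcal{C}_x}{P_{\rm r} x+1}\le \mathcal{C}_x$, the monotonicity of $\Gamma(a,\cdot)$ reverses it again, and finally the minus sign in front of the integral in \eqref{lemma_1_Psr} reverses it once more. Tracking these three inversions correctly is what delivers a \emph{lower} (rather than upper) bound on $\Psr$. Everything that follows is a routine binomial/gamma manipulation.
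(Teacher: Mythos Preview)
Your proposal is correct and follows essentially the same approach as the paper's own proof: replace $\Psi_r\!\big(\tfrac{P_{\rm r}x\,\mathcal{C}_x}{P_{\rm r}x+1}\big)$ by $\Psi_r(\mathcal{C}_x)$ via monotonicity to obtain the lower bound, then expand $\Gamma(m_{\rm sr},z)$ as a finite series, apply the binomial theorem to $(P_{\rm r}x+1)^m$, and integrate term-by-term against the gamma density of $g_{\rm rr}$. Your explicit tracking of the three inequality reversals is a useful clarification that the paper leaves implicit.
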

\begin{proof}
The ${\rm S - R}$  outage probability expression in Lemma \ref{OP_S-R_int} can be lower-bounded by replacing ${\frac{{{P_{\rm{r}}}x}}{{{P_{\rm{r}}}x + 1}}}{{\cal C}_x}$ by $\mathcal{C}_x$. This can be easily seen as $\Psi_r\left( x\right)$ and $\Gamma\left( .,x\right)$ are monotonically decreasing in $x$. Then, we use the series expansion of the upper incomplete gamma function \cite[8.352-2]{gradstein1980tables} as\\[0.1cm]
\begin{align}
&\Gamma \left( {{m_{{\rm{sr}}}},\frac{{\left( {{P_{\rm{r}}}x + 1} \right){\Psi _r}\left( {{{\cal C}_x}} \right)}}{{{P_{\rm{s}}}{\theta _{{\rm{sr}}}}}}} \right) \nonumber \\
&=  \Gamma \left( {{m_{{\rm{sr}}}}} \right){e^{ - \frac{{\left( {{P_{\rm{r}}}x + 1} \right){\Psi _r}\left( {{{\cal C}_x}} \right)}}{{{P_{\rm{s}}}{\theta _{{\rm{sr}}}}}}}} \sum\limits_{m = 0}^{{m_{{\rm{sr}}}} - 1} {\frac{{{{\left( {\frac{{\left( {{P_{\rm{r}}}x + 1} \right){\Psi _r}\left( {{{\cal C}_x}} \right)}}{{{P_{\rm{s}}}{\theta _{{\rm{sr}}}}}}} \right)}^m}}}{{\Gamma \left( {m + 1} \right)}}}\cdot
\end{align} 
Hence, the integral can be rewritten as
\begin{align}\label{out_S-R_temp}
&{\cal P}_{{\rm{sr}}}^{{\rm{LB}}}\left( {{P_{\rm{r}}},{{\cal C}_x}} \right) = 1 - \frac{{{e^{ - \frac{{{\Psi _{\rm{r}}}\left( {{{\cal C}_x}} \right)}}{{{P_{\rm{s}}}{\theta _{{\rm{sr}}}}}}}}}}{{\Gamma \left( {{m_{{\rm{rr}}}}} \right)\theta _{{\rm{rr}}}^{{m_{{\rm{rr}}}}}}}\sum\limits_{m = 0}^{{m_{{\rm{rd}}}} - 1} {\frac{1}{{\Gamma \left( {m + 1} \right)}}} \nonumber \\
&\times {\left( {\frac{{{\Psi _{\rm{r}}}\left( {{{\cal C}_x}} \right)}}{{{P_{\rm{s}}}{\theta _{{\rm{sr}}}}}}} \right)^m}\hspace{-3pt}\int_0^\infty  \hspace{-3pt}{{x^{{m_{{\rm{rr}}}} - 1}}} {\left( {{P_{\rm{r}}}x + 1} \right)^m}{e^{ - \left( {\frac{{{P_{\rm{r}}}{\Psi _{\rm{r}}}\left( {{{\cal C}_x}} \right)}}{{{P_{\rm{s}}}{\theta _{{\rm{sr}}}}}} + \frac{1}{{{\theta _{{\rm{rr}}}}}}} \right)x}}dx\cdot
\end{align} 
By using the binomial theorem  ${\left( {{P_{\rm{r}}}x + 1} \right)^m} = \sum\limits_{k = 0}^m \binom mk {\left( {{P_{\rm{r}}}x} \right)^k}$, the result is obtained by solving the integral in \eqref{out_S-R_temp} by \cite[3.381-4]{gradstein1980tables}. 
\end{proof}
\vspace{10pt}
 \subsubsection{Upper Bound over Rayleigh Fading}
In the case of  Rayleigh fading, the ${\rm S - R}$ hop outage probability is simplified as
\begin{equation}\label{p_out_sr_expectation_rayleigh}
\PsrR\left( {{P_{\rm{r}}},{{\cal C}_x}} \right) = 1 - \mathbb{E}{_{{g_{\rm{rr}}}}}\left\{ {{e^{ - \frac{{\left( {{P_{\rm{r}}}{g_{\rm{rr}}} + 1} \right)}}{{{P_{\rm{s}}}{\pi _{\rm{sr}}}}}\Psi_r \left( {\frac{{{P_{\rm{r}}}{g_{\rm{rr}}}{{\cal C}_x}}}{{{P_{\rm{r}}}{g_{\rm{rr}}} + 1}}} \right)}}} \right\}\cdot
\end{equation}
Unfortunately, there is no closed-form expression for this expectation except at $\mathcal{C}_x=0$, which gives the known \ac{PGS} outage probability given  as
\begin{align}\label{p_out_sr_C_0_rayleigh}
\PsrR\left( {{P_{\rm{r}}},0} \right) = 1 - \frac{{{P_{\rm{s}}}{\pi _{{\rm{sr}}}}{e^{ - \frac{\eta }{{{P_{\rm{s}}}{\pi _{{\rm{sr}}}}}}}}}}{{{P_{\rm{s}}}{\pi _{{\rm{sr}}}}+{P_{\rm{r}}}{\pi _{{\rm{rr}}}}\eta  }},
\end{align}
where $\eta  = {2^r} - 1$. Otherwise, we resort to obtain an upper bound  as follows.
\begin{proposition}\label{prop_1}
The exponential term inside the expectation operator in \eqref{p_out_sr_expectation_rayleigh} is  convex  in $g_{\rm{rr}}$.
\end{proposition}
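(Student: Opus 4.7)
The plan is to show that the exponent, viewed as a function of $x=g_{\rm rr}$, is concave; then the outer $e^{-(\cdot)}$ will yield a convex function by the standard composition rule (the exponential is convex and nondecreasing, and a convex function of a convex function, here $-\phi$ applied inside $e^t$, is convex).

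First, I would absorb the factor $(P_{\rm r}x+1)$ inside the square root implicit in $\Psi_r$, writing
\begin{equation*}
\phi(x) \;=\; \frac{1}{P_{\rm s}\pi_{\rm sr}}\Bigl[\sqrt{q(x)} \;-\; (P_{\rm r}x+1)\Bigr],
\end{equation*}
where direct algebra gives $q(x)=A\,x^{2}+B\,x+C$ with
$A=P_{\rm r}^{2}\bigl[1+\gamma(1-\mathcal{C}_x^{2})\bigr]$, $B=2(1+\gamma)P_{\rm r}$, and $C=1+\gamma$. The second (linear) term in $\phi$ is affine and can be ignored for concavity, so the task reduces to showing that $\sqrt{q(x)}$ is concave on $x\geq 0$.

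For a quadratic $q(x)=Ax^{2}+Bx+C$ with $q(x)>0$ on the relevant range, a short computation gives
\begin{equation*}
\frac{d^{2}}{dx^{2}}\sqrt{q(x)} \;=\; \frac{4AC-B^{2}}{4\,q(x)^{3/2}}.
\end{equation*}
I would then evaluate the discriminant-like quantity $4AC-B^{2}$ for the coefficients above; the cross terms $4(1+\gamma)P_{\rm r}^{2}(1+\gamma)$ cancel against $B^{2}=4(1+\gamma)^{2}P_{\rm r}^{2}$, leaving $4AC-B^{2}=-4\gamma\,\mathcal{C}_x^{2}(1+\gamma)P_{\rm r}^{2}\le 0$. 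Hence $\sqrt{q(x)}$ has a nonpositive second derivative and is concave, so $\phi$ is concave, $-\phi$ is convex, and the composition $e^{-\phi(x)}$ is convex.

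The only mildly tricky point is verifying that $q(x)>0$ throughout $x\ge 0$ (so that the square root and the second-derivative formula are valid); since $A,B,C>0$ for $\mathcal{C}_x\in[0,1]$ and $\gamma>0$, this is immediate, and in particular $q(0)=1+\gamma>0$. The algebraic cancellation in $4AC-B^{2}$ is the only routine computation; everything else is structural. I therefore expect the write-up to be short, with the main content being the identification of the quadratic $q(x)$ and the sign of $4AC-B^{2}$.
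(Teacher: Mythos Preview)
Your proposal is correct and follows essentially the same route as the paper: both write the exponent as $\sqrt{Ax^{2}+Bx+C}$ minus an affine term and then check the sign of $4AC-B^{2}$ to conclude concavity of the exponent and hence convexity of $e^{-\phi}$. The only cosmetic difference is that you factor the constant $1/(P_{\rm s}\pi_{\rm sr})$ outside while the paper absorbs it into $A,B,C$, and you compute $4AC-B^{2}=-4\gamma\mathcal{C}_x^{2}(1+\gamma)P_{\rm r}^{2}$ explicitly whereas the paper only notes it is nonpositive because $1+\gamma(1-\mathcal{C}_x^{2})\le 1+\gamma$.
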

\begin{proof}
The proof is given in Appendix \ref{prop_1_proof}.
\end{proof}
Therefore,  an upper bound on the ${\rm S -  R}$ link outage probability is given as follows.
\begin{lemma}\label{OP_S-R_UB} 
The ${\rm S -  R}$ link outage probability, over Rayleigh fading, is upper-bounded by
\vspace{5pt}
\begin{align}
\PsrR \left( {{P_{\rm{r}}},{{\cal C}_x}} \right) \leq 1 - {e^{ - \frac{{ {{P_{\rm{r}}}{\pi _{\rm{rr}}} + 1} }}{{{P_{\rm{s}}}{\pi _{\rm{sr}}}}}\Psi_r \left( \alpha \mathcal{C}_x \right)}}\buildrel \Delta \over =\PsrR^{\rm{UB}} \left( {{P_{\rm{r}}},{{\cal C}_x}} \right),
\end{align}
where  $\alpha  = \frac{{{P_{\rm{r}}}{\pi _{{\rm{rr}}}}}}{{{P_{^{\rm{r}}}}{\pi _{{\rm{rr}}}} + 1}}$.
\end{lemma}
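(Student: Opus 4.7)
The plan is to invoke Jensen's inequality directly, exploiting the convexity result already established in Proposition \ref{prop_1}. Since \eqref{p_out_sr_expectation_rayleigh} writes
\[
\PsrR\left( {{P_{\rm{r}}},{{\cal C}_x}} \right) = 1 - \mathbb{E}_{g_{\rm{rr}}}\!\left\{ \varphi(g_{\rm{rr}}) \right\},
\]
with
\[
\varphi(g_{\rm{rr}}) \;=\; \exp\!\left( -\frac{P_{\rm{r}} g_{\rm{rr}} + 1}{P_{\rm{s}} \pi_{\rm{sr}}}\,\Psi_r\!\left(\frac{P_{\rm{r}} g_{\rm{rr}} {\cal C}_x}{P_{\rm{r}} g_{\rm{rr}} + 1}\right) \right),
\]
Proposition \ref{prop_1} guarantees convexity of $\varphi$ in $g_{\rm{rr}}$, so Jensen's inequality yields $\mathbb{E}\{\varphi(g_{\rm{rr}})\} \geq \varphi(\mathbb{E}\{g_{\rm{rr}}\})$, which flips to an upper bound on $\PsrR$ after the leading $1-(\cdot)$.

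Under Rayleigh fading, $g_{\rm{rr}}$ is exponentially distributed with mean $\mathbb{E}\{g_{\rm{rr}}\} = \pi_{\rm{rr}}$, so the bound becomes
\[
\PsrR \leq 1 - \exp\!\left( -\frac{P_{\rm{r}} \pi_{\rm{rr}} + 1}{P_{\rm{s}} \pi_{\rm{sr}}}\,\Psi_r\!\left(\frac{P_{\rm{r}} \pi_{\rm{rr}} {\cal C}_x}{P_{\rm{r}} \pi_{\rm{rr}} + 1}\right) \right).
\]
The inner argument of $\Psi_r$ is exactly $\alpha {\cal C}_x$ by the definition of $\alpha$, which recovers the claimed expression.

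There is essentially no obstacle here, as the heavy lifting was done in Proposition \ref{prop_1}; the remaining work is a one-line substitution. The only small care point is recognizing that $g_{\rm{rr}} \geq 0$ almost surely, so Jensen applies on the correct domain where convexity was established, and confirming that $\mathbb{E}\{g_{\rm{rr}}\} = \pi_{\rm{rr}}$ follows from the assumption $\mathbb{E}\{|h_{\rm{rr}}|^2\} = \pi_{\rm{rr}}$ stated in Section \ref{subsec:chan_model}.
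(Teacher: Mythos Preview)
Your proposal is correct and follows essentially the same approach as the paper: invoke the convexity established in Proposition \ref{prop_1}, apply Jensen's inequality to \eqref{p_out_sr_expectation_rayleigh}, and substitute $\mathbb{E}\{g_{\rm{rr}}\}=\pi_{\rm{rr}}$ under Rayleigh fading to obtain the stated bound with $\alpha=\frac{P_{\rm r}\pi_{\rm rr}}{P_{\rm r}\pi_{\rm rr}+1}$.
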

\begin{proof}
First, we follow Proposition \ref{prop_1}, then by applying the Jensen's inequality to the expectation in \eqref{p_out_sr_expectation_rayleigh}, we obtain the given outage probability upper bound of the ${\rm S - R}$ link.
\end{proof}
\subsection{Outage Probability of ${\rm R - D}$ Link}
Similarly, let $r$ (bits/sec/Hz) denote the target rate of the ${\rm R - D}$ link, then its outage probability is defined as
 \begin{equation}\label{p_out_rd_prob_m}
\Prd\left( { {P_{\rm{r}}},{{\cal C}_x}} \right) = \mathbb{P}\left\{ {{{R}_{\rm{rd}}}\left( { {P_{\rm{r}}},{{\cal C}_x}} \right) < r} \right\}\cdot
 \end{equation}
 Then, the outage probability of the ${\rm R - D}$ link can be obtained from the following result.
 \begin{theorem}\label{OP_R-D_exact}
 The ${\rm R - D}$ link outage probability with a target rate of $r$ is expressed as
 \begin{align}
&{{\cal P}_{\rm{rd}}}\left( {{P_{\rm{r}}},{{\cal C}_x}} \right) = 1 - \frac{{{e^{ -\frac{\Psi_r\left( { {{\cal C}_x}} \right)}{P_{\rm{r}}\theta_{\rm{rd}}\left(1-\mathcal{C}_x^2\right)}}}}}{{\Gamma \left( {{m_{{\rm{sd}}}}} \right)\theta _{{\rm{sd}}}^{{m_{{\rm{sd}}}}}}}\nonumber \\
&\times\sum\limits_{m = 0}^{{m_{{\rm{rd}}}} - 1} {\sum\limits_{k = 0}^m \binom mk \frac{{P_{\rm{s}}^k\Gamma \left( {k + {m_{{\rm{sd}}}}} \right)\left(\frac{\Psi_r\left( { {{\cal C}_x}} \right)}{P_{\rm{r}}\theta_{\rm{rd}}\left(1-\mathcal{C}_x^2\right)}\right)^m}}{{\Gamma \left( {m + 1} \right){{\left( {\frac{{P_{\rm{s}}}\Psi_r\left( { {{\cal C}_x}} \right)}{P_{\rm{r}}\theta_{\rm{rd}}\left(1-\mathcal{C}_x^2\right)} + \frac{1}{{{\theta _{{\rm{sd}}}}}}} \right)}^{k + {m_{{\rm{sd}}}}}}}}}\cdot
\end{align}
 
 \end{theorem}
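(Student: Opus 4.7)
The plan is to follow the same template used for Lemma \ref{OP_S-R_int}, but now swap the roles of the two hops. First I would substitute the closed form of $R_{\rm rd}$ from \eqref{R_rd_zeng_simplified} into \eqref{p_out_rd_prob_m} and move everything to one side. Writing $\gamma = 2^{2r}-1$ and expanding, the outage event becomes
\begin{equation*}
(1-\mathcal{C}_x^2)(P_{\rm r} g_{\rm rd})^2 + 2 P_{\rm r} g_{\rm rd}(P_{\rm s} g_{\rm sd}+1) - \gamma (P_{\rm s} g_{\rm sd}+1)^2 < 0,
\end{equation*}
which is a quadratic in $g_{\rm rd}$. Solving, the only positive root is
\begin{equation*}
g_{\rm rd}^\circ \;=\; \frac{(P_{\rm s} g_{\rm sd}+1)\,\Psi_r(\mathcal{C}_x)}{P_{\rm r}(1-\mathcal{C}_x^2)},
\end{equation*}
recovering the same $\Psi_r$ function that appeared in the S--R analysis. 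Hence, conditional on $g_{\rm sd}$, the outage probability is the CDF of $g_{\rm rd}$ evaluated at $g_{\rm rd}^\circ$.

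Next I would use the gamma CDF of $g_{\rm rd}$ and, since $m_{\rm rd}$ is an integer, replace the upper incomplete gamma function by its finite series expansion \cite[8.352-2]{gradstein1980tables}. This gives
\begin{equation*}
\mathcal{P}_{\rm rd}\!\left( {{P_{\rm{r}}},{{\cal C}_x}\,\big|\,g_{\rm sd}} \right) = 1 - e^{-\beta(P_{\rm s} g_{\rm sd}+1)} \sum_{m=0}^{m_{\rm rd}-1} \frac{\left(\beta(P_{\rm s} g_{\rm sd}+1)\right)^m}{m!},
\end{equation*}
where $\beta \triangleq \Psi_r(\mathcal{C}_x)/(P_{\rm r}\theta_{\rm rd}(1-\mathcal{C}_x^2))$. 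Averaging over $g_{\rm sd}$ then requires evaluating
\begin{equation*}
\mathbb{E}_{g_{\rm sd}}\!\left\{(P_{\rm s} g_{\rm sd}+1)^m \, e^{-\beta P_{\rm s} g_{\rm sd}}\right\}.
\end{equation*}

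The main (routine) obstacle is cleanly dispatching this last expectation. I would expand $(P_{\rm s} g_{\rm sd}+1)^m$ via the binomial theorem, pulling out $\sum_{k=0}^{m}\binom{m}{k}P_{\rm s}^k g_{\rm sd}^k$, and then integrate against the Nakagami-$m$ gamma PDF of $g_{\rm sd}$. The resulting integral is of the form $\int_0^\infty x^{k+m_{\rm sd}-1} e^{-x(\beta P_{\rm s}+1/\theta_{\rm sd})}dx$, which closes via \cite[3.381-4]{gradstein1980tables} into $\Gamma(k+m_{\rm sd})/(\beta P_{\rm s}+1/\theta_{\rm sd})^{k+m_{\rm sd}}$. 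Collecting the prefactors $e^{-\beta}$, $1/(\Gamma(m_{\rm sd})\theta_{\rm sd}^{m_{\rm sd}})$ together with the double sum in $m$ and $k$ yields exactly the stated expression. Structurally this mirrors the proof of Lemma \ref{OP_S-R_LB}, so no genuinely new machinery is needed; the only care required is bookkeeping the factor $(1-\mathcal{C}_x^2)$ that differentiates the R--D hop from the S--R hop, since here the relay signal's impropriety simultaneously shows up in the quadratic coefficient and in the effective rate of the exponential.
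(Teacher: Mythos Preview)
Your proposal is correct and follows exactly the approach the paper indicates: mirror Lemma~\ref{OP_S-R_int} to reduce the outage event to a quadratic in $g_{\rm rd}$ and obtain the conditional CDF, then mirror the computation in Lemma~\ref{OP_S-R_LB} (series expansion of the upper incomplete gamma, binomial expansion, and the gamma integral \cite[3.381-4]{gradstein1980tables}) to average over $g_{\rm sd}$. The only substantive difference from the ${\rm S}$--${\rm R}$ analysis, which you correctly identify, is that here the argument of $\Psi_r$ is exactly $\mathcal{C}_x$ rather than $\frac{P_{\rm r} g_{\rm rr}\mathcal{C}_x}{P_{\rm r} g_{\rm rr}+1}$, so no bounding step is needed and the resulting expression is exact.
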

 \begin{proof}
 The proof  follows similar steps as in Lemma \ref{OP_S-R_int} and the ${\rm S -  R}$ outage probability lower bound over Rayleigh fading in Lemma \ref{OP_S-R_LB}.
 \end{proof}

For Rayleigh fading,  the outage probability of the ${\rm R - D}$ link can be obtained from the following corollary.
 \begin{corollary}\label{OP_R-D_rayleigh}
The outage probability of the ${\rm R - D}$ link, over Rayleigh fading, is expressed  as
\begin{equation}
\PrdR\left( {{P_{\rm{r}}},{{\cal C}_x}} \right) = 1 - \frac{{{e^{ - \frac{{\Psi_r \left( {{{\cal C}_x}} \right)}}{{{P_{\rm{r}}}{\pi _{\rm{rd}}}\left( {1 - {\cal C}_x^2} \right)}}}}}}{{{P_{\rm{s}}}{\pi _{\rm{sd}}}\frac{{\Psi_r \left( {{{\cal C}_x}} \right)}}{{{P_{\rm{r}}}{\pi _{\rm{rd}}}\left( {1 - {\cal C}_x^2} \right)}} + 1}}\cdot \label{eq:RD_exact_rayleigh}
\end{equation}
 \end{corollary}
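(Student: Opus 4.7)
The plan is to derive this corollary directly as a Rayleigh specialization rather than by mechanically evaluating Theorem \ref{OP_R-D_exact} at $m_{\rm sd}=m_{\rm rd}=1$, since the direct route gives more insight and mirrors the arguments already used in Lemma \ref{OP_S-R_int} and Lemma \ref{OP_S-R_LB}. First, I would start from the outage event $R_{\rm rd}(P_{\rm r},\mathcal{C}_x)<r$ using the simplified rate expression in \eqref{R_rd_zeng_simplified}. Cross-multiplying by the (positive) denominator and writing $\gamma=2^{2r}-1$, the outage condition reduces to a quadratic inequality in $g_{\rm rd}$ of the form $(1-\mathcal{C}_x^2)(P_{\rm r}g_{\rm rd})^2+2(P_{\rm s}g_{\rm sd}+1)(P_{\rm r}g_{\rm rd})-\gamma(P_{\rm s}g_{\rm sd}+1)^2<0$, whose only nonnegative root solves to $g_{\rm rd}<\frac{(P_{\rm s}g_{\rm sd}+1)\,\Psi_r(\mathcal{C}_x)}{P_{\rm r}(1-\mathcal{C}_x^2)}$. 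This is structurally identical to the root manipulation performed in the proof of Lemma \ref{OP_S-R_int}, except that the factor $(1-\mathcal{C}_x^2)$ appears in the denominator because here $\mathcal{C}_x$ multiplies the \emph{desired} signal rather than the interferer.

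Next, I would condition on $g_{\rm sd}$ and integrate the Rayleigh CDF $F_{g_{\rm rd}}(x)=1-e^{-x/\pi_{\rm rd}}$ to obtain the conditional outage probability in the form $1-e^{-(P_{\rm s}g_{\rm sd}+1)\Psi_r(\mathcal{C}_x)/(P_{\rm r}\pi_{\rm rd}(1-\mathcal{C}_x^2))}$. Then I would take expectation over $g_{\rm sd}$, which is exponential with mean $\pi_{\rm sd}$. Factoring out the term not depending on $g_{\rm sd}$ leaves the moment generating function of an exponential random variable evaluated at $s=P_{\rm s}\Psi_r(\mathcal{C}_x)/(P_{\rm r}\pi_{\rm rd}(1-\mathcal{C}_x^2))$, which is simply $1/(1+s\pi_{\rm sd})$. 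Collecting terms immediately produces \eqref{eq:RD_exact_rayleigh}.

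As a sanity check, I would verify consistency with Theorem \ref{OP_R-D_exact}: setting $m_{\rm rd}=1$ collapses the outer sum to $m=0$, which then collapses the inner sum to $k=0$, and setting $m_{\rm sd}=1$ together with $\theta_{ij}=\pi_{ij}$ leaves exactly the single term that matches \eqref{eq:RD_exact_rayleigh}. I do not anticipate a genuine obstacle here; the only place requiring care is ensuring that the correct (positive) root of the quadratic is selected and that the $(1-\mathcal{C}_x^2)$ factor lands in the denominator of the exponent, which is what distinguishes the R--D analysis from the S--R analysis where IGS improves rather than degrades the desired link.
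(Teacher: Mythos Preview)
Your proposal is correct and follows essentially the same route as the paper: the corollary is obtained as the Rayleigh specialization of Theorem~\ref{OP_R-D_exact}, whose proof (like yours) reduces the outage event to the quadratic in $g_{\rm rd}$, identifies the positive root $g_{\rm rd}^\circ=\frac{(P_{\rm s}g_{\rm sd}+1)\Psi_r(\mathcal{C}_x)}{P_{\rm r}(1-\mathcal{C}_x^2)}$, and then averages over the interferer gain. Your direct Rayleigh computation via the exponential CDF and the MGF of $g_{\rm sd}$ is exactly what the general Nakagami-$m$ machinery collapses to at $m_{\rm rd}=m_{\rm sd}=1$, and your sanity check against Theorem~\ref{OP_R-D_exact} confirms this.
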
 
  From Corollary \ref{OP_R-D_rayleigh}, it can be noticed that, for the \ac{PGS} case, i.e., $\mathcal{C}_x=0$, Eq. \eqref{eq:RD_exact_rayleigh} yields the known expression for \ac{PGS} \cite{kwon2010optimal} as 
\begin{equation}\label{p_out_rd_C_0}
\PrdR\left( {{P_{\rm{r}}},0} \right) = 1 - \frac{{{P_{\rm{r}}}{\pi _{{\rm{rd}}}}{e^{ - \frac{\eta }{{{P_{\rm{r}}}{\pi _{{\rm{rd}}}}}}}}}}{{{P_{\rm{r}}}{\pi _{{\rm{rd}}}} + {P_{\rm{s}}}{\pi _{{\rm{sd}}}}\eta }}\cdot
\end{equation}
Also, for the maximally improper case, i.e., $\mathcal{C}_x=1$, it yields
\begin{equation}
\PrdR\left( {{P_{\rm{r}}},1} \right) = \mathop {\lim }\limits_{{{\cal C}_x} \to 1} \PrdR\left( {{P_{\rm{r}}},{{\cal C}_x}} \right) = 1 - \frac{{{e^{ - \frac{\gamma }{{2{P_{\rm{r}}}{\pi _{\rm{rd}}}}}}}}}{{\frac{{\gamma {P_{\rm{s}}}{\pi _{\rm{sd}}}}}{{2{P_{\rm{r}}}{\pi _{\rm{rd}}}}} + 1}}\cdot
\end{equation} 
\subsection{End-to-End Outage Probability}
The exact end-to-end outage performance can be obtained from the following theorem.
\begin{theorem}\label{OP_E-E_int}
The exact end-to-end outage probability with a target rate $r$  can be numerically calculated from
\begin{align}\label{P_E_E_not_stable}
&\Pout = 1- \frac{{{{e^{ -\frac{\Psi_r\left( { {{\cal C}_x}} \right)}{P_{\rm{r}}\theta_{\rm{rd}}\left(1-\mathcal{C}_x^2\right)}}}}}}{{{\Gamma \left( {{m_{{\rm{sd}}}}} \right)}}{\Gamma \left( {{m_{\rm{rr}}}} \right)\Gamma \left( {{m_{{\rm{sr}}}}} \right)\theta _{{\rm{sd}}}^{{m_{{\rm{sd}}}}}\theta _{\rm{rr}}^{{m_{\rm{rr}}}}}} \nonumber \\
&\times\sum\limits_{m = 0}^{{m_{{\rm{rd}}}} - 1} {\sum\limits_{k = 0}^m \binom mk \frac{{P_{\rm{s}}^k\Gamma \left( {k + {m_{{\rm{sd}}}}} \right)\left(\frac{\Psi_r\left( { {{\cal C}_x}} \right)}{P_{\rm{r}}\theta_{\rm{rd}}\left(1-\mathcal{C}_x^2\right)}\right)^m}}{{\Gamma \left( {m + 1} \right){{\left( {\frac{{P_{\rm{s}}}\Psi_r\left( { {{\cal C}_x}} \right)}{P_{\rm{r}}\theta_{\rm{rd}}\left(1-\mathcal{C}_x^2\right)} + \frac{1}{{{\theta _{{\rm{sd}}}}}}} \right)}^{k + {m_{{\rm{sd}}}}}}}}} \nonumber \\
&\times \int\limits_0^\infty  {{x^{{m_{\rm{rr}}} - 1}}\Gamma \left( {{m_{{\rm{sr}}}},\frac{{\left( {{P_{\rm{r}}}x + 1} \right)}}{{{P_{\rm{s}}}{\theta _{{\rm{sr}}}}}}\Psi_r \left( {\frac{{{P_{\rm{r}}}x{{\cal C}_x}}}{{{P_{\rm{r}}}x + 1}}} \right)} \right){e^{ - \frac{x}{{{\theta _{\rm{rr}}}}}}}} dx \cdot
\end{align}
\end{theorem}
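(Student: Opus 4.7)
The plan is to derive Theorem \ref{OP_E-E_int} by plugging the per-hop expressions directly into the decoupling identity~\eqref{p_out_overall}. Specifically, I would start from
\begin{equation*}
\Pout = 1 - \Psrbar \, \Prdbar = 1 - \bigl(1-\Psr\bigr)\bigl(1-\Prd\bigr),
\end{equation*}
which, in turn, relies on the fact that the outage events on the two hops are statistically independent. This independence is immediate from the channel model in Section~\ref{subsec:chan_model}: ${{R}_{\rm{sr}}}$ is a function only of $(g_{\rm{sr}},g_{\rm{rr}})$, while ${{R}_{\rm{rd}}}$ is a function only of $(g_{\rm{rd}},g_{\rm{sd}})$, and these four gains are mutually independent by assumption. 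The only subtle point worth spelling out here is that the source signal received on the ${\rm S{-}D}$ link, although originating at the same node, is treated as noise at the destination in the achievable rate \eqref{R_rd_zeng_simplified}, so it does not statistically couple the two hops.

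Next, I would substitute the complement of Lemma~\ref{OP_S-R_int} for $\Psrbar$ and the complement of Theorem~\ref{OP_R-D_exact} for $\Prdbar$. Writing
\begin{align*}
\Psrbar &= \frac{1}{{\Gamma \left( {{m_{\rm{rr}}}} \right)\Gamma \left( {{m_{{\rm{sr}}}}} \right)\theta _{\rm{rr}}^{{m_{\rm{rr}}}}}} \\
&\quad\times\!\int\limits_0^\infty \!{{x^{{m_{\rm{rr}}}-1}}\Gamma\!\left( {{m_{{\rm{sr}}}},\tfrac{{\left( {{P_{\rm{r}}}x + 1} \right)}}{{{P_{\rm{s}}}{\theta _{{\rm{sr}}}}}}\Psi_r\!\left( {\tfrac{{{P_{\rm{r}}}x{{\cal C}_x}}}{{{P_{\rm{r}}}x + 1}}} \right)} \!\right)\!{e^{ - \frac{x}{{{\theta _{\rm{rr}}}}}}}} dx ,
\end{align*}
and the corresponding closed-form $\Prdbar$ from Theorem~\ref{OP_R-D_exact} as the finite double sum multiplied by the exponential prefactor, the product $\Psrbar\,\Prdbar$ factors cleanly because the integral variable $x$ (playing the role of $g_{\rm{rr}}$) and the finite summation indices $(m,k)$ (arising from the series expansion of the incomplete gamma function in the ${\rm R{-}D}$ derivation) come from disjoint parts of the probability space. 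Thus the product distributes: the summation factor from $\Prdbar$ is a constant with respect to the integration variable and can be pulled inside the integral, producing exactly the product form displayed in \eqref{P_E_E_not_stable}.

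I do not expect any real obstacle. Every calculation-intensive step (the quadratic inequality in $g_{\rm{sr}}$, the series expansion of $\Gamma(m_{\rm{sr}},\cdot)$, and the use of \cite[3.381-4]{gradstein1980tables}) has already been carried out in Lemma~\ref{OP_S-R_int} and Theorem~\ref{OP_R-D_exact}, so the proof of Theorem~\ref{OP_E-E_int} reduces to a single-line substitution, followed by rearrangement of the constants $\Gamma(\cdot)$, $\theta_{ij}^{m_{ij}}$ and the exponential prefactor into the form shown. The only care needed is bookkeeping: keeping the summation, the exponential prefactor depending on $\mathcal{C}_x$ (from the ${\rm R{-}D}$ hop), and the $g_{\rm{rr}}$-integral (from the ${\rm S{-}R}$ hop) in the correct positions so that no term is double-counted. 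Given that the authors' own proof is simply stated to ``follow similar steps,'' this short derivation is essentially what is expected.
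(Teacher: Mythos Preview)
Your proposal is correct and follows exactly the paper's own approach: the proof in the paper simply substitutes the expressions from Lemma~\ref{OP_S-R_int} and Theorem~\ref{OP_R-D_exact} into \eqref{p_out_overall}. Your added remarks on independence and bookkeeping are accurate but go slightly beyond what the authors state explicitly.
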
 
\begin{proof}
Based on the derived expressions of the outage probability for ${\rm S - R}$ and  ${\rm R - D}$ links from Lemma \ref{OP_S-R_int} and Theorem \ref{OP_R-D_exact}, respectively, and by direct substitution in \eqref{p_out_overall}, we obtain the result.
\end{proof}
Unfortunately, there is no closed form solution for the end-to-end outage probability in Theorem \ref{OP_E-E_int}. Therefore, we resort to obtain an expression for the  lower bound of the end-to-end outage probability in the Nakagami-$m$ fading, in addition to an upper bound expression for the special case of Rayleigh fading.  These bounds have a significantly reduced computational complexity and an enhanced numerical stability than the exact expression in \eqref{P_E_E_not_stable}.\\
\subsubsection{Lower Bound}
The following theorem provides a lower bound for the   end-to-end outage probability.
\begin{theorem}\label{out_E_E_LB}
The end-to-end outage probability  can be lower-bounded as
\begin{align}\label{out_E_E_LB_nt}
&{\cal P}_{{\rm{E - E}}}\left( {{P_{\rm{r}}},{{\cal C}_x}} \right) \geq 1 - \frac{{{e^{ - {\Psi _r}\left( {{{\cal C}_x}} \right)\left( {\frac{1}{{{P_{\rm{s}}}{\theta _{{\rm{sr}}}}}} + \frac{1}{{{P_{\rm{r}}}{\theta _{{\rm{rd}}}}\left( {1 - {\cal C}_x^2} \right)}}} \right)}}}}{{\Gamma \left( {{m_{{\rm{sd}}}}} \right)\Gamma \left( {{m_{{\rm{rr}}}}} \right)\theta _{{\rm{sd}}}^{{m_{{\rm{sd}}}}}\theta _{{\rm{rr}}}^{{m_{{\rm{rr}}}}}}} \nonumber\\[3pt]
&\times\sum\limits_{m = 0}^{{m_{{\rm{sr}}}} - 1} \sum\limits_{m' = 0}^{{m_{{\rm{rd}}}} - 1} \sum\limits_{k = 0}^m \sum\limits_{k' = 0}^{m'}  \!\binom mk \binom {m'}{k'} \nonumber \\[3pt]
& \times\frac{{\frac{{P_{\rm{r}}^{k - m'}\Gamma \left( {k + {m_{{\rm{rr}}}}} \right)\Gamma \left( {k' + {m_{{\rm{sd}}}}} \right){\Psi _r^{m + m'}}\left( {{{\cal C}_x}} \right) }}{{P_{\rm{s}}^{m - k'}\Gamma \left( {m + 1} \right)\Gamma \left( {m' + 1} \right)\theta _{{\rm{sr}}}^m{{ {{\theta _{{\rm{rd}}}^{m'}}\left( {1 - {\cal C}_x^2} \right)} }^{m'}}}}}}{{{{\left( {\frac{{{P_{\rm{r}}}{\Psi _r}\left( {{{\cal C}_x}} \right)}}{{{P_{\rm{s}}}{\theta _{{\rm{sr}}}}}} + \frac{1}{{{\theta _{{\rm{rr}}}}}}} \right)}^{k + {m_{{\rm{rr}}}}}}{{\left( {\frac{{{P_{\rm{s}}}{\Psi _r}\left( {{{\cal C}_x}} \right)}}{{{P_{\rm{r}}}{\theta _{{\rm{rd}}}}\left( {1 - {\cal C}_x^2} \right)}} + \frac{1}{{{\theta _{{\rm{sd}}}}}}} \right)}^{k' + {m_{{\rm{sd}}}}}}}}  \nonumber \\[3pt]
&  \!\buildrel \Delta \over = \! {\cal P}_{{\rm{E - E}}}^{{\rm{LB}}}\left( {{P_{\rm{r}}},{{\cal C}_x}} \right)\cdot
\end{align}
\end{theorem}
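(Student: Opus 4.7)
The plan is purely to assemble the pieces already established earlier in the section. Starting from the product form
$$\mathcal{P}_{\rm{E-E}}(P_{\rm r},\mathcal{C}_x)=1-\overline{\mathcal{P}}_{\rm{sr}}(P_{\rm r},\mathcal{C}_x)\,\overline{\mathcal{P}}_{\rm{rd}}(P_{\rm r},\mathcal{C}_x),$$
which is valid because the gain pair $(g_{\rm sr},g_{\rm rr})$ determining $\mathcal{P}_{\rm{sr}}$ is independent of the pair $(g_{\rm rd},g_{\rm sd})$ determining $\mathcal{P}_{\rm{rd}}$, the strategy is to upper bound each success probability separately. For the first hop I would invoke Lemma \ref{OP_S-R_LB}, which yields $\mathcal{P}_{\rm{sr}}\ge \mathcal{P}_{\rm{sr}}^{\rm LB}$ and hence $\overline{\mathcal{P}}_{\rm{sr}}\le 1-\mathcal{P}_{\rm{sr}}^{\rm LB}$. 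For the second hop I would use the exact form from Theorem \ref{OP_R-D_exact} directly, since no bound is needed there. Combining these (and observing that both factors are nonnegative so the inequality is preserved under multiplication) gives
$$\mathcal{P}_{\rm{E-E}}\ge 1-\bigl(1-\mathcal{P}_{\rm{sr}}^{\rm LB}\bigr)\bigl(1-\mathcal{P}_{\rm{rd}}\bigr).$$

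The remainder of the argument is a bookkeeping step: explicitly multiplying the two closed-form expressions. Writing the complement $1-\mathcal{P}_{\rm{sr}}^{\rm LB}$ as an $(m,k)$ double sum with outer exponential $e^{-\Psi_r(\mathcal{C}_x)/(P_{\rm s}\theta_{\rm sr})}$, and writing $1-\mathcal{P}_{\rm{rd}}$ as an $(m',k')$ double sum with outer exponential $e^{-\Psi_r(\mathcal{C}_x)/(P_{\rm r}\theta_{\rm rd}(1-\mathcal{C}_x^2))}$, the product collapses by distributivity into the single quadruple sum in \eqref{out_E_E_LB_nt}. I would track the factors carefully: the two exponentials combine into the single exponential with exponent $-\Psi_r(\mathcal{C}_x)\bigl(1/(P_{\rm s}\theta_{\rm sr})+1/(P_{\rm r}\theta_{\rm rd}(1-\mathcal{C}_x^2))\bigr)$; the normalizing prefactors fuse into $\Gamma(m_{\rm sd})\Gamma(m_{\rm rr})\theta_{\rm sd}^{m_{\rm sd}}\theta_{\rm rr}^{m_{\rm rr}}$; the powers of $P_{\rm r}$ and $P_{\rm s}$ in the numerators of the two sums combine into $P_{\rm r}^{k-m'}/P_{\rm s}^{m-k'}$; the powers of $\Psi_r(\mathcal{C}_x)$ combine into $\Psi_r^{m+m'}(\mathcal{C}_x)$; and the two denominators appear intact as the $(k+m_{\rm rr})$- and $(k'+m_{\rm sd})$-th powers of their respective brackets.

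There is no real conceptual obstacle here, since the inequality direction is immediate from Lemma \ref{OP_S-R_LB} and no new probabilistic argument is required. The only nontrivial step is verifying that the exponent bookkeeping and the product of the two finite series line up exactly with the quadruple-sum structure claimed in \eqref{out_E_E_LB_nt}; this is a routine expansion that I would do by matching the factors one at a time in the order listed above.
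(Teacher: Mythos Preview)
Your proposal is correct and follows exactly the same route as the paper: the authors simply cite \eqref{p_out_overall}, Lemma~\ref{OP_S-R_LB}, and Theorem~\ref{OP_R-D_exact} and state that the result follows directly. Your additional bookkeeping on how the two double sums multiply into the quadruple sum is a faithful elaboration of that one-line argument.
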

\begin{proof}
The result follows directly from  \eqref{p_out_overall}, Lemma \ref{OP_S-R_LB} and Theorem \ref{OP_R-D_exact}. 
\end{proof}
\begin{corollary}\label{out_E_E_Exact_RF}
The lower bound in Theorem \ref{out_E_E_LB} at $\mathcal{C}_x=0$ reduces to the exact expression of the end-to-end outage probability for the \ac{PGS} case.  Specifically, for the \ac{PGS} case over Rayleigh fading, we have the exact expression for the end-to-end outage probability as
\begin{equation}\label{e2e_proper}
{{\cal P}_{{\rm{E-E,RF}}}}\left( {{P_{\rm{r}}},0} \right) = 1 - \frac{{{P_{\rm{s}}}{P_{\rm{r}}}{\pi _{{\rm{sr}}}}{\pi _{{\rm{rd}}}}{e^{ - \eta \left( {\frac{1}{{{P_{\rm{s}}}{\pi _{{\rm{sr}}}}}} + \frac{1}{{{P_{\rm{r}}}{\pi _{{\rm{rd}}}}}}} \right)}}}}{{\left( {{P_{\rm{s}}}{\pi _{{\rm{sr}}}} + {P_{\rm{r}}}{\pi _{{\rm{rr}}}}\eta } \right)\left( {{P_{\rm{r}}}{\pi _{{\rm{rd}}}} + {P_{\rm{s}}}{\pi _{{\rm{sd}}}}\eta } \right)}}\cdot
\end{equation} 
\end{corollary}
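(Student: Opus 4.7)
The plan is to establish both claims by exploiting the fact that the single approximation step introduced in Lemma \ref{OP_S-R_LB} becomes exact when $\mathcal{C}_x = 0$, and then specializing the resulting Nakagami-$m$ expression to Rayleigh fading.

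First, I would revisit Lemma \ref{OP_S-R_LB} to pinpoint where the inequality was introduced. The only lower-bounding step replaced the factor $\tfrac{P_{\rm r} x}{P_{\rm r} x + 1}\mathcal{C}_x$ by $\mathcal{C}_x$ inside the argument of $\Psi_r(\cdot)$. When $\mathcal{C}_x = 0$, both quantities vanish identically for every $x \ge 0$, so the bound is attained with equality. Consequently, ${\cal P}_{{\rm sr}}^{{\rm LB}}(P_{\rm r},0) = {\cal P}_{\rm sr}(P_{\rm r},0)$, and since Theorem \ref{OP_R-D_exact} used for the ${\rm R\text{-}D}$ hop is already exact, substituting into \eqref{p_out_overall} shows that the lower bound in Theorem \ref{out_E_E_LB} coincides with the true end-to-end outage probability at $\mathcal{C}_x = 0$. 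This settles the first claim.

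Next, I would specialize \eqref{out_E_E_LB_nt} at $\mathcal{C}_x = 0$ to the Rayleigh case by setting every shape parameter $m_{ij}=1$ and $\theta_{ij}=\pi_{ij}$. Under these choices each of the four nested summations collapses to the single index $m=m'=k=k'=0$, so all binomial coefficients and gamma factors equal one, every power $P_{\rm r}^{k-m'}$, $P_{\rm s}^{m-k'}$, $\Psi_r^{m+m'}$, $\theta_{\rm sr}^{m}$, and $(\theta_{\rm rd}(1-\mathcal{C}_x^2))^{m'}$ becomes unity. Together with $\Psi_r(0) = \sqrt{1+\gamma}-1 = 2^r-1 = \eta$, the expression reduces to
\begin{equation*}
1 - \frac{e^{-\eta\left(\frac{1}{P_{\rm s}\pi_{\rm sr}}+\frac{1}{P_{\rm r}\pi_{\rm rd}}\right)}}{\pi_{\rm sd}\,\pi_{\rm rr}\left(\frac{P_{\rm r}\eta}{P_{\rm s}\pi_{\rm sr}}+\frac{1}{\pi_{\rm rr}}\right)\left(\frac{P_{\rm s}\eta}{P_{\rm r}\pi_{\rm rd}}+\frac{1}{\pi_{\rm sd}}\right)}\cdot
\end{equation*}

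Finally, I would clear the inner fractions by pulling $\tfrac{1}{P_{\rm s}\pi_{\rm sr}}$ and $\tfrac{1}{P_{\rm r}\pi_{\rm rd}}$ out of the two bracketed terms in the denominator, producing $\tfrac{P_{\rm s}\pi_{\rm sr}+P_{\rm r}\pi_{\rm rr}\eta}{P_{\rm s}\pi_{\rm sr}}$ and $\tfrac{P_{\rm r}\pi_{\rm rd}+P_{\rm s}\pi_{\rm sd}\eta}{P_{\rm r}\pi_{\rm rd}}$, respectively. Regrouping the surviving factors $P_{\rm s}\pi_{\rm sr}$ and $P_{\rm r}\pi_{\rm rd}$ in the numerator yields exactly \eqref{e2e_proper}. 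There is no genuine obstacle here: the only subtle point is recognizing that the specific lower-bounding maneuver of Lemma \ref{OP_S-R_LB} is vacuous at $\mathcal{C}_x=0$, after which everything is a routine collapse of the sums and algebraic rearrangement.
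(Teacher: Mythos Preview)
Your proposal is correct and follows essentially the same approach as the paper: you observe that the single lower-bounding step in Lemma~\ref{OP_S-R_LB} (replacing $\tfrac{P_{\rm r}x}{P_{\rm r}x+1}\mathcal{C}_x$ by $\mathcal{C}_x$) becomes vacuous at $\mathcal{C}_x=0$, so the bound is exact, and then you specialize to Rayleigh fading. The paper's own proof states only that the $\rm{S-R}$ lower bound reduces to the exact expression at $\mathcal{C}_x=0$ and that the result then follows directly; your version simply fills in the explicit collapse of the sums and the algebraic rearrangement that the paper leaves implicit.
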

\begin{proof}
For $\mathcal{C}_x=0$, the ${\rm S - R}$ outage probability lower bound in Lemma  \ref{OP_S-R_LB} reduces to the exact expression and then the result follows directly.
\end{proof}
\vspace{6pt}
\subsubsection{Upper Bound over Rayleigh Fading}
From a system design prospective, it is typically more beneficial to have an upper bound than a lower bound for the end-to-end outage probability. Therefore, we state next the upper bound of the end-to-end outage probability over Rayleigh fading in Theorem \ref{theorem_e2e_outage}.
\begin{theorem}\label{theorem_e2e_outage}
The end-to-end outage probability over Rayleigh fading can be upper-bounded by 
\begin{align}\label{p_out_UB}
\vspace{10pt}
{{\cal P}_{{\rm{E-E,RF}}}}  \left( {{P_{\rm{r}}},{{\cal C}_x}} \right) &\leq 1 - \frac{{{e^{ -\Big(\frac{{\Psi_r \left( {{{\cal C}_x}} \right)}}{{{P_{\rm{r}}}{\pi _{\rm{rd}}}\left( {1 - {\cal C}_x^2} \right)}}+ \frac{{{{P_{\rm{r}}}{\pi _{\rm{rr}}} + 1} }}{{{P_{\rm{s}}}{\pi _{\rm{sr}}}}}\Psi_r \left( \alpha \mathcal{C}_x \right)\Big)}}}}{{{P_{\rm{s}}}{\pi _{\rm{sd}}}\frac{{{\Psi_r}\left( {{{\cal C}_x}} \right)}}{{{P_{\rm{r}}}{\pi _{\rm{rd}}}\left( {1 - {\cal C}_x^2} \right)}} + 1}} \nonumber \\
&\buildrel \Delta \over ={{\cal P}^{\rm{UB}}_{{\rm{E-E,RF}}}}  \left( {{P_{\rm{r}}},{{\cal C}_x}} \right) \cdot
\end{align}
\end{theorem}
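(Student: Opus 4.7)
The plan is to combine the previously established bounds and exact expressions for the two individual hops via the product structure of the end-to-end success probability. Specifically, since the ${\rm S}$-${\rm R}$ and ${\rm R}$-${\rm D}$ fading gains are independent, I would start from the identity ${\cal P}_{\rm{E-E,RF}}=1-\overline{\mathcal P}_{\rm{sr,RF}}\,\overline{\mathcal P}_{\rm{rd,RF}}$ given in \eqref{p_out_overall} and observe that to upper-bound ${\cal P}_{\rm{E-E,RF}}$ it suffices to lower-bound the product of the two complementary probabilities.

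The key observation for the direction of the inequality is that replacing ${\mathcal P}_{\rm{sr,RF}}$ by its upper bound ${\mathcal P}_{\rm{sr,RF}}^{\rm UB}$ from Lemma \ref{OP_S-R_UB} automatically yields the lower bound $\overline{\mathcal P}_{\rm{sr,RF}}\ge 1-{\mathcal P}_{\rm{sr,RF}}^{\rm UB}=\exp\!\bigl(-\tfrac{P_{\rm r}\pi_{\rm rr}+1}{P_{\rm s}\pi_{\rm sr}}\Psi_r(\alpha\mathcal{C}_x)\bigr)$. For the second hop I would use the exact closed-form result of Corollary \ref{OP_R-D_rayleigh}, so that no further bounding is introduced on that factor. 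Multiplying the two pieces together, and then subtracting from one, I expect the exponents to combine additively and the rational prefactor from the ${\rm R}$-${\rm D}$ term to appear directly in the denominator, reproducing precisely the expression on the right-hand side of \eqref{p_out_UB}.

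In terms of execution, the steps are: (i) state the independence of the hops and write the product form; (ii) invoke Lemma \ref{OP_S-R_UB} and Corollary \ref{OP_R-D_rayleigh} to replace each factor; (iii) merge the two exponentials into a single exponential whose argument is $-\bigl(\tfrac{\Psi_r(\mathcal{C}_x)}{P_{\rm r}\pi_{\rm rd}(1-\mathcal{C}_x^2)}+\tfrac{P_{\rm r}\pi_{\rm rr}+1}{P_{\rm s}\pi_{\rm sr}}\Psi_r(\alpha\mathcal{C}_x)\bigr)$; and (iv) divide by the linear denominator inherited from the ${\rm R}$-${\rm D}$ factor. None of these steps requires new analysis, since the nontrivial inequality (a Jensen-type convexity argument via Proposition \ref{prop_1}) has already been absorbed into Lemma \ref{OP_S-R_UB}.

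Consequently, I do not anticipate a genuine obstacle in this proof: the entire argument is a direct substitution. The only subtle point worth emphasising is the orientation of the bound, namely that an \emph{upper} bound on ${\mathcal P}_{\rm{sr,RF}}$ translates into a \emph{lower} bound on the end-to-end success probability and hence an \emph{upper} bound on ${\cal P}_{\rm{E-E,RF}}$. I would state this explicitly to avoid any confusion, and then conclude that \eqref{p_out_UB} follows.
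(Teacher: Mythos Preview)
Your proposal is correct and mirrors the paper's own argument exactly: the paper simply states that the bound follows directly from \eqref{p_out_overall}, Lemma~\ref{OP_S-R_UB}, and Corollary~\ref{OP_R-D_rayleigh}, which is precisely the substitution you describe. Your added remark on the orientation of the inequality is a helpful clarification but introduces no new ingredient beyond the paper's proof.
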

\newpage
\begin{proof}
It follows directly from \eqref{p_out_overall}, Lemma \ref{OP_S-R_UB} and Corollary \ref{OP_R-D_rayleigh}.
\end{proof}
\vspace{10pt}
\textit{Asymptotic Analysis:} 
For maximally \ac{IGS}, we obtain the upper bound of the end-to-end outage probability from the following corollary.
\begin{corollary}\label{corollary1}
When the relay node uses maximally \ac{IGS}, the end-to-end outage probability, over Rayleigh fading, can be upper-bounded by
\begin{equation}
\mathop {\lim }\limits_{{{\cal C}_x} \to 1} {{\cal P}^{\rm{UB}}_{{\rm{E-E,RF}}}} = 1 - \frac{{2{P_{\rm{r}}}{\pi _{{\rm{rd}}}}{e^{ - \left( {\frac{\gamma }{{2{P_{\rm{r}}}{\pi _{{\rm{rd}}}}}} + \frac{{\left( {{P_{\rm{r}}}{\pi _{{\rm{rr}}}} + 1} \right)}}{{{P_{\rm{s}}}{\pi _{{\rm{sr}}}}}}{\Psi_r }\left(  \alpha \mathcal{C}_x\right)} \right)}}}}{{2{P_{\rm{r}}}{\pi _{{\rm{rd}}}} + \gamma {P_{\rm{s}}}{\pi _{{\rm{sd}}}}}}\cdot
\end{equation}
\end{corollary}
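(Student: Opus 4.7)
The plan is to start from the closed-form upper bound
\begin{equation*}
{\cal P}^{\rm UB}_{\rm E\text{-}E,RF}(P_{\rm r},\mathcal{C}_x)=1-\frac{e^{-\left(\frac{\Psi_r(\mathcal{C}_x)}{P_{\rm r}\pi_{\rm rd}(1-\mathcal{C}_x^2)}+\frac{P_{\rm r}\pi_{\rm rr}+1}{P_{\rm s}\pi_{\rm sr}}\Psi_r(\alpha\mathcal{C}_x)\right)}}{P_{\rm s}\pi_{\rm sd}\frac{\Psi_r(\mathcal{C}_x)}{P_{\rm r}\pi_{\rm rd}(1-\mathcal{C}_x^2)}+1}
\end{equation*}
given by Theorem \ref{theorem_e2e_outage} and pass to the limit $\mathcal{C}_x\to 1$. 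Every factor is continuous in $\mathcal{C}_x$ on $[0,1)$, so the only delicate quantity is the ratio $\Psi_r(\mathcal{C}_x)/(1-\mathcal{C}_x^2)$, which is of the indeterminate form $0/0$ at $\mathcal{C}_x=1$: both $\Psi_r(1)=\sqrt{1}-1=0$ and $1-\mathcal{C}_x^2\to 0$. Handling this single limit is essentially the whole work of the corollary; once it is resolved, everything else is substitution and algebraic simplification.

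The most natural way I would resolve the $0/0$ form is by rationalization: multiplying numerator and denominator by $\sqrt{1+\gamma(1-\mathcal{C}_x^2)}+1$ gives
\begin{equation*}
\frac{\Psi_r(\mathcal{C}_x)}{1-\mathcal{C}_x^2}=\frac{\sqrt{1+\gamma(1-\mathcal{C}_x^2)}-1}{1-\mathcal{C}_x^2}=\frac{\gamma}{\sqrt{1+\gamma(1-\mathcal{C}_x^2)}+1}\xrightarrow[\mathcal{C}_x\to 1]{}\frac{\gamma}{2},
\end{equation*}
which a single application of L'H\^opital's rule would also produce. With this in hand, the denominator of ${\cal P}^{\rm UB}_{\rm E\text{-}E,RF}$ tends to $\gamma P_{\rm s}\pi_{\rm sd}/(2P_{\rm r}\pi_{\rm rd})+1=(2P_{\rm r}\pi_{\rm rd}+\gamma P_{\rm s}\pi_{\rm sd})/(2P_{\rm r}\pi_{\rm rd})$; clearing this common factor is what puts the $2P_{\rm r}\pi_{\rm rd}$ into the numerator of the stated expression. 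For the exponent, the first term tends to $\gamma/(2P_{\rm r}\pi_{\rm rd})$ by the same ratio limit, while the second term is continuous in $\mathcal{C}_x$ at $\mathcal{C}_x=1$, so $\Psi_r(\alpha\mathcal{C}_x)$ is left as written (or equivalently evaluated at $\alpha$). Collecting terms reproduces exactly the displayed right-hand side.

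The only real obstacle is the indeterminate form above; all remaining manipulations are routine. Because each factor in the expression has a finite, positive limit, the limit of the product equals the product of the limits and no additional uniformity argument is needed, so the calculation is rigorous as written.
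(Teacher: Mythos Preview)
Your proposal is correct and follows essentially the same approach as the paper, which treats the corollary as an immediate consequence of Theorem \ref{theorem_e2e_outage} by letting $\mathcal{C}_x\to 1$; the paper does not spell out the $0/0$ limit $\Psi_r(\mathcal{C}_x)/(1-\mathcal{C}_x^2)\to\gamma/2$, but it uses exactly this value elsewhere (cf.\ the expression for $\mathcal{P}_{\rm rd,RF}(P_{\rm r},1)$ after Corollary \ref{OP_R-D_rayleigh}), and your rationalization argument makes it explicit.
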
  
In order to evaluate the performance of the end-to-end outage probability upper bound with respect to \ac{RSI} when using maximally \ac{IGS} at the relay transmitter, we state the following theorem.
\begin{theorem}
In the limiting case where $\pi_{\rm{rr}} \rightarrow \infty$ over Rayleigh fading with a fixed relay transmit power $P_{\rm{r}}$, the exact end-to-end outage probability for the \ac{PGS} case ${{\cal P}_{{\rm{E-E,RF}}}}\left( {{P_{\rm{r}}},0} \right) \rightarrow 1$, while the upper bound for the end-to-end outage probability for the maximally \ac{IGS} case ${{\cal P}^{\rm{UB}}_{{\rm{E-E,RF}}}}\left( {{P_{\rm{r}}},1} \right) \rightarrow {K}$, where

\begin{equation}\label{p_out_pirr_inf}
{K} = 1 - \frac{{2{P_{\rm{r}}}{\pi _{{\rm{rd}}}}{e^{ - \left( {\frac{\gamma }{{{2P_{\rm{r}}}{\pi _{{\rm{rd}}}}}} + \frac{\gamma }{{{P_{\rm{s}}}{\pi _{{\rm{sr}}}}}}} \right)}}}}{{2{P_{\rm{r}}}{\pi _{{\rm{rd}}}} + \gamma {P_{\rm{s}}}{\pi _{{\rm{sd}}}}}}\cdot
\end{equation} 
\end{theorem}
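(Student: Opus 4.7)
The plan is to treat the two statements separately, since both follow directly from earlier closed-form expressions in the excerpt, with the only real work being a careful asymptotic expansion in the second case.

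For the \ac{PGS} claim, I would start from the exact expression in Corollary \ref{out_E_E_Exact_RF}, namely
\begin{equation*}
{\cal P}_{{\rm{E-E,RF}}}(P_{\rm{r}},0) = 1 - \frac{P_{\rm{s}}P_{\rm{r}}\pi_{\rm{sr}}\pi_{\rm{rd}} e^{-\eta(1/(P_{\rm{s}}\pi_{\rm{sr}}) + 1/(P_{\rm{r}}\pi_{\rm{rd}}))}}{(P_{\rm{s}}\pi_{\rm{sr}} + P_{\rm{r}}\pi_{\rm{rr}}\eta)(P_{\rm{r}}\pi_{\rm{rd}} + P_{\rm{s}}\pi_{\rm{sd}}\eta)}.
\end{equation*}
The numerator and the second denominator factor are both independent of $\pi_{\rm{rr}}$, while the first denominator factor grows linearly in $\pi_{\rm{rr}}$. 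Hence the entire fraction vanishes as $\pi_{\rm{rr}}\to\infty$ and the outage tends to $1$. This is a one-line observation.

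For the maximally-improper claim I would start from the upper bound in Corollary \ref{corollary1}, where the only $\pi_{\rm{rr}}$-dependent quantity is the exponent term $\frac{P_{\rm{r}}\pi_{\rm{rr}}+1}{P_{\rm{s}}\pi_{\rm{sr}}}\Psi_r(\alpha)$, with $\alpha=\frac{P_{\rm{r}}\pi_{\rm{rr}}}{P_{\rm{r}}\pi_{\rm{rr}}+1}$. As $\pi_{\rm{rr}}\to\infty$, $\alpha\to 1$ and $\Psi_r(\alpha)\to 0$, so this is an indeterminate product $\infty\cdot 0$ that must be resolved. I would first write
\begin{equation*}
1-\alpha^2 = \frac{2P_{\rm{r}}\pi_{\rm{rr}}+1}{(P_{\rm{r}}\pi_{\rm{rr}}+1)^2},
\end{equation*}
then invoke the expansion $\sqrt{1+u}-1=\tfrac{u}{2}+O(u^2)$ for small $u$ applied to $u=\gamma(1-\alpha^2)\to 0$, which gives
\begin{equation*}
\Psi_r(\alpha)\sim \frac{\gamma}{2}\cdot\frac{2P_{\rm{r}}\pi_{\rm{rr}}+1}{(P_{\rm{r}}\pi_{\rm{rr}}+1)^2}.
\end{equation*}
Multiplying by $(P_{\rm{r}}\pi_{\rm{rr}}+1)/(P_{\rm{s}}\pi_{\rm{sr}})$ and taking the limit yields $\gamma/(P_{\rm{s}}\pi_{\rm{sr}})$. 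Substituting this limit back into the exponent of Corollary \ref{corollary1}, and noting that every remaining factor of the corollary's expression is independent of $\pi_{\rm{rr}}$, produces exactly the constant $K$ in \eqref{p_out_pirr_inf}.

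The main obstacle is the second half: I need to show that the indeterminate form $(P_{\rm{r}}\pi_{\rm{rr}}+1)\cdot\Psi_r(\alpha)$ converges to a finite positive limit rather than diverging or vanishing. This hinges on the quadratic dependence through $\alpha^2$, which is what makes $1-\alpha^2$ decay as $\Theta(1/\pi_{\rm{rr}})$ and hence $\Psi_r(\alpha)$ decay at the same rate, precisely cancelling the linear growth of the prefactor. Once this Taylor expansion step is in place, the remainder of the argument is direct substitution. A clean way to present the conclusion is to contrast the two limits: even though both share the same $\rm{R-D}$ and $\rm{S-D}$ statistics, the \ac{PGS} system collapses to total outage while the maximally improper scheme retains a strictly positive chance of success bounded by $1-K$, which motivates \ac{IGS} as an \ac{RSI}-robust strategy.
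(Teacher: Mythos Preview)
Your proposal is correct and follows the same approach as the paper: invoke Corollary~\ref{out_E_E_Exact_RF} for the \ac{PGS} limit and Corollary~\ref{corollary1} for the maximally improper limit, then let $\pi_{\rm{rr}}\to\infty$. In fact, your treatment is more thorough than the paper's one-line proof, since you explicitly resolve the $\infty\cdot 0$ indeterminate form $(P_{\rm{r}}\pi_{\rm{rr}}+1)\Psi_r(\alpha)$ via the Taylor expansion of $\Psi_r$, which the paper leaves implicit.
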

\begin{proof}
The result is obtained from Corollary \ref{out_E_E_Exact_RF}, \ref{corollary1} and taking the limit at $\pi_{\rm{rr}} \rightarrow \infty$.
\end{proof}

Interestingly, different from the \ac{PGS} case, the maximally \ac{IGS} introduces immunity against high \ac{RSI} and achieves less outage probability with a constant upper bound \eqref{p_out_pirr_inf}, which depends on the quality of both $\rm{S-R}$ and $\rm{R-D}$ links, in addition to the target rate. 
\section{Ergodic Rate Performance}\label{sec:analysis}
In this section, we evaluate the ergodic rate performance of the canonical cooperative setting depicted in Fig. \ref{sysmodfig} when \ac{IGS} is allowed at the relay under Nakagami-$m$ fading. We also present simplified expressions for Rayleigh fading.

The end-to-end ergodic rate of the \ac{FDR} system can be calculated as 
\begin{align}\label{ER_E-E_prob}
{{\cal R}_{{\rm{E - E}}}} =\mathbb E\left\{ {\min \left( {{R_{{\rm{sr}}}},{R_{{\rm{rd}}}}} \right)} \right\} = \hspace{-5pt}\int\limits_0^\infty  {\mathbb P\left\{ {\min \left\{ {{R_{{\rm{sr}}}},{R_{{\rm{rd}}}}} \right\} \ge r} \right\}} \;dr\cdot
\end{align}
The last complementary \ac{CDF} integral form for the statistical expectation follows from the fact that ${\min \left\{ {{R_{{\rm{sr}}}},{R_{{\rm{rd}}}}} \right\}}$ is a non-negative \ac{RV}. Hence the exact end-to-end ergodic rate can be numerically computed from the following result.
\begin{theorem}\label{ER_E-E_int}
The exact end-to-end ergodic rate  can be numerically calculated from
\footnotesize
\begin{align}\label{R_E_E_not_stable}
&{\cal R_{{\rm{E - E}}}} = \sum\limits_{m = 0}^{{m_{{\rm{rd}}}} - 1} \sum\limits_{k = 0}^m  \binom mk  \frac{{{{P_{\rm{s}}^k}}\Gamma \left( {k + {m_{{\rm{sd}}}}} \right)}}{{\Gamma \left( {m + 1} \right)}{}}  \nonumber \\
&\times \hspace{-2pt} \int\limits_0^\infty  \int\limits_0^\infty \hspace{-2pt} \frac{{{x^{{m_{\rm{rr}}} - 1}}\Gamma \left( {{m_{{\rm{sr}}}},\frac{{\left( {{P_{\rm{r}}}x + 1} \right)}}{{{P_{\rm{s}}}{\theta _{{\rm{sr}}}}}}\Psi_r \left( {\frac{{{P_{\rm{r}}}x{{\cal C}_x}}}{{{P_{\rm{r}}}x + 1}}} \right)} \right)\left(\frac{\Psi_r\left( { {{\cal C}_x}} \right)}{P_{\rm{r}}\theta_{\rm{rd}}\left(1-\mathcal{C}_x^2\right)}\right)^m}}{{\Gamma \left( {{m_{{\rm{sd}}}}} \right)\Gamma \left( {{m_{\rm{rr}}}} \right)\Gamma \left( {{m_{{\rm{sr}}}}} \right)\theta _{{\rm{sd}}}^{{m_{{\rm{sd}}}}}\theta _{\rm{rr}}^{{m_{\rm{rr}}}}{{\left( {\frac{{{P_{\rm{s}}}\Psi_r \left( {{{\cal C}_x}} \right)}}{{{P_{\rm{r}}}{\theta _{{\rm{rd}}}}\left( {1 - {\cal C}_x^2} \right)}} + \frac{1}{{{\theta _{{\rm{sd}}}}}}} \right)}^{k + {m_{{\rm{sd}}}}}}}} \nonumber \\
&\times {e^{ - \left( {\frac{{\Psi_r \left( {{{\cal C}_x}} \right)}}{{{P_{\rm{r}}}{\theta _{{\rm{rd}}}}\left( {1 - {\cal C}_x^2} \right)}} + \frac{x}{{{\theta _{\rm{rr}}}}}} \right)}}\;dxdr\cdot 
\end{align}
\normalsize
\end{theorem}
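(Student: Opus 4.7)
The plan is to start from the complementary CDF representation of expectation already written down in \eqref{ER_E-E_prob}, namely
\begin{equation*}
{\cal R}_{\rm E-E} = \int_0^\infty \mathbb{P}\{\min(R_{\rm sr},R_{\rm rd}) \ge r\}\,dr,
\end{equation*}
and then reduce the integrand to a product of two quantities already computed earlier in the paper. The crucial observation is that the two instantaneous hop rates $R_{\rm sr}$ and $R_{\rm rd}$ are statistically independent: $R_{\rm sr}$ depends on $(g_{\rm sr}, g_{\rm rr})$ while $R_{\rm rd}$ depends on $(g_{\rm rd}, g_{\rm sd})$, and the channel model in Section \ref{subsec:chan_model} assumes all link gains are mutually independent. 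Consequently,
\begin{equation*}
\mathbb{P}\{\min(R_{\rm sr},R_{\rm rd}) \ge r\} = \overline{\mathcal{P}}_{\rm sr}(P_{\rm r},\mathcal{C}_x)\,\overline{\mathcal{P}}_{\rm rd}(P_{\rm r},\mathcal{C}_x),
\end{equation*}
where both complementary outage probabilities are evaluated at target rate $r$.

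Next, I would plug in the closed-form/integral expressions that have already been established: $\overline{\mathcal{P}}_{\rm sr}$ is $1-{\cal P}_{\rm sr}$ from Lemma \ref{OP_S-R_int}, which leaves the single integral in $x$ containing $\Gamma(m_{\rm sr},\cdot)\,e^{-x/\theta_{\rm rr}}$ and the factor $1/(\Gamma(m_{\rm rr})\Gamma(m_{\rm sr})\theta_{\rm rr}^{m_{\rm rr}})$; and $\overline{\mathcal{P}}_{\rm rd}$ is $1-{\cal P}_{\rm rd}$ from Theorem \ref{OP_R-D_exact}, which gives the finite double sum in $m$ and $k$ together with the exponential prefactor $e^{-\Psi_r(\mathcal{C}_x)/[P_{\rm r}\theta_{\rm rd}(1-\mathcal{C}_x^2)]}$. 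Multiplying these two expressions and then integrating over $r \in [0,\infty)$ reproduces exactly \eqref{R_E_E_not_stable}, with the outer double sum inherited from $\overline{\mathcal{P}}_{\rm rd}$ and the inner $dx$ integral inherited from $\overline{\mathcal{P}}_{\rm sr}$.

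The only non-mechanical point I would take care to justify is the interchange of the outer $dr$ integral with the $dx$ integral appearing inside $\overline{\mathcal{P}}_{\rm sr}$ and with the finite sum from $\overline{\mathcal{P}}_{\rm rd}$; the finite sum is trivial, and Fubini--Tonelli applies to the $dr$/$dx$ swap since the joint integrand is non-negative. This lets me write the result as the single joint integral over $(x,r)$ appearing in the theorem statement. I do not expect any genuine obstacle here: the proof is essentially a direct substitution exercise once independence of the two hops is invoked, and the statement of the theorem is explicitly presented only as a numerically evaluable form, so no further simplification of the $dr$ integral is required.
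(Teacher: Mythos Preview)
Your proposal is correct and follows essentially the same route as the paper: the paper's proof simply notes that the integrand in \eqref{ER_E-E_prob} is $1-{\cal P}_{\rm E-E}$ from Theorem~\ref{OP_E-E_int}, which in turn was obtained by multiplying $\overline{\cal P}_{\rm sr}$ (Lemma~\ref{OP_S-R_int}) and $\overline{\cal P}_{\rm rd}$ (Theorem~\ref{OP_R-D_exact}) via \eqref{p_out_overall}. Your explicit appeal to independence of the two hops and to Fubini--Tonelli for the $dr/dx$ interchange just spells out details the paper leaves implicit.
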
 
\begin{proof}
The result is obtained directly from \eqref{ER_E-E_prob}. The expression of the integrand represents the complement of the end-to-end outage probability which is  obtained in Theorem~ \ref{OP_E-E_int}.
\end{proof}
Unfortunately, this double integral\footnote{The direct numerical computation of this double integral needs a careful selection of the upper limit value to avoid overflow. } does not have a closed form solution in the general case. In what follows, we alternatively derive an upper bound on the end-to-end ergodic rate performance in the Nakagami-$m$ fading. Additionally, we obtain a lower bound in the Rayleigh fading special case. The derived bounds are computationally simpler and numerically more stable than the exact form in \eqref{R_E_E_not_stable}. 
\subsubsection{Upper Bound}
An upper bound for the end-to-end ergodic rate is presented in the following result.
\begin{theorem}\label{E_E_rate_ergodic_UB}
The end-to-end ergodic rate can be upper-bounded as \eqref{eq_very_big}
\begin{figure*}
\small
\begin{align}\label{eq_very_big}
{\cal R}_{{\rm{E - E}}} &\leq \frac{1}{{\Gamma \left( {{m_{{\rm{sd}}}}} \right)\Gamma \left( {{m_{{\rm{rr}}}}} \right)\theta _{{\rm{sd}}}^{{m_{{\rm{sd}}}}}\theta _{{\rm{rr}}}^{{m_{{\rm{rr}}}}}\log \left( 2 \right)}} \sum\limits_{m = 0}^{{m_{{\rm{sr}}}} - 1} {\sum\limits_{m' = 0}^{{m_{{\rm{rd}}}} - 1} {\sum\limits_{k = 0}^m {\sum\limits_{k' = 0}^{m'} \binom mk \binom {m'}{k'} \frac{{P_{\rm{r}}^{k - m'}\Gamma \left( {k + {m_{{\rm{rr}}}}} \right)\Gamma \left( {k' + {m_{{\rm{sd}}}}} \right)}}{{P_{\rm{s}}^{m - k'}\Gamma \left( {m + 1} \right)\Gamma \left( {m' + 1} \right)\theta _{{\rm{sr}}}^m{{\left( {{\theta _{{\rm{rd}}}}\left( {1 - {\cal C}_x^2} \right)} \right)}^{m'}}}}} }  }\nonumber\\[0.3cm]
& \times \Bigg(\sum\limits_{i = 1}^2 {{{\cal D}_i}{\Omega ^{ - \left( {m + m'} \right)}}\Gamma \left( {m + m' + 1} \right){\Xi _{m + m' + 1}}\left( {\left( {1 + {{\left( { - 1} \right)}^i}{{\cal C}_x}} \right)\Omega } \right)}\nonumber \\[0.3cm]
&\hspace{20pt} + \sum\limits_{j = 1}^{k + {m_{{\rm{rr}}}}} {\zeta _j P_{\rm{r}}^{ - j}P_{\rm{s}}^j\theta _{{\rm{sr}}}^j{\Omega ^{\left( {j - 1} \right) - \left( {m + m'} \right)}}\Gamma \left( {m + m' + 1} \right)U\left( {j,j - \left( {m + m'} \right),\frac{1}{{{P_{\rm{r}}}{\theta _{{\rm{rr}}}}}} + \frac{{{P_{\rm{s}}}{\theta _{{\rm{sr}}}}}}{{{P_r}^2{\theta _{{\rm{rd}}}}{\theta _{{\rm{rr}}}}\left( {1 - {\cal C}_x^2} \right)}}} \right)} \nonumber \\[0.3cm]
&\hspace{20pt}+\sum\limits_{l = 1}^{k' + {m_{{\rm{sd}}}}} {{\xi _l}P_{\rm{s}}^{ - l}P_{\rm{r}}^l\theta _{{\rm{rd}}}^l{{\left( {1 - {\cal C}_x^2} \right)}^l}{\Omega ^{\left( {l - 1} \right) - \left( {m + m'} \right)}}\Gamma \left( {m + m' + 1} \right)U\left( {l,l - \left( {m + m'} \right),\frac{1}{{{P_{\rm{s}}}{\theta _{{\rm{sd}}}}}} + \frac{{{P_{\rm{r}}}{\theta _{{\rm{rd}}}}\left( {1 - {\cal C}_x^2} \right)}}{{P_{\rm{s}}^2{\theta _{{\rm{sr}}}}{\theta _{{\rm{sd}}}}}}} \right)}  \Bigg)   \buildrel \Delta \over = {\cal R}_{{\rm{E - E}}}^{{\rm{UB}}},
\end{align}
where $\Omega  = \frac{1}{{{P_{\rm{s}}}{\theta _{{\rm{sr}}}}}} + \frac{1}{{{P_{\rm{r}}}{\theta _{{\rm{rd}}}}\left( {1 - {\cal C}_x^2} \right)}}$, ${\lambda _i} = \mathop {\lim }\limits_{{\Psi _r} \to  - \left( {1 + {{\left( { - 1} \right)}^i}{{\cal C}_x}} \right)} \left( {{\Psi _r} + \left( {1 + {{\left( { - 1} \right)}^i}{{\cal C}_x}} \right)} \right){\cal F}\left( {{\Psi _r}} \right)$, \\[0.5cm] ${\zeta _j} \hspace{-2pt}= \hspace{-2pt}\frac{{\mathop {\lim }\limits_{{\Omega _r} \to  - \frac{{{P_{\rm{s}}}{\theta _{{\rm{sr}}}}}}{{{P_{\rm{r}}}{\theta _{{\rm{rr}}}}}}} \frac{{{d^{k + {m_{{\rm{rr}}}} - j}}}}{{d\Psi _r^{k + {m_{{\rm{rr}}}} - j}}}\left( {{{\left( {\frac{{{P_{\rm{r}}}{\Psi _r}}}{{{P_{\rm{s}}}{\theta _{{\rm{sr}}}}}} + \frac{1}{{{\theta _{{\rm{rr}}}}}}} \right)}^{k + {m_{{\rm{rr}}}}}}{\cal F}\left( {{\Psi _r}} \right)} \right)}}{{{{\left( {\frac{{{P_{\rm{r}}}}}{{{P_{\rm{s}}}{\theta _{{\rm{sr}}}}}}} \right)}^{k + {m_{{\rm{rr}}}} - j}}\left( {k + {m_{{\rm{rr}}}} - j} \right)!}}$, ${\xi _l} \hspace{-2pt}=\hspace{-2pt} \frac{{\mathop {\lim }\limits_{{\Omega _r} \to  - \frac{{{P_{\rm{r}}}{\theta _{{\rm{rd}}}}\left( {1 - {\cal C}_x^2} \right)}}{{{P_{\rm{s}}}{\theta _{{\rm{sd}}}}}}} \frac{{{d^{k' + {m_{{\rm{sd}}}} - l}}}}{{d\Psi _r^{k' + {m_{{\rm{sd}}}} - l}}}\left( {{{\left( {\frac{{{P_{\rm{s}}}{\Psi _r}}}{{{P_{\rm{r}}}{\theta _{{\rm{rd}}}}\left( {1 - {\cal C}_x^2} \right)}} + \frac{1}{{{\theta _{{\rm{sd}}}}}}} \right)}^{k' + {m_{{\rm{sd}}}}}}{\cal F}\left( {{\Psi _r}} \right)} \right)}}{{{{\left( {\frac{{{P_{\rm{s}}}}}{{{P_{\rm{r}}}{\theta _{{\rm{rd}}}}\left( {1 - {\cal C}_x^2} \right)}}} \right)}^{k' + {m_{{\rm{sd}}}} - l}}\left( {k' + {m_{{\rm{sd}}}} - l} \right)!}}$\\[0.5cm] and ${\cal F}\left( {{\Psi _r}} \right) = \frac{{\left( {{\Psi _r} + 1} \right)}}{{\prod\limits_{i = 1}^2 {\left( {{\Psi _r} + \left( {1 + {{\left( { - 1} \right)}^i}{{\cal C}_x}} \right)} \right)} {{\left( {\frac{{{P_{\rm{r}}}{\Psi _r}}}{{{P_{\rm{s}}}{\theta _{{\rm{sr}}}}}} + \frac{1}{{{\theta _{{\rm{rr}}}}}}} \right)}^{k + {m_{{\rm{rr}}}}}}{{\left( {\frac{{{P_{\rm{s}}}{\Psi _r}}}{{{P_{\rm{r}}}{\theta _{{\rm{rd}}}}\left( {1 - {\cal C}_x^2} \right)}} + \frac{1}{{{\theta _{{\rm{sd}}}}}}} \right)}^{k' + {m_{{\rm{sd}}}}}}}}\cdot$
\normalsize
\\\\\vspace*{10pt}
\hrulefill
\end{figure*}
\end{theorem}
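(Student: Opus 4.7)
The plan is to start from the complementary-CDF form of the ergodic rate in \eqref{ER_E-E_prob} and use the already-established bounds on the individual-hop outage probabilities. Since $R_{\rm sr}$ and $R_{\rm rd}$ depend on disjoint sets of fading variables ($g_{\rm sr},g_{\rm rr}$ and $g_{\rm sd},g_{\rm rd}$ respectively), $\mathbb{P}\{\min(R_{\rm sr},R_{\rm rd})\ge r\}=\bar{\mathcal P}_{\rm sr}(r)\,\bar{\mathcal P}_{\rm rd}(r)=1-\Pout(r)$. Applying the lower bound ${\cal P}_{\rm E-E}^{\rm LB}$ from Theorem \ref{out_E_E_LB} inside the integrand yields $\mathcal{R}_{\rm E-E}\leq\int_0^\infty(1-{\cal P}_{\rm E-E}^{\rm LB}(r))\,dr$, which is the quantity to be evaluated in closed form.

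Next, I would change the integration variable from $r$ to $\Psi_r=\Psi_r({\cal C}_x)=\sqrt{1+\gamma(1-{\cal C}_x^2)}-1$. Solving for $r$ gives $r=\tfrac{1}{2}\log_2\!\left((\Psi_r+1-{\cal C}_x)(\Psi_r+1+{\cal C}_x)/(1-{\cal C}_x^2)\right)$ so that the Jacobian is
\begin{equation*}
\frac{dr}{d\Psi_r}=\frac{1}{\log 2}\cdot\frac{\Psi_r+1}{(\Psi_r+1-{\cal C}_x)(\Psi_r+1+{\cal C}_x)}.
\end{equation*}
The new integration range is $[0,\infty)$, and after substitution the integrand becomes (up to the constant prefactors and the quadruple sum) $\Psi_r^{m+m'}e^{-\Omega\Psi_r}\,{\cal F}(\Psi_r)$, where ${\cal F}(\Psi_r)$ is exactly the rational function displayed in the theorem: the numerator $\Psi_r+1$ comes from the Jacobian, while the two simple-pole factors $(\Psi_r+1\pm{\cal C}_x)$ and the two multiple-pole factors originate from the Jacobian and from the exact and lower-bound expressions of $\Prd$ and $\Psr$, respectively.

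The heart of the proof is then a partial-fraction decomposition of ${\cal F}(\Psi_r)$ with respect to its three distinct poles: two simple poles at $\Psi_r=-(1+(-1)^i{\cal C}_x)$ and two multiple poles at $\Psi_r=-P_{\rm s}\theta_{\rm sr}/(P_{\rm r}\theta_{\rm rr})$ (order $k+m_{\rm rr}$) and $\Psi_r=-P_{\rm r}\theta_{\rm rd}(1-{\cal C}_x^2)/(P_{\rm s}\theta_{\rm sd})$ (order $k'+m_{\rm sd}$). The residues at the simple poles give the coefficients $\lambda_i$ (identified with ${\cal D}_i$ in the statement), while the standard higher-order residue formula, $\zeta_j = [(N-j)!\,a^{N-j}]^{-1}\,\partial^{N-j}_{\Psi_r}[(a\Psi_r+b)^N{\cal F}]\big|_{\Psi_r=-b/a}$, produces the $\zeta_j$ and $\xi_l$ displayed in the theorem.

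Finally, each partial-fraction term is integrated against $\Psi_r^{m+m'}e^{-\Omega\Psi_r}$ over $[0,\infty)$ using two standard identities that I would derive in-line. For the simple-pole terms, substituting $\Psi_r=\alpha u$ and using induction on the recursion $n\Xi_{n+1}(z)=1-z\Xi_n(z)$ yields $\int_0^\infty\Psi_r^{n}e^{-\Omega\Psi_r}/(\Psi_r+\alpha)\,d\Psi_r=\Gamma(n+1)\,\Omega^{-n}\,\Xi_{n+1}(\alpha\Omega)$, which recovers the first sum in \eqref{eq_very_big}. For the multiple-pole terms, the change of variable $\Psi_r=(b/a)t$ brings the integral into the canonical form $\int_0^\infty t^{n}(t+1)^{-j}e^{-zt}\,dt$, which equals $\Gamma(n+1)\,U(n+1,n+2-j,z)$ by the definition of the Tricomi function; Kummer's transformation $U(a,b,z)=z^{1-b}U(a-b+1,2-b,z)$ then rewrites this as $\Gamma(n+1)z^{j-n-1}U(j,j-n,z)$, producing the $U(j,j-(m+m'),\cdot)$ and $U(l,l-(m+m'),\cdot)$ contributions with the arguments stated. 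The main obstacle I anticipate is bookkeeping: tracking the $a$, $b/a$ and $\Omega$ factors through the two substitutions and verifying that Kummer's transformation yields exactly the asserted arguments, particularly the nontrivial third argument $1/(P_{\rm r}\theta_{\rm rr})+P_{\rm s}\theta_{\rm sr}/[P_{\rm r}^2\theta_{\rm rd}\theta_{\rm rr}(1-{\cal C}_x^2)]$ for the $\zeta_j$-branch and its counterpart for the $\xi_l$-branch.
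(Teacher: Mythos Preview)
Your proposal is correct and follows essentially the same approach as the paper: bound the ergodic-rate integral by $\int_0^\infty(1-\mathcal{P}_{\rm E-E}^{\rm LB})\,dr$, change variables from $r$ to $\Psi_r$, perform the partial-fraction decomposition of the resulting rational function $\mathcal{F}(\Psi_r)$, and integrate term-by-term. The paper's proof is terser---it simply cites table entries in Gradshteyn--Ryzhik and Prudnikov for the resulting integrals---whereas you derive the $\Xi_{n+1}$ and Tricomi-$U$ identities in-line via the recursion and Kummer's transformation; but the logical skeleton is identical, and your identification of $\lambda_i$ with the symbol $\mathcal{D}_i$ appearing in \eqref{eq_very_big} is correct (the paper's notation is inconsistent there).
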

\begin{proof}
We define an upper bound for the end-to-end ergodic rate of the FDR system as 

\begin{equation}\label{R_E_E_int}
{{\cal R}_{{\rm{E-E}}}}\le \int\limits_0^\infty 1-{\cal P}_{{\rm{E - E}}}^{{\rm{LB}}}\left( {{P_{\rm{r}}},{{\cal C}_x}} \left|r\right.\right) \;dr\buildrel \Delta \over={{\cal R}^{\rm{UB}}_{{\rm{E-E}}}}\left( {{P_{\rm{r}}},{{\cal C}_x}} \right),
\end{equation}
where ${\cal P}_{{\rm{E - E}}}^{{\rm{LB}}}$ is the lower bound of the outage probability which is obtained in Theorem \ref{out_E_E_LB}, which depends on the target rate $r$. By substituting \eqref{out_E_E_LB_nt} in \eqref{R_E_E_int}, we get the following integral expression for the upper bound of the end-to-end ergodic rate.
\footnotesize
\begin{align}
{\cal R}_{{\rm{E - E}}}^{{\rm{UB}}} &= \frac{1}{{\Gamma \left( {{m_{{\rm{sd}}}}} \right)\Gamma \left( {{m_{{\rm{rr}}}}} \right)\theta _{{\rm{sd}}}^{{m_{{\rm{sd}}}}}\theta _{{\rm{rr}}}^{{m_{{\rm{rr}}}}}}}\sum\limits_{m = 0}^{{m_{{\rm{sr}}}} - 1} \sum\limits_{m' = 0}^{{m_{{\rm{rd}}}} - 1} \sum\limits_{k = 0}^m \sum\limits_{k' = 0}^{m'} \binom mk \binom {m'}{k'} \nonumber \\[2pt]
&\times\frac{{P_{\rm{r}}^{k - m'}\Gamma \left( {k + {m_{{\rm{rr}}}}} \right)\Gamma \left( {k' + {m_{{\rm{sd}}}}} \right)}}{{P_{\rm{s}}^{m - k'}\Gamma \left( {m + 1} \right)\Gamma \left( {m' + 1} \right)\theta _{{\rm{sr}}}^m{{\left( {{\theta _{{\rm{rd}}}}\left( {1 - {\cal C}_x^2} \right)} \right)}^{m'}}}}    \nonumber \\[2pt]
&  \times\hspace{-1pt} \underbrace{\int\limits_0^\infty \hspace{-1pt} {\frac{{{e^{ - {\Psi _r}\left( {{{\cal C}_x}} \right)\left( {\frac{1}{{{P_{\rm{s}}}{\theta _{{\rm{sr}}}}}} + \frac{1}{{{P_{\rm{r}}}{\theta _{{\rm{rd}}}}\left( {1 - {\cal C}_x^2} \right)}}} \right)}}{{\left( {{\Psi _r}\left( {{{\cal C}_x}} \right)} \right)}^{m + m'}}}}{{{{\left( {\frac{{{P_{\rm{r}}}{\Psi _r}\left( {{{\cal C}_x}} \right)}}{{{P_{\rm{s}}}{\theta _{{\rm{sr}}}}}} + \frac{1}{{{\theta _{{\rm{rr}}}}}}} \right)}^{k + {m_{{\rm{rr}}}}}}{{\left( {\frac{{{P_{\rm{s}}}{\Psi _r}\left( {{{\cal C}_x}} \right)}}{{{P_{\rm{r}}}{\theta _{{\rm{rd}}}}\left( {1 - {\cal C}_x^2} \right)}} + \frac{1}{{{\theta _{{\rm{sd}}}}}}} \right)}^{k' + {m_{{\rm{sd}}}}}}}}} \;dr}_{\mathcal{I}}\hspace{-1pt}\cdot
\end{align}
\normalsize
By employing a change of variables in the integral $\mathcal{I}$ and using partial fraction decomposition, $\mathcal{I}$ can be expressed as
\begin{align}
{\cal I} = \frac{1}{{\log \left( 2 \right)}}\Bigg[& \sum\limits_{i = 1}^2 {\int\limits_0^\infty  {\frac{{{\lambda _i}\Psi _r^{^{m + m'}}{e^{ - \Omega {\Psi _r}}}}}{{{\Psi _r} + \left( {1 + {{\left( { - 1} \right)}^i}{{\cal C}_x}} \right)}}} } d{\Psi _r} \nonumber \\
&+ \sum\limits_{j = 1}^{k + {m_{{\rm{rr}}}}} {\int\limits_0^\infty  {\frac{{{\zeta _j}\Psi _r^{^{m + m'}}{e^{ - \Omega {\Psi _r}}}}}{{{{\left( {\frac{{{P_{\rm{r}}}{\Psi _r}}}{{{P_{\rm{s}}}{\theta _{{\rm{sr}}}}}} + \frac{1}{{{\theta _{{\rm{rr}}}}}}} \right)}^j}}}} } d{\Psi _r} \nonumber \\
&+ \sum\limits_{l = 1}^{k' + {m_{{\rm{sd}}}}} {\int\limits_0^\infty  {\frac{{{\xi _l}\Psi _r^{^{m + m'}}{e^{ - \Omega {\Psi _r}}}}}{{{{\left( {\frac{{{P_{\rm{s}}}{\Psi _r}}}{{{P_{\rm{r}}}{\theta _{{\rm{rd}}}}\left( {1 - {\cal C}_x^2} \right)}} + \frac{1}{{{\theta _{{\rm{sd}}}}}}} \right)}^l}}}d{\Psi _r}} }  \Bigg],
\end{align} 
which can be solved, using \cite[Eq. 3.353-5, 8.352-5]{gradstein1980tables} and \cite[Eq. 6.5.9]{prudnikov1998integrals} for the first integral, and \cite[Eq. 2.3.6-9]{prudnikov1998integrals} for the other integrals.
\end{proof}
\begin{remark}
Following Corollary \ref{out_E_E_Exact_RF}, the upper bound in Theorem \ref{E_E_rate_ergodic_UB} at $\mathcal{C}_x=0$ reduces to the exact expression of the end-to-end ergodic rate for the \ac{PGS} case.
\end{remark}
 \subsubsection{Lower Bound over Rayleigh Fading}
Similar to the discussion in the previous section, providing a lower bound for the ergodic rate offers further insights regarding the least performance merits \ac{IGS} is able to offer relative to \ac{PGS}. A lower bound for the ergodic rate of the \ac{FDR} system over Rayleigh fading channels is presented in the following theorem.
\begin{theorem}\label{ER_LB_R}
The end-to-end ergodic rate, over Rayleigh fading, can be lower-bounded by 
\begin{align}
&{\cal R}_{{\rm{E - E,RF}}} \geq 
\frac{{{P_{\rm{r}}}{\pi _{{\rm{rd}}}}\left( {1 - {\cal C}_x^2} \right)}}{{{P_{\rm{s}}}{\pi _{{\rm{sd}}}}\log \left( 2 \right)}} \nonumber \\
& \times \Bigg( \sum\limits_{i = 1}^2 {{\kappa _i}{\Xi _1}\left( {\frac{{{P_{\rm{r}}}{\pi _{{\rm{rr}}}}\left( {1 + {{\left( { - 1} \right)}^i}\alpha {{\cal C}_x}} \right)}}{{\alpha {P_{\rm{s}}}{\pi _{{\rm{sr}}}}}} + \frac{{\left( {1 + {{\left( { - 1} \right)}^i}\alpha {{\cal C}_x}} \right)}}{{{P_{\rm{r}}}{\pi _{{\rm{rd}}}}\left( {1 - {\cal C}_x^2} \right)}}} \right)}  \\ \nonumber &\hspace{0.9cm}+ {\kappa _3}{\Xi _1}\left( {\frac{{p_r^2{\pi _{{\rm{rd}}}}{\pi _{{\rm{rr}}}}\left( {1 - {\cal C}_x^2} \right)}}{{\alpha P_{\rm{s}}^2{\pi _{{\rm{sd}}}}{\pi _{{\rm{sr}}}}}} + \frac{1}{{{P_{\rm{s}}}{\pi _{{\rm{sd}}}}}}} \right) \Bigg)  \nonumber \\
&\buildrel \Delta \over = {\cal R}_{{\rm{E - E,RF}}}^{{\rm{LB}}},
\end{align}
\\
\text{where} ${\kappa _i} = \frac{{0.5{P_{\rm{s}}}{\pi _{{\rm{sd}}}}}}{{ {{P_{\rm{r}}}{\pi _{{\rm{rd}}}}\left( {1 - {\cal C}_x^2} \right) - {P_{\rm{s}}}{\pi _{{\rm{sd}}}}\left( {1 + {{\left( { - 1} \right)}^i}\alpha {{\cal C}_x}} \right)} }}$ and\\ ${\kappa _3} = \frac{{{P_{\rm{s}}}{\pi _{{\rm{sd}}}}\left( {{P_{\rm{s}}}{\pi _{{\rm{sd}}}} - {P_{\rm{r}}}{\pi _{{\rm{rd}}}}\left( {1 - {\cal C}_x^2} \right)} \right)}}{{\left( {{P_{\rm{r}}}{\pi _{{\rm{rd}}}}\left( {1 - {\cal C}_x^2} \right) - {P_{\rm{s}}}{\pi _{{\rm{sd}}}}\left( {1 - \alpha {{\cal C}_x}} \right)} \right)\left( {{P_{\rm{r}}}{\pi _{{\rm{rd}}}}\left( {1 - {\cal C}_x^2} \right) - {P_{\rm{s}}}{\pi _{{\rm{sd}}}}\left( {1 + \alpha {{\cal C}_x}} \right)} \right)}}\cdot$
\end{theorem}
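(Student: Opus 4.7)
The plan is to start from the complementary-CDF representation of the ergodic rate in \eqref{ER_E-E_prob} and write $\mathcal{R}_{\rm{E-E,RF}} = \int_0^\infty (1-\PoutR(P_{\rm r},\mathcal{C}_x\mid r))\,dr$. Invoking the Rayleigh-fading outage upper bound from Theorem~\ref{theorem_e2e_outage} immediately gives the valid chain $\mathcal{R}_{\rm{E-E,RF}} \geq \int_0^\infty (1-{\cal P}^{\rm{UB}}_{\rm{E-E,RF}}(P_{\rm r},\mathcal{C}_x\mid r))\,dr$. The integrand depends on $r$ through two different quantities, $\Psi_r(\mathcal{C}_x)$ and $\Psi_r(\alpha\mathcal{C}_x)$, and taking either as the substitution variable forces the other into an intractable nested form.

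To collapse to a single $r$-dependent quantity, the idea is to exploit the monotonicity of $\Psi_r(x)=\sqrt{1+(2^{2r}-1)(1-x^2)}-1$ in $|x|$: since $0\leq\alpha\leq 1$ we have $\alpha\mathcal{C}_x\leq\mathcal{C}_x$ and therefore $\Psi_r(\alpha\mathcal{C}_x)\geq\Psi_r(\mathcal{C}_x)$. Substituting $\Psi_r(\alpha\mathcal{C}_x)$ for $\Psi_r(\mathcal{C}_x)$ in the exponent (making it more negative) and in the denominator of $1-{\cal P}^{\rm{UB}}_{\rm{E-E,RF}}$ (making it larger) preserves the lower-bound direction and leaves only $\Psi:=\Psi_r(\alpha\mathcal{C}_x)$ as the $r$-dependent quantity. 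The identity $(\Psi+1)^2=1+(2^{2r}-1)(1-\alpha^2\mathcal{C}_x^2)$ factors as $2^{2r}(1-\alpha^2\mathcal{C}_x^2)=(\Psi+1-\alpha\mathcal{C}_x)(\Psi+1+\alpha\mathcal{C}_x)$, and logarithmic differentiation yields
\[
dr = \frac{1}{2\ln 2}\sum_{i=1}^{2}\frac{d\Psi}{\Psi+(1+(-1)^i\alpha\mathcal{C}_x)}\cdot
\]

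The resulting integral splits as a sum over $i\in\{1,2\}$ of terms of the form $\int_0^\infty e^{-(A+B)\Psi}/[(1+aB\Psi)(\Psi+b_i)]\,d\Psi$, with the shorthand $A=(P_{\rm r}\pi_{\rm rr}+1)/(P_{\rm s}\pi_{\rm sr})$, $B=1/[P_{\rm r}\pi_{\rm rd}(1-\mathcal{C}_x^2)]$, $a=P_{\rm s}\pi_{\rm sd}$, and $b_i=1+(-1)^i\alpha\mathcal{C}_x$. Applying partial fractions to the rational factor produces simple poles at $\Psi=-1/(aB)$ and $\Psi=-b_i$, and each resulting one-pole integral is evaluated in closed form by the standard identity $\int_0^\infty e^{-k\Psi}/(\Psi+c)\,d\Psi=\Xi_1(kc)$. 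The pole at $-b_i$ contributes one $\Xi_1((A+B)b_i)$ per $i$, matching the $\kappa_1,\kappa_2$ contributions; the pole at $-1/(aB)$ is common to both $i$'s and, after summing the two coefficients, yields the single $\Xi_1((A+B)/(aB))$ contribution associated with $\kappa_3$. Identifying the stated closed forms of the $\kappa_\ell$'s is then routine algebra, using $P_{\rm r}\pi_{\rm rr}+1=P_{\rm r}\pi_{\rm rr}/\alpha$ to rewrite $(A+B)b_i$ and $(A+B)/(aB)$ into the forms appearing in the theorem. The main obstacle is the bounding step in the second paragraph: without it one faces a genuine double-variable integral with no evident closed form, so any sharper approach would almost certainly sacrifice the clean $\Xi_1$-structure of the answer.
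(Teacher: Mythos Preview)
Your proposal is correct and follows essentially the same route as the paper: start from the complementary-CDF integral, invoke the Rayleigh outage upper bound of Theorem~\ref{theorem_e2e_outage}, then replace every $\Psi_r(\mathcal{C}_x)$ by the larger $\Psi_r(\alpha\mathcal{C}_x)$ to collapse to a single $r$-dependent quantity, change variables to $\Psi=\Psi_r(\alpha\mathcal{C}_x)$, and evaluate via partial fractions and the identity $\int_0^\infty e^{-k\Psi}/(\Psi+c)\,d\Psi=\Xi_1(kc)$. The only cosmetic difference is that the paper decomposes the full three-pole rational factor at once, whereas you first split the Jacobian as $\tfrac{1}{2}\sum_i(\Psi+b_i)^{-1}$ and then do a two-pole decomposition for each $i$; the resulting $\kappa_1,\kappa_2,\kappa_3$ are the same.
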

\begin{proof}
First, we define a lower bound on the ergodic rate ~as 
\begin{equation}
{\cal R}_{{\rm{E - E,RF}}}^{{\rm{LB}}}=\int\limits_0^\infty 1- {{\cal P}^{\rm{UB}}_{{\rm{E-E,RF}}}} \left( {{P_{\rm{r}}},{{\cal C}_x}\left | r \right.} \right)\;dr,
\end{equation}
where ${{\cal P}^{\rm{UB}}_{{\rm{E-E,RF}}}}$ is the outage probability upper bound in the Rayleigh fading case which is stated in Theorem \ref{theorem_e2e_outage}. Hence, we ~get
\begin{align}
{\cal R}_{{\rm{E - E,RF}}}^{{\rm{LB}}} & = \int\limits_0^\infty  {\frac{{{e^{ - \left( {\frac{{\Psi_r \left( {{{\cal C}_x}} \right)}}{{{P_{\rm{r}}}{\pi _{\rm{rd}}}\left( {1 - {\cal C}_x^2} \right)}} + \frac{{\left( {{P_{\rm{r}}}{\pi _{\rm{rr}}} + 1} \right)}}{{{P_{\rm{s}}}{\pi _{\rm{sr}}}}}\Psi_r \left(  \alpha \mathcal{C}_x \right)} \right)}}}}{{{P_{\rm{s}}}{\pi _{\rm{sd}}}\frac{{\Psi_r \left( {{{\cal C}_x}} \right)}}{{{P_{\rm{r}}}{\pi _{\rm{rd}}}\left( {1 - {\cal C}_x^2} \right)}} + 1}}\;} d{r}\\[0.25cm] \nonumber &
 \overset{(a)}\ge \int\limits_0^\infty  {\frac{{{e^{ - \left( {\frac{{\Psi_r \left(  \alpha \mathcal{C}_x \right)}}{{{P_{\rm{r}}}{\pi _{\rm{rd}}}\left( {1 - {\cal C}_x^2} \right)}} + \frac{{\left( {{P_{\rm{r}}}{\pi _{\rm{rr}}} + 1} \right)}}{{{P_{\rm{s}}}{\pi _{\rm{sr}}}}}\Psi_r \left(  \alpha \mathcal{C}_x \right)} \right)}}}}{{{P_{\rm{s}}}{\pi _{\rm{sd}}}\frac{{\Psi_r \left(  \alpha \mathcal{C}_x \right)}}{{{P_{\rm{r}}}{\pi _{\rm{rd}}}\left( {1 - {\cal C}_x^2} \right)}} + 1}}\;} d{r},
\end{align}
where $(a)$ is obtained by replacing every $\Psi_r \left( {{{\cal C}_x}} \right)$ by $\Psi_r \left(  \alpha \mathcal{C}_x \right)$ as it can be readily verified that $\Psi_r \left( x\right)$ is a monotonically decreasing function in $x$ and $\alpha < 1$. Then, by changing of variables and performing partial fraction decomposition, we ~get
\begin{align}
&{\cal R}_{{\rm{E - E,RF}}}^{{\rm{LB}}} = \frac{{{P_{\rm{r}}}{\pi _{{\rm{rd}}}}\left( {1 - {\cal C}_x^2} \right)}}{{{P_{\rm{s}}}{\pi _{{\rm{sd}}}}\log \left( 2 \right)}} \nonumber \\
&\times \hspace{-3pt}\int\limits_0^\infty \hspace{-3pt} \left( {\sum\limits_{i = 1}^2 {\frac{{{\kappa _i}}}{{\left( {{\Psi _r} + \left( {1 + {{\left( { - 1} \right)}^i}\alpha {{\cal C}_x}} \right)} \right)}} + \frac{{{\kappa _3}}}{{\left( {{\Psi _r} + \frac{{{P_{\rm{r}}}{\pi _{{\rm{rd}}}}\left( {1 - {\cal C}_x^2} \right)}}{{{P_{\rm{s}}}{\pi _{{\rm{sd}}}}}}} \right)}}} }\hspace{-2pt} \right) \nonumber \\[0.2cm]
&\hspace{1cm}\times{e^{ - \left( {\frac{{{P_{\rm{r}}}{\pi _{{\rm{rr}}}}}}{{\alpha {P_{\rm{s}}}{\pi _{{\rm{sr}}}}}} + \frac{1}{{{P_{\rm{r}}}{\pi _{{\rm{rd}}}}\left( {1 - {\cal C}_x^2} \right)}}} \right)\Psi_r}}\;d\Psi_r,
\end{align}
which can be solved using \cite[Eq. 3.352-4]{gradstein1980tables} to give the lower bound. 
\end{proof}
\section{Improper Signaling Optimization}\label{sec: opt}
In this part, for Rayleigh fading, we optimize the parameters of the \ac{IGS} transmit signal to minimize the end-to-end outage probability upper bound given some boundaries for the optimization variables. First, we individually optimize the relay power/circularity coefficient, assuming the other variable is kept fixed. Second, we jointly optimize the power and circularity.   
\subsection{Individual Power and Circularity Coefficient Optimization}
In order to investigate the merits of \ac{IGS} over conventional \ac{PGS} in \ac{FDR} channels, we aim at finding the optimal circularity coefficient value that minimizes the end-to-end outage probability upper bound. Specifically, we aim at solving the following optimization problem:
\begin{align}
&\mathop {\min }\limits_{{{\cal C}_x}} \quad   \PoutR^{\rm{UB}}\left( {{P_{\rm{r}}},{{\cal C}_x}} \right) \nonumber \\
&\hspace{5pt}
\rm{s.t.} \quad\; 0 \leq {{\cal C}_x} \le 1\cdot 
\end{align}
In order to solve the optimization problem, we analyze the convexity properties of the objective function $\PoutR^{\rm{UB}}\left( {{P_{\rm{r}}},{{\cal C}_x}}\right)$ in \eqref{p_out_UB}. In general, the function is found to be non-convex due to the indefinite sign of the second derivative. However, other desirable properties that allow us to find the global optimal point are presented in the following theorem.
\begin{theorem}\label{theorem_unimodality}
The upper bound of the end-to-end outage probability, over Rayleigh fading, is either a monotonic or a unimodal function in each of its variables, ${P_{\rm{r}}}$ and ${{\cal C}_x}$, individually.

\end{theorem}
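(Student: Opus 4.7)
The plan is to analyze each variable separately and to exploit the decomposition of the upper bound. Writing
$1-\mathcal{P}^{\rm{UB}}_{\rm{E-E,RF}} = e^{-A}/B$
with
$A(P_{\rm r},\mathcal{C}_x)=\tfrac{\Psi_r(\mathcal{C}_x)}{P_{\rm r}\pi_{\rm rd}(1-\mathcal{C}_x^2)}+\tfrac{P_{\rm r}\pi_{\rm rr}+1}{P_{\rm s}\pi_{\rm sr}}\Psi_r(\alpha\mathcal{C}_x)$
and $B(P_{\rm r},\mathcal{C}_x)=P_{\rm s}\pi_{\rm sd}\tfrac{\Psi_r(\mathcal{C}_x)}{P_{\rm r}\pi_{\rm rd}(1-\mathcal{C}_x^2)}+1$, both positive, observe that the map $t\mapsto 1-e^{-t}$ is strictly monotone. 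Hence the unimodality/monotonicity of $\mathcal{P}^{\rm{UB}}_{\rm{E-E,RF}}$ in one of its arguments is equivalent to that of $h:=A+\ln B$. It therefore suffices to show, for each variable, that the partial derivative of $h$ changes sign at most once on the feasible interval.

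For fixed $P_{\rm r}$ and varying $\mathcal{C}_x\in[0,1]$, I would first introduce $v=1-\mathcal{C}_x^2$ and use the identity $\tfrac{\Psi_r(\mathcal{C}_x)}{1-\mathcal{C}_x^2}=\tfrac{\gamma}{\sqrt{1+\gamma v}+1}$ to bring $A$ and $\ln B$ into a tractable form. With this substitution, the R-D summand of $A$ and the $\ln B$ term are both strictly increasing in $\mathcal{C}_x$ (the denominator $\sqrt{1+\gamma v}+1$ decreases), while the S-R summand $\tfrac{P_{\rm r}\pi_{\rm rr}+1}{P_{\rm s}\pi_{\rm sr}}\Psi_r(\alpha\mathcal{C}_x)$ is strictly decreasing in $\mathcal{C}_x$ (since $\Psi_r$ is strictly decreasing and $\alpha\mathcal{C}_x$ is strictly increasing). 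I would then write $\partial h/\partial\mathcal{C}_x$ as the difference of these two strictly signed contributions and check its boundary behavior at $\mathcal{C}_x=0$ and $\mathcal{C}_x\to 1$ to decide whether a sign change occurs; uniqueness of any interior sign change would be established by showing that the ratio of the positive to the negative contribution is strictly monotone in $\mathcal{C}_x$, so that $\partial h/\partial\mathcal{C}_x=0$ admits at most one solution.

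For fixed $\mathcal{C}_x$ and varying $P_{\rm r}\in(0,P_{\max}]$, I would separate the $P_{\rm r}$-dependence: the R-D summand of $A$ and the $\ln B$ term depend on $P_{\rm r}$ only through the factor $1/P_{\rm r}$ multiplying a quantity that is independent of $P_{\rm r}$ (yielding strictly negative derivative), while the S-R summand depends on $P_{\rm r}$ through the affine factor $P_{\rm r}\pi_{\rm rr}+1$ and through $\alpha(P_{\rm r})=P_{\rm r}\pi_{\rm rr}/(P_{\rm r}\pi_{\rm rr}+1)$, whose combination I would show yields a strictly positive derivative contribution (the S-R outage worsens with $P_{\rm r}$ because of the RSI). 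Thus $\partial h/\partial P_{\rm r}$ is again a difference of a positive and a negative term; examining the boundaries $P_{\rm r}\to 0^+$ (where the R-D loss dominates) and $P_{\rm r}\to\infty$ (where the RSI dominates), together with a monotone-ratio argument, forces at most one interior critical point, giving either monotonicity or a single minimum.

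The hard step will be the monotone-ratio argument in each case: one must verify that the logarithmic derivative of the ratio of the positive to the negative contribution of $\partial h/\partial x$ has a definite sign throughout the interval. This reduces to algebraic manipulations on $\Psi_r$, $\alpha$, and their derivatives, making essential use of the square-root identities above and the bounds $0\le\mathcal{C}_x\le 1$, $0<\alpha<1$. Once this sign is pinned down, the intermediate-value theorem applied to $\partial h/\partial x$ yields the claimed monotonic-or-unimodal behavior, and, by the monotonicity of $t\mapsto 1-e^{-t}$, the same conclusion transfers back to $\mathcal{P}^{\rm{UB}}_{\rm{E-E,RF}}$.
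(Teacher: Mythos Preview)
Your reduction to $h=A+\ln B$ is fine and matches in spirit what the paper does; the crux, as you yourself flag, is the ``monotone-ratio'' step, and this is where your proposal is not a proof but a hope. For the $\mathcal{C}_x$ variable the ratio you would have to control mixes $\Psi_r(\mathcal{C}_x)$, $\Psi_r(\alpha\mathcal{C}_x)$, and the factor $1/B$, i.e.\ square roots with two different arguments together with a rational term; monotonicity of that ratio is far from obvious and, without a further trick, the algebra does not simplify. In other words, the one step on which the whole argument rests is precisely the step you have not carried out, and it is not clear it can be carried out along the lines you sketch.

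The paper bypasses the ratio argument altogether by a more explicit route. It differentiates $\overline{f}=1-\mathcal{P}^{\rm UB}_{\rm E-E,RF}$, factors out a strictly positive piece, and is left with a sign-determining factor $S$. For the $\mathcal{C}_x$ case it then performs the change of variable $z=\Psi_r(\mathcal{C}_x)+2$ (so that $1-\mathcal{C}_x^2=z(z-2)/\gamma$), which collapses the awkward ratio $\Psi_r(\mathcal{C}_x)/(1-\mathcal{C}_x^2)$ into $\gamma/z$; the resulting numerator $T(z)$ is a cubic in $z$ whose ordered coefficient sequence exhibits exactly one sign change, and \emph{Descartes' rule of signs} then gives exactly one positive root, hence at most one root in the feasible range $2\le z\le 1+\sqrt{1+\gamma}$. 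For the $P_{\rm r}$ case the same idea works directly: the sign-determining factor is the cubic $S(x)=-bx^3-bdx^2+(a+d)x+ad$, again with a single coefficient sign change and hence a single positive root. This is the missing idea in your plan: rather than trying to prove monotonicity of a ratio of transcendental expressions, reduce the zero set of the derivative to that of a cubic polynomial and count its positive roots by Descartes' rule. That device is what makes the argument close; your monotone-ratio approach might be salvageable for $P_{\rm r}$, but for $\mathcal{C}_x$ it is doubtful without the $z$-substitution.
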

\begin{proof}
The proof is provided in Appendix \ref{theorem_unimodality_proof}.
\end{proof}
Since monotonicity and unimodality are special cases of quasi-convexity, such a result allows for the use of quasi-convex optimization algorithms. For instance, the optimal ${{\cal C}_x}$ can be numerically obtained using the well-known bisection method operating on its derivative given in Appendix B. Similar properties are shown for the individual power optimization problem. 
\subsection{Joint Power and Circularity Coefficient Optimization}
The power optimization problem in \ac{PGS} is formulated as
\begin{align}
&\mathop {\min }\limits_{{P_{\rm{r}}}} \quad   {\cal P}_{\rm E-E,RF}\left( {{P_{\rm{r}}},0} \right) \nonumber \\
&\hspace{5pt}
s.t. \quad \; 0 < {P_{\rm{r}}} \le {P_{\max }}\cdot
\end{align}
Also, for the \ac{IGS} case, the joint problem is given as follows:
\begin{align}\label{improper_opt}
&\mathop {\min }\limits_{{P_{\rm{r}}},{{\cal C}_x}} \quad   {\cal P}_{\rm E-E,RF}^{\rm UB}\left( {{P_{\rm{r}}},{{\cal C}_x}} \right)  \nonumber \\
&\hspace{5pt}
s.t. \quad  \;\; 0 < {P_{\rm{r}}} \le {P_{\max }},\nonumber\\
&\quad \quad \quad \;\;0 \leq {{\cal C}_x} \le 1\cdot 
\end{align}
 It can be readily verified that the \ac{PGS} outage probability in \eqref{e2e_proper} is non-convex in ${P_{\rm{r}}}$. However, it is shown in \cite[Appendix 3.A]{Khafagy2016FDRDissertation} that the interior of the function is unimodal, and hence quasi-convex, in ${P_{\rm{r}}}$ following similar footsteps of the proof in Theorem \ref{theorem_unimodality}, i.e., via the Descartes rule of signs applied to the derivative of the outage probability function. Hence, the bisection method can be used to locate the global optimum.
 
The second problem is a minimization of a non-convex  function with simple box constraints. Thus, one may try to solve it numerically by, for example, the gradient projected method or the projected Newton's method without any guarantee to converge to an optimal solution \cite{Bertsekas1999nonlinear}. For exact performance analysis purposes, we find the optimal solution via a fine grid search. Moreover, in Theorem \ref{theorem_unimodality}, we proved that the objective function is either a monotonic or unimodal in each of the optimization variables individually over the interior of the constraint set. 
This property motivates us to use a coordinate descent (alternating optimization) method based on a two-dimensional bisection algorithm as in \cite{fadel2012qos}. Fortunately, as it will be noticed in the numerical results section, it always converges numerically to the optimal solution obtained by grid search.        
 
\section{Numerical Results}\label{sec:results}
We numerically evaluate the benefits that can be reaped by employing \ac{IGS} in \ac{FDR} systems. Throughout the following, we compare the performance of \ac{IGS} to that of \ac{PGS} as a benchmark in terms of the outage probability and ergodic rate metrics. For the adaptive design of the \ac{FDR} system based on the outage probability over Rayleigh fading, we use the algorithms discussed in Section \ref{sec: opt}. Specifically, for \ac{PGS}, we show the unoptimized performance with maximum power allocation (MPA), alongside that with optimized relay power using the bisection algorithm (BA) in addition to a fine grid search (GS) for verification purposes. On the other hand, the \ac{IGS} outage performance is shown via two expressions, namely, a) the derived upper bound (UB) in Theorem \ref{theorem_e2e_outage} and, b) the exact end-to-end expression involving the numerical computation of the integral in \eqref{p_out_sr_expectation_rayleigh}. The \ac{IGS} optimization involves two variables; $P_{\rm{r}}$ and $\mathcal{C}_x$. Hence, we consider two cases for \ac{IGS} in the presented figures, namely, i) one-dimensional (1D) optimization over $\mathcal{C}_x$ while adopting maximum power allocation for $P_{\rm{r}}$, and ii) joint $P_{\rm{r}}$/$\mathcal{C}_x$ two-dimensional (2D) optimization. The optimization is done for the two aforementioned cases using both BA and GS, with the prefixes 1D and 2D to distinguish between them.

Also, for performance analysis purposes, we optimize the outage probability and the ergodic rate performances of the \ac{FDR} system over Nakagami-$m$ fading  based on a fine 1D and 2D grid search (GS). We use the simulation parameters in Table I unless otherwise stated. 

\bgroup
\def\arraystretch{1}
\begin{table*}[t!]
\centering
{\rowcolors{1}{black!70!gray!40}{black!70!yellow!40}
\begin{tabular}{ ?c?c?c?c? }
\Xhline{0.6\arrayrulewidth}
Parameter& Value&Parameter& Value   \\
\Xhline{0.6\arrayrulewidth}
$\pi_{\rm{sr}}=\pi_{\rm{rd}}=\pi$ & $20\;\rm{dB}$&$\pi_{\rm{rr}}$&$10\;\rm{dB}$ \\
$\pi_{\rm{sd}}$&$3\;\rm{dB}$&$P_{\rm{s}}=P_{\rm{r}}=P_{\rm{max}}$&$1\;W$    \\
$\mathcal{C}_x$ &$0.9$&$r$&$1\;\rm{bits/sec/Hz}$ \\
$m_{\rm{rr}}=m_{\rm{sd}}$&$1$&$m_{\rm{sr}}=m_{\rm{rd}}$&$m\in\{1,2,3\}$  \\
\Xhline{0.6\arrayrulewidth}
\end{tabular}
}
\caption{Simulation Parameters.}
\label{table:1}
\end{table*}
\egroup
\subsection{Outage Probability Performance}  
Here, we evaluate the proposed lower bound of the end-to-end outage probability for $m \ge 1$ and the upper bound for Rayleigh fading. We use these bounds to optimize the outage performance. Finally, we compare the throughput of the \ac{IGS}-\ac{FDR} to that of \ac{PGS}-\ac{FDR} and \ac{HDR}.  
\subsubsection{Performance of Proposed  Outage Bounds}
\begin{figure*}[!t]
    \centering
    \begin{subfigure}[t]{0.5\textwidth}
        \includegraphics[width=\textwidth]{./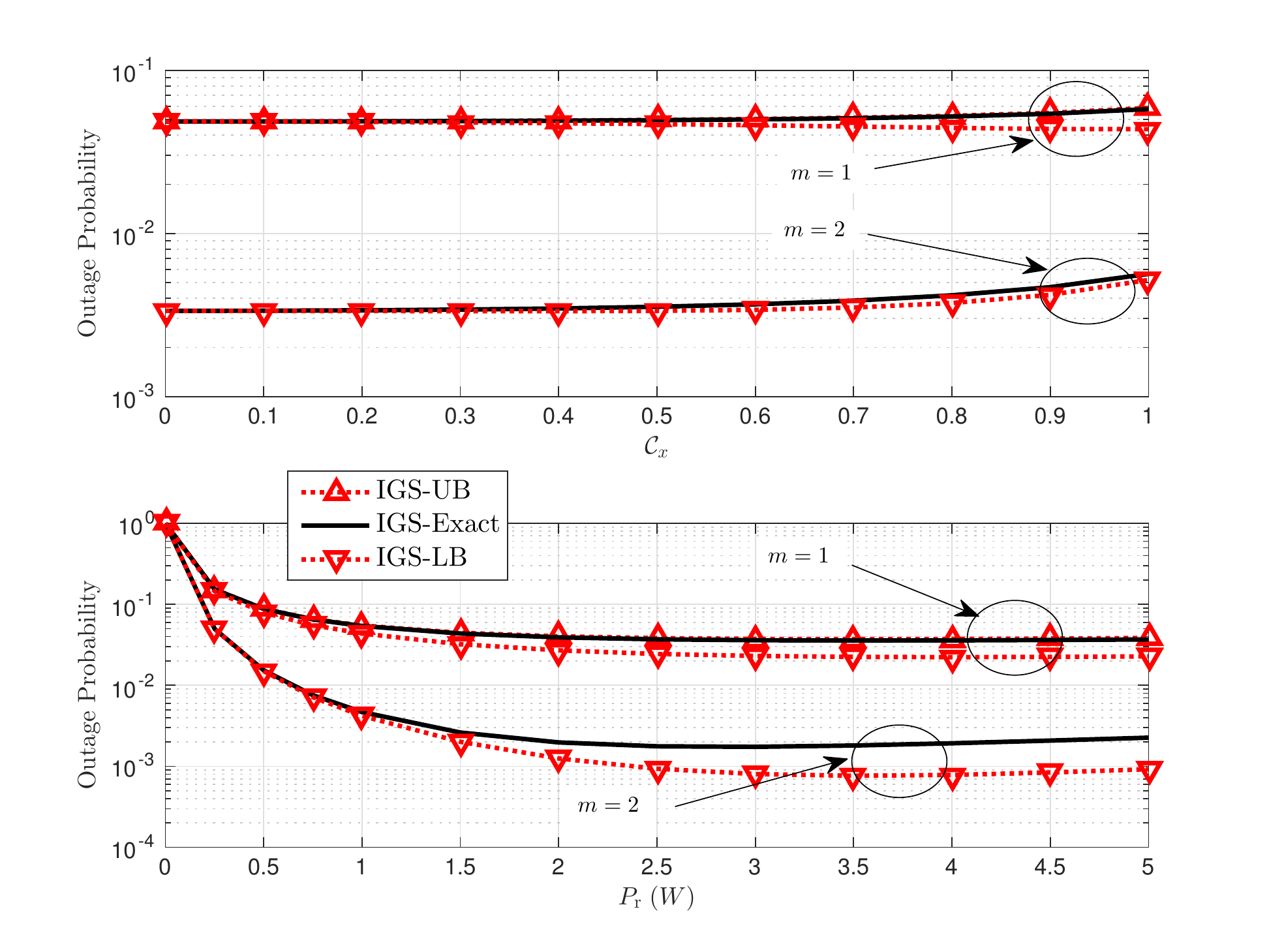}
        \caption{$\pi_{\rm{rr}}=0\;\rm{dB}$}
        \label{OP_vs_Pr_Cx-a}
    \end{subfigure}
    ~ 
 \hfill
    \begin{subfigure}[t]{0.5\textwidth}
        \includegraphics[width=\textwidth]{./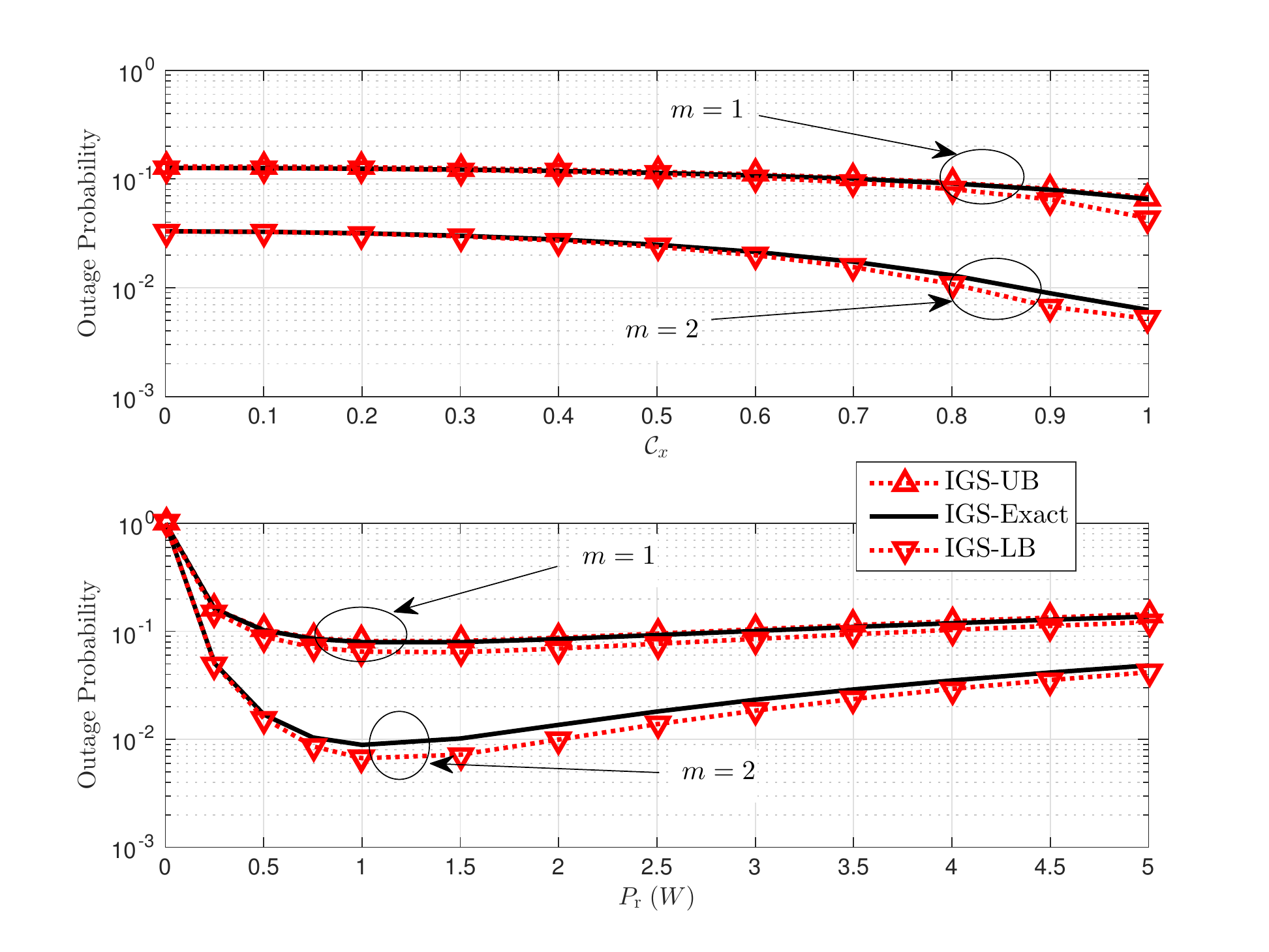}
          \caption{$\pi_{\rm{rr}}=10\;\rm{dB}$}
          \label{OP_vs_Pr_Cx-b}
    \end{subfigure}
    \caption{Outage prob. vs. $P_{\rm{r}}$ [lower] and $\mathcal{C}_x$ [upper], for ~different  $\pi_{\rm{rr}}$ and $m$.}
    \label{OP_vs_Pr_Cx}
\end{figure*}
We evaluate the performance of the proposed outage probability bounds comparing to the exact integral form expressions which are computed numerically. For this purpose, we plot the end-to-end outage probability versus  $P_{\rm{r}}$ and $\mathcal{C}_x$  in Fig. \ref{OP_vs_Pr_Cx} for (a) $\pi_{\rm{rr}}=0\;\rm{dB}$ and (b) $\pi_{\rm{rr}}=10\;\rm{dB}$.  It is clear from the figure that the outage performance improves by increasing $m$ as the fading effect becomes less severe in the relayed path. Also, we can see the tightness of the outage probability bounds for different values of  $P_{\rm{r}}$  and $\mathcal{C}_x$ for the two $\pi_{\rm{rr}}$ values. For a fixed $P_{\rm{r}}$ while increasing $\mathcal{C}_x$, the outage performance depends on the \ac{RSI}. Specifically, in our simulation setup, it is increasing  for $\pi_{\rm{rr}}=0\;\rm{dB}$ and decreasing for $\pi_{\rm{rr}}=10\;\rm{dB}$. Further, for a fixed $\mathcal{C}_x$ while increasing $P_{\rm{r}}$, in both \ac{RSI} values, the outage improves till a certain relay power, at which it starts to deteriorate.



\subsubsection{Outage Probability Optimization}
We evaluate here the performance of the optimized end-to-end outage probability with respect to several system and fading parameters based on the aforementioned optimization algorithms.

\textit{Effect of \ac{RSI}:}   
In Fig. \ref{OP_m_2_opt}, we plot the outage probability versus $\pi_{\rm{rr}}$ for $m=2$ based on a fine 1D and 2D GS. As shown, one can observe that at lower values of the \ac{RSI}, the \ac{IGS} solution reduces to \ac{PGS} since the \ac{RSI} is low and the relay can use more power without deteriorating the $\rm{S-R}$ link quality-of-service. As $\pi_{\rm{rr}}$ increases, the outage performance of the \ac{PGS} is significantly deteriorated. On the other hand, the \ac{IGS} design saturates at a fixed level. However, this level depends on the target rate and the $\rm{S-R}$ and $\rm{R-D}$ link conditions which can be clearly noticed from the outage performance at the two values of ~$\pi$. Moreover, a very interesting observation in Fig. \ref{OP_m_2_opt} is that the individual optimization of the signal asymmetry gives nearly the same performance as that via the joint optimization of $P_{\rm{r}}$  and $\mathcal{C}_x$. Hence, one can resort to the simpler 1D optimization on $\mathcal{C}_x$ to reduce the complexity of the adaptive \ac{FDR} system design.

\begin{figure*}[!t]
\begin{minipage}[t]{0.49\linewidth}
 \includegraphics[width=1\textwidth]{./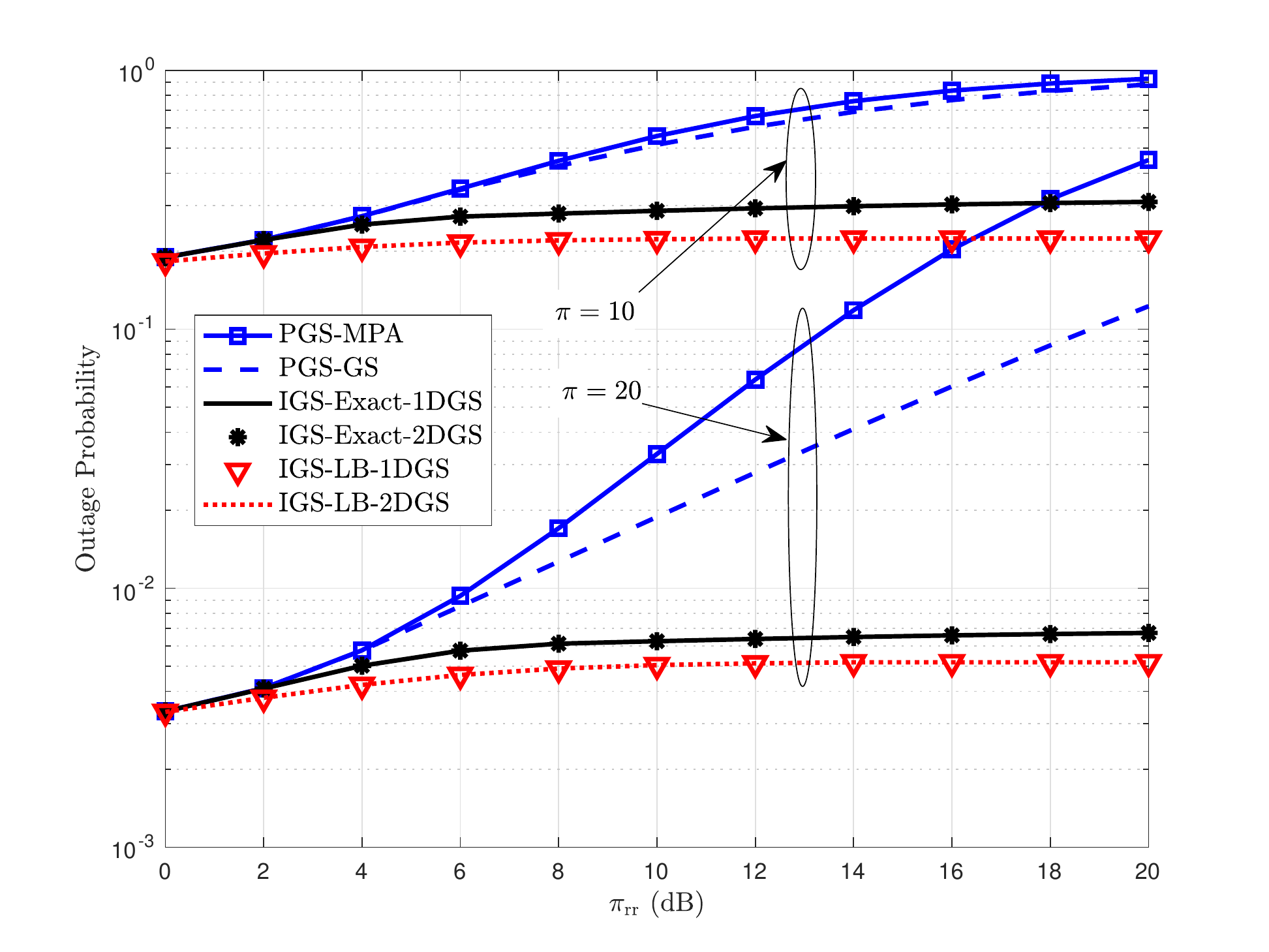}
  \caption{Optimized outage prob.  vs. $\pi_{\rm{rr}}$ for different $\pi$ and  $m=~2$.}
  \label{OP_m_2_opt}
\end{minipage}
\hspace{\fill}
\begin{minipage}[t]{0.49\linewidth}
  
   \includegraphics[width=1\textwidth]{./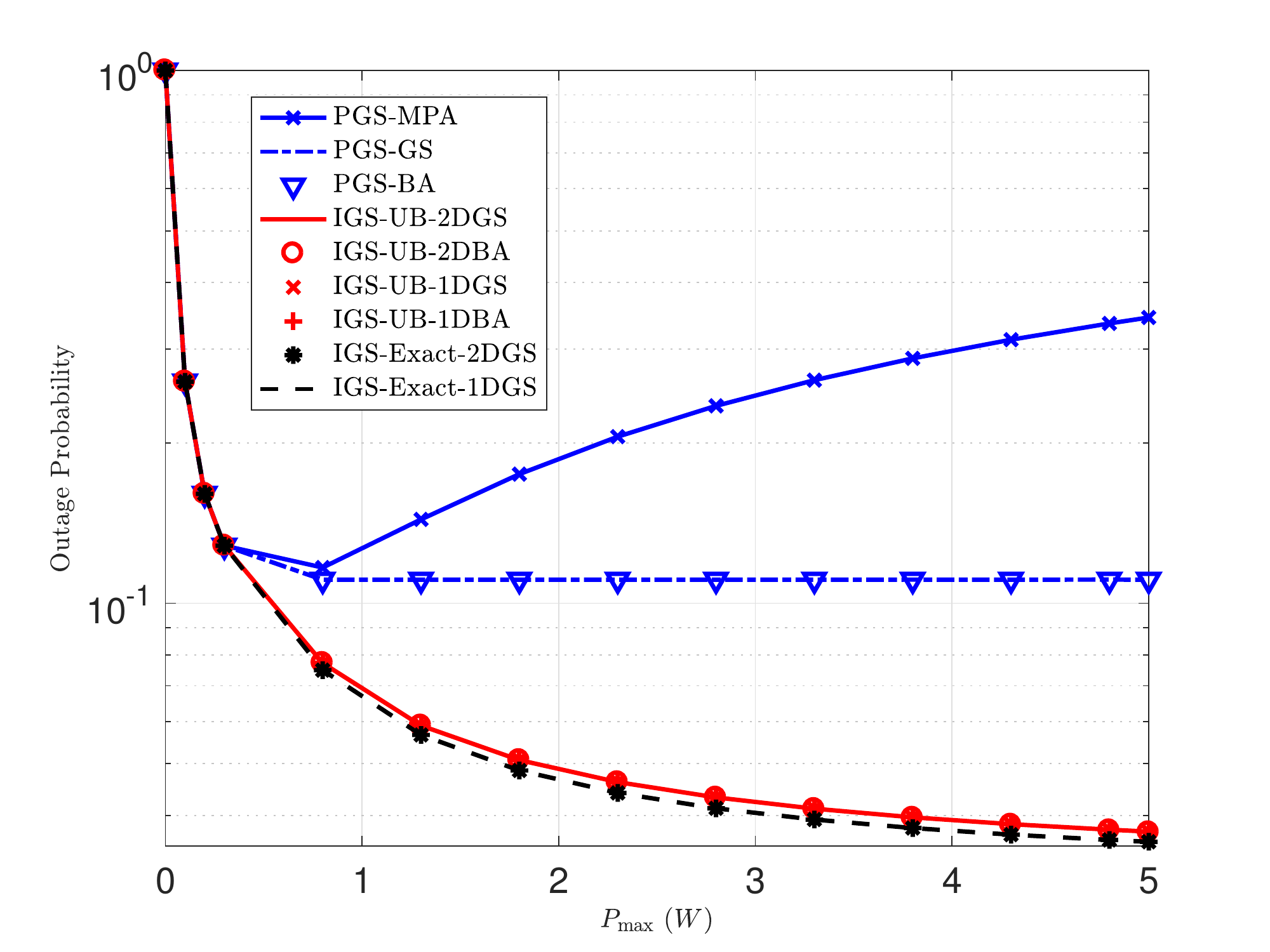}
  \caption{Outage prob. performance vs. relay power budget, for $m=1$.}
  \label{OP_Pmax}
\end{minipage}
\end{figure*}
\textit{Effect of Allowable Relay Power Budget:}    
 In Fig. 4, we study the outage probability  versus the available power budget at the relay. For \ac{FDR} with \ac{PGS}, and specifically when the relay transmits with its maximum power, the outage probability performance is known to be enhanced by increasing the allowable power only till a breakeven point as shown. This point is where the increasing adverse effect of \ac{RSI} on the first hop due to higher relay power starts to exceed any performance returns due to the higher reliability of the second hop. After such a point, any increase in the relay power causes a steady increase in the end-to-end  outage probability. If relay power optimization is allowed in \ac{PGS}, the performance can at best be kept constant after this breakeven point by clipping the transmit power level, rendering any further increase in the power budget unutilized. On the other hand, the performance trend is different when \ac{IGS} is adopted at the relay node. Indeed, by optimizing the relay's circularity coefficient, the outage probability performance continues its decreasing trend. It is also observed that, unlike in \ac{PGS}, the relay tends to use its maximum power in \ac{IGS} when joint power/circularity optimization is considered. For high power budgets, the optimal circularity coefficient value approaches unity, denoting a maximally improper signal that tends to allocate most of its power in only one dimension of the complex signal space. This renders the worst case scenario to have the remaining orthogonal signal space dimension as self-interference-free. The decreasing trend of the outage probability in \ac{IGS}, however, still shows diminishing returns due to the outage performance bottleneck in the first hop, which is primarily influenced by the first hop gain, $\pi_{\rm{sr}}$.

\begin{figure*}[!t]
\begin{minipage}[t]{0.49\linewidth}
  \includegraphics[width=1\textwidth]{./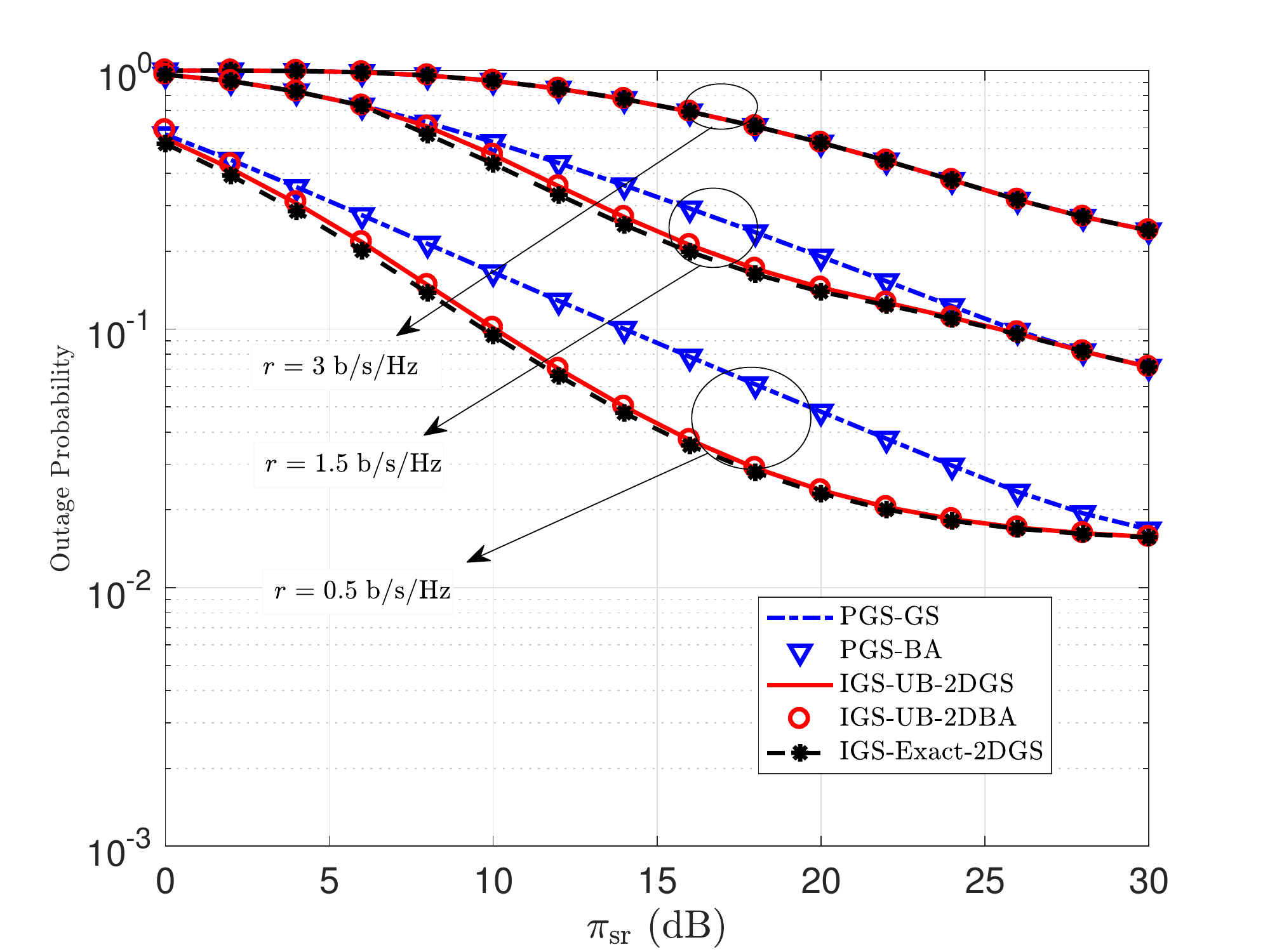}
 \caption{Outage prob. performance vs. $\pi_{\rm{sr}}$ for different target rates and $m=1$.}
  \label{OP_pi_sr}
\end{minipage}
\hspace{\fill}
\begin{minipage}[t]{0.49\linewidth}
  \includegraphics[width=1\textwidth]{./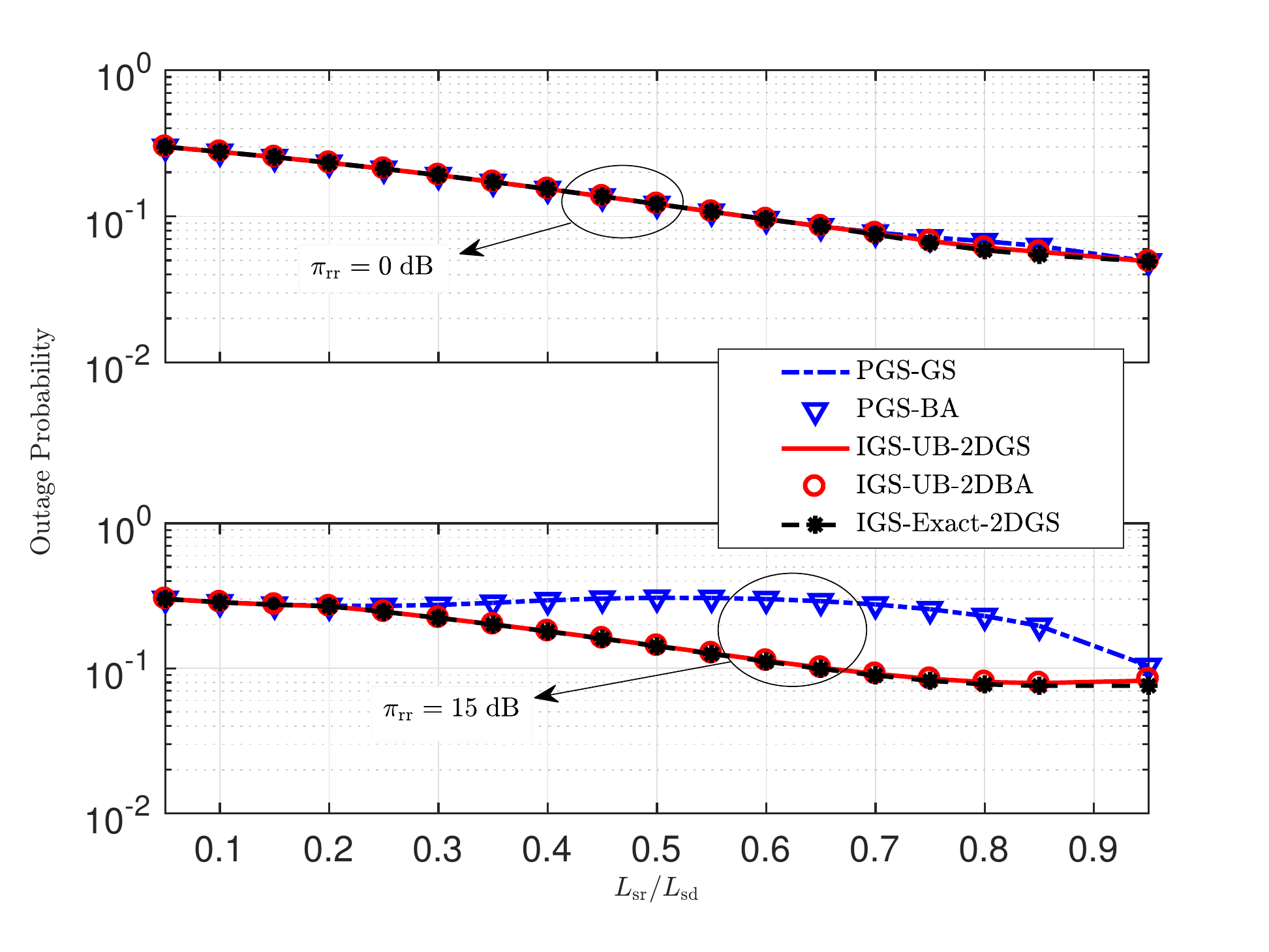}
  \caption{Outage prob. performance vs. normalized relay location, for $m=1$ and $r=0.5\;\rm{bits/sec/Hz}$.}
 \label{OP_loc}
\end{minipage}
\end{figure*}

\textit{Effect of Average $\rm{S-R}$ Link Gain:} In Fig. \ref{OP_pi_sr}, we plot the outage probability versus $\pi_{\rm{sr}}$ for different source target rates. First, communication fails at low $\pi_{\rm{sr}}$ values due to the first hop bottleneck, causing the outage probability of both \ac{PGS} and \ac{IGS} to start close to unity. As $\pi_{\rm{sr}}$ increases, using \ac{IGS} enables the relay to utilize more power relative to \ac{PGS} to boost the outage performance. At the end, when $\pi_{\rm{sr}}$ reaches a significantly higher value than the \ac{RSI}, the first hop no longer operates in the interference-limited regime, and hence, the \ac{IGS} merits become less significant relative to \ac{PGS}. Finally, as shown, the merits of \ac{IGS} over \ac{PGS} are more clear as the target rate decreases. In this case, the rate requirements in the first hop become less stringent, allowing \ac{IGS} to utilize higher transmit power relative to \ac{PGS} and yielding a better performance.   

\textit{Effect of Relay Location:}    
We study the relative relay location impact on the end-to-end outage performance for $\pi_{\mathrm{rr}} \in \{0, \; 15\} \; \mathrm{dB}$. The relay location in Fig. \ref{OP_loc} is defined as the normalized distance of $\rm{S-R}$ link, $L_{\rm{sr}}$ with respect to the distance of $\rm{S-D}$ link, $L_{\rm{sd}}$. When the relay location is closer to the source, the $\rm{S-R}$ link gain is very strong relative to the \ac{RSI}. In such a relatively self-interference-free scenario, the \ac{IGS} solution reduces as expected to the \ac{PGS} solution. As the relay moves towards the destination, the relative adverse effect of \ac{RSI} increases, causing the first hop to operate in the interference-limited regime. In such a regime, the benefits of \ac{IGS} start to show up in mitigating the adverse effect of the \ac{RSI} by tuning the signal impropriety. This gives the performance improvement in the second hop, due to the higher $\rm{R-D}$ link gain, a better opportunity to enhance the end-to-end performance. When the relay is too close to the destination, the \ac{RSI} effect significantly decreases due to the very low relay power required for successful communication, yielding similar \ac{IGS}/\ac{PGS} performance. It is clear that the benefits of \ac{IGS} are noticeable only when the \ac{RSI} link effect is non-negligible. When $\pi_{\mathrm{rr}}=0$ dB, i.e., at the noise level, \ac{IGS} yields the \ac{PGS} solution. 
\subsubsection{Throughput Comparison to \ac{HDR}}
It is essential to compare the performance of \ac{FDR} system with \ac{IGS} not only to \ac{PGS} but also to \ac{HDR}. For this purpose, we consider the optimized network throughput versus the target rate $r$ for (a) $m=2$ and (b) $m=1$  in Fig. \ref{throughput}. First, the optimized  throughput can be computed directly from the optimized end-to-end  outage probability, with a target rate $r$, as ${{\cal T}^*_{{\rm{E - E}}}}\left( {{P_{\rm{r}}},{{\cal C}_x}} \right) = r\left( {1 - {{\cal P}^*_{{\rm{E - E}}}}\left( {{P_{\rm{r}}},{{\cal C}_x}} \right)} \right)$. For the outage expressions of the \ac{HDR}, we consider two protocols;  1) simple \ac{MHDF} \ac{HDR}, where the destination distills the desired information only from the relayed path, and  2) \ac{HDR} with \ac{MRC}, where the destination combines the two time-orthogonal copies of the signal via the direct and relayed paths, as given in ~\cite[Table I]{khafagy2015efficient}. It can be seen from Fig.  \ref{throughput} that the target rate support set is divided into three regions where one protocol outperforms the others; 1) \ac{FDR} with \ac{PGS}, 2) \ac{FDR} with \ac{IGS}, and 3) \ac{HDR}. As shown, for very small (as well as for very high) target rates, optimized \ac{IGS} simply yields the \ac{PGS} solution. As the rate requirement gradually increases, \ac{IGS} can perform better than \ac{PGS} and \ac{HDR} by increasing the signal asymmetry as discussed earlier. However, at higher rates, the \ac{RSI} saturates the \ac{FDR} receiver and the performance deteriorates significantly, even with \ac{IGS} reaching the maximum impropriety, i.e., $\mathcal{C}_x=1$. At this point, \ac{HDR} with/without \ac{MRC} starts to offer better throughput than \ac{FDR}.  We can see that \ac{HDR} with \ac{MRC} is slightly better than \ac{HDR} without \ac{MRC} as expected. Moreover, we can observe from the figure that the \ac{FDR} with \ac{PGS} region is wider at $m=2$. This is caused since the fading is less severe in the relayed path than at $m=1$. Hence, the adverse effect of the \ac{RSI} on the first hop relatively decreases, leaving more room for \ac{FDR} with \ac{IGS} to compete with \ac{HDR} via circularity coefficient tuning. Thus, for lower values of $m$, the performance of \ac{HDR} starts to crossover that of \ac{FDR} at an earlier target rate cutoff than for higher $m$. 
\begin{figure*}[!t]
    \centering
    \begin{subfigure}[b]{0.5\textwidth}
        \includegraphics[width=\textwidth]{./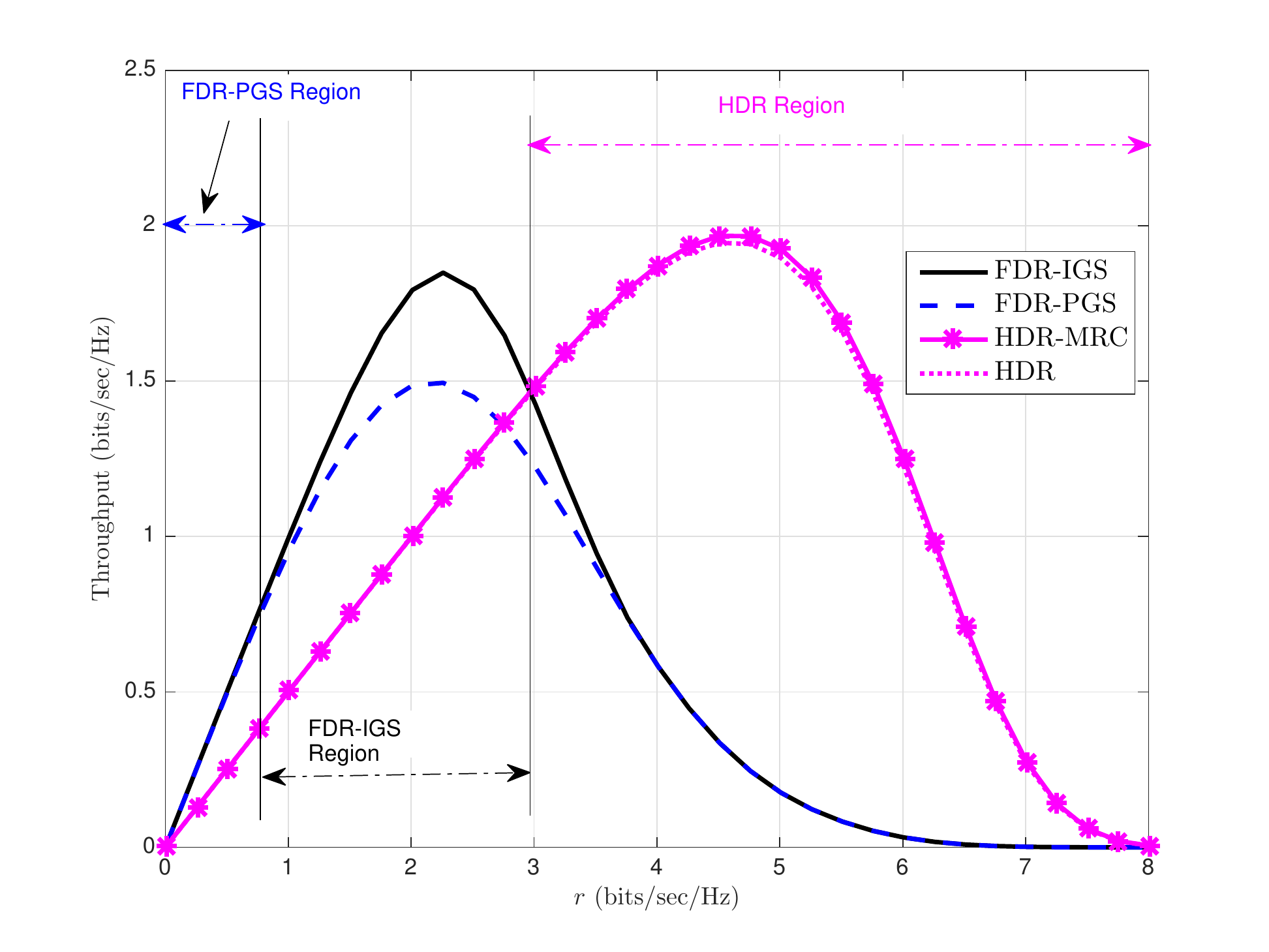}
        \caption{$m=2$}
    \end{subfigure}
    ~ 
      \hfill
    \begin{subfigure}[b]{0.5\textwidth}
        \includegraphics[width=\textwidth]{./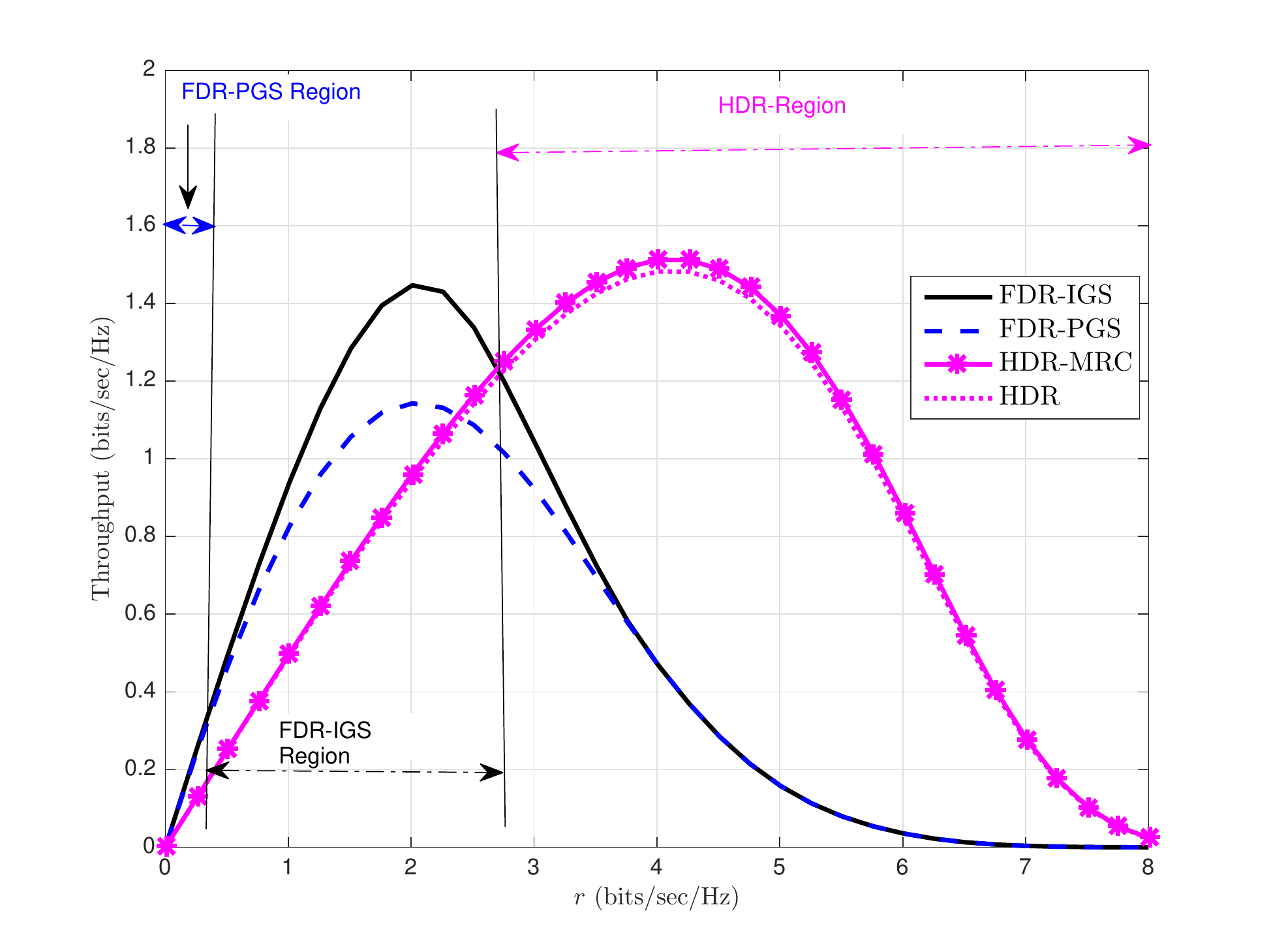}
         \caption{$m=1$ }
    \end{subfigure}
    \caption{Optimized throughput vs. $r$, for $\pi_{\rm{rr}}=15\;\rm{dB}$.}
    \label{throughput}
\end{figure*}

\subsection{Ergodic Rate Performance} 
We first evaluate the performance of the proposed upper bound of the end-to-end ergodic rate for $m \ge 1$ and the lower bound for Rayleigh fading. Then, we analyze the optimized end-to-end ergodic rate based on a fine 1D and 2D GS.
\subsubsection{Performance of Proposed Ergodic Rate Bounds}
We plot the ergodic rate versus $P_{\rm{r}}$ and $\mathcal{C}_x$ for (a) low RSI: $\pi_{\rm{rr}}=0\;\rm{dB}$ and (b) high RSI:  $\pi_{\rm{rr}}=20\;\rm{dB}$ for different $m$ in Fig. \ref{ER_Pr_C_x}. From the figure, one can notice the tightness of the lower and upper bounds compared to the exact ergodic rate. However, the lower bound over Rayleigh fading becomes loose at very high values of $\mathcal{C}_x$, i.e., $\mathcal{C}_x \approx 1$. As $m$ increases, higher end-to-end rate can be achieved as expected. Moreover, similar to Fig. \ref{OP_vs_Pr_Cx}, for a fixed relay transmit power, the value of \ac{RSI} affects the ergodic rate performance. For instance, it is decreasing and increasing for (a) and (b), respectively.

\begin{figure*}[!t]
    \centering
    \begin{subfigure}[b]{0.5\textwidth}
        \includegraphics[width=\textwidth]{./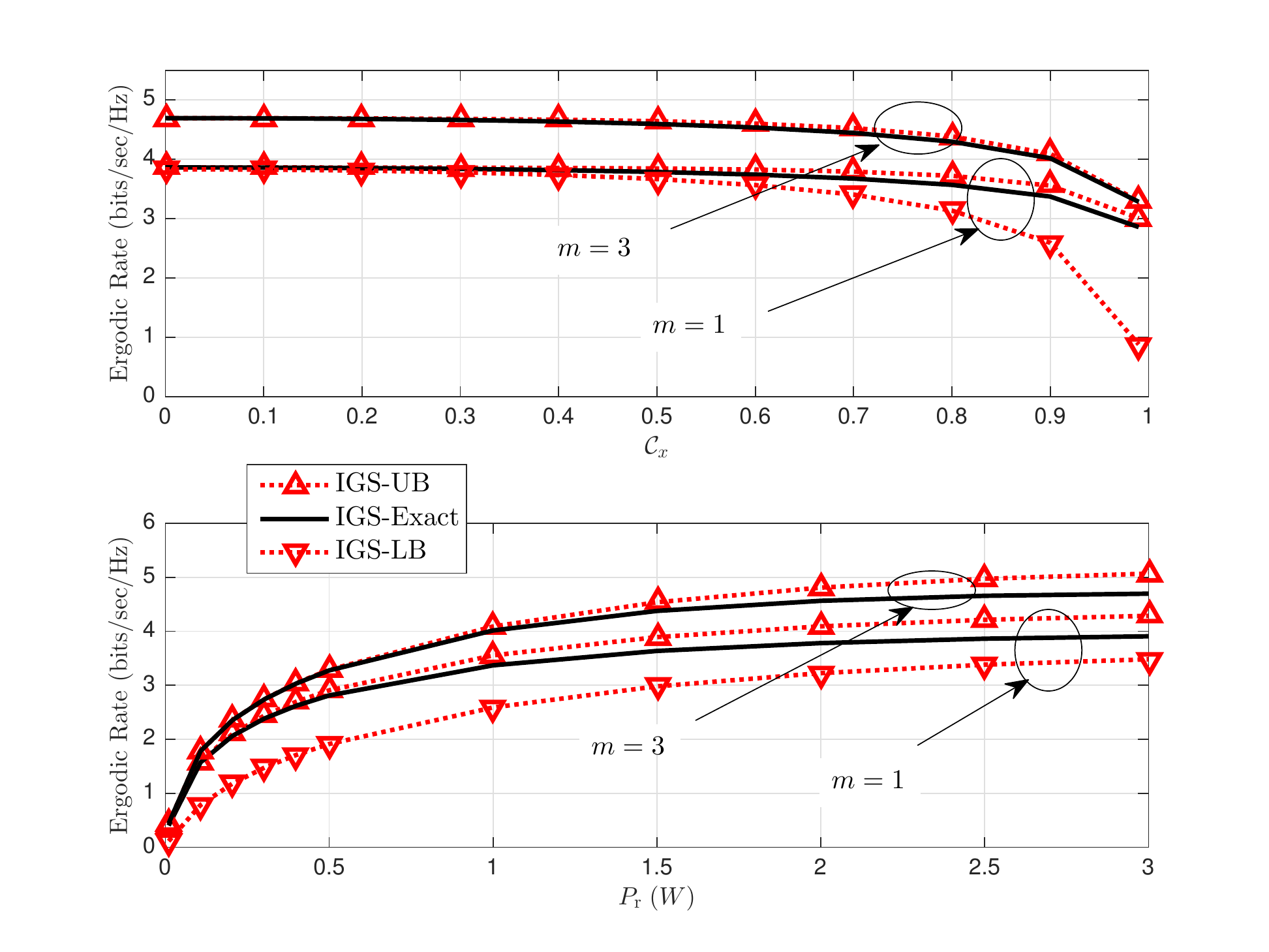}
        \caption{$\pi_{\rm{rr}}=0\;\rm{dB}$}
    \end{subfigure}
    ~ 
      \hfill
    \begin{subfigure}[b]{0.5\textwidth}
        \includegraphics[width=\textwidth]{./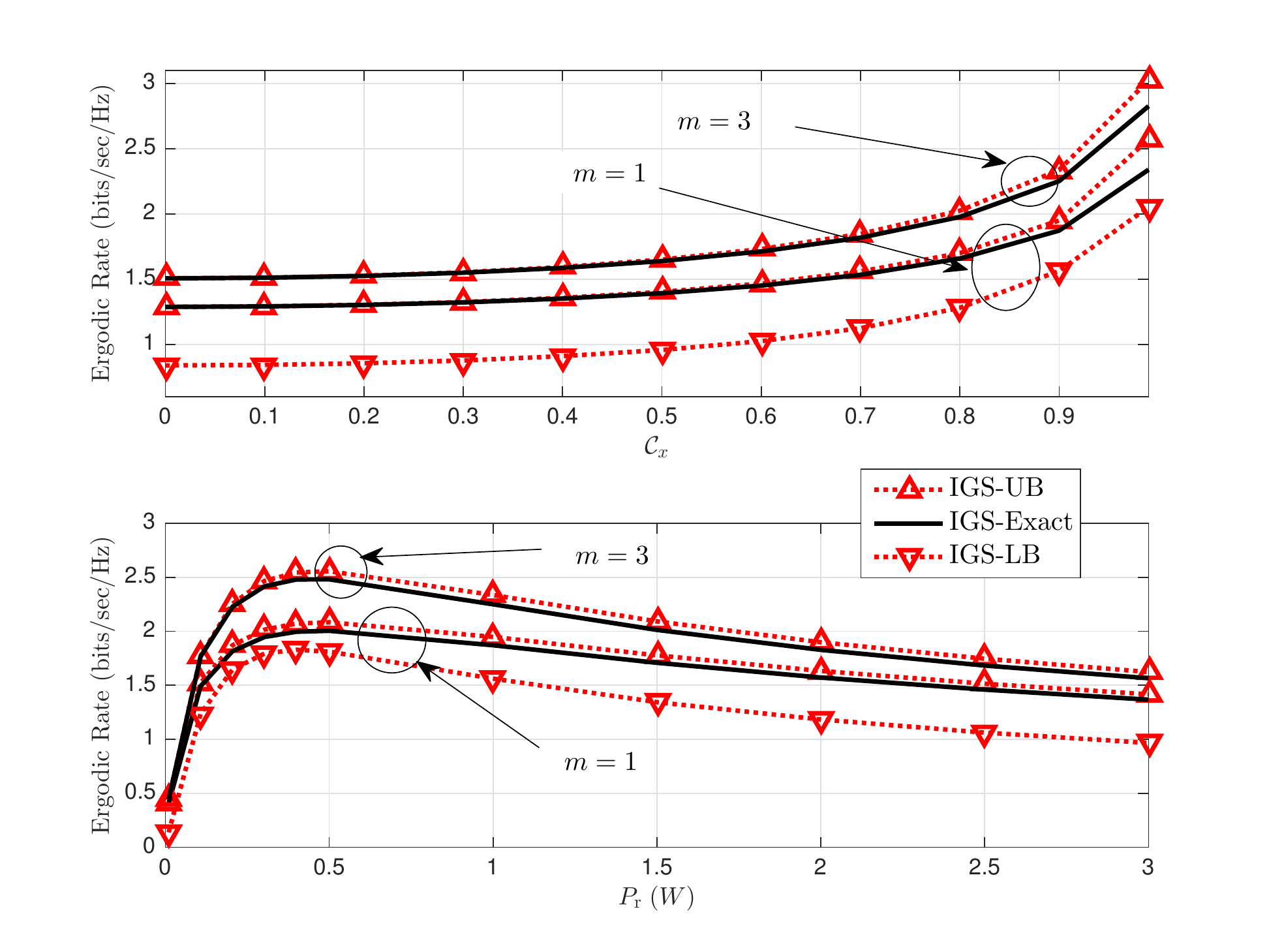}
          \caption{$\pi_{\rm{rr}}=20\;\rm{dB}$}         
    \end{subfigure}
    \caption{Erg. rate vs. $P_{\rm{r}}$ [lower] and $\mathcal{C}_x$ [upper], for ~different  $\pi_{\rm{rr}}$ and $m$.}
     \label{ER_Pr_C_x} 
\end{figure*}

\subsubsection{Ergodic Rate Optimization}
Here, we aim at maximizing the ergodic rate under the box constrains of $P_{\rm{r}}$ and $\mathcal{C}_x$ based on a fine 1D and 2D GS. We plot the  end-to-end ergodic rate versus $\pi_{\rm{rr}}$ for (a) $m=2$ and (b) $m=1$ in Fig. \ref{ER_opt}. It can be noticed that \ac{IGS} reduces to \ac{PGS} at lower \ac{RSI}, similar to the previously noticed outage probability trend in Fig. \ref{OP_m_2_opt}. As the \ac{RSI} increases, the \ac{PGS} performance degrades, while the \ac{IGS} attains nearly a fixed value. Although the gap between the analytical bounds and the exact numerical value widens in (b) to within $0.5$ bits/sec/Hz as the \ac{RSI} increases, the bounds still exhibit the same trend as the exact solution, and hence, they are able to reflect good design insights. For instance, a good figure of the \ac{RSI} level at which \ac{IGS} starts to offer better rates than \ac{PGS} can be obtained to be within the two points via the bounds. Finally, similar to Fig. \ref{OP_m_2_opt}, the optimizing only the circularity coefficient performs nearly the same as the joint optimization.

\begin{figure*}
    \centering
    \begin{subfigure}[b]{0.5\textwidth}
        \includegraphics[width=\textwidth]{./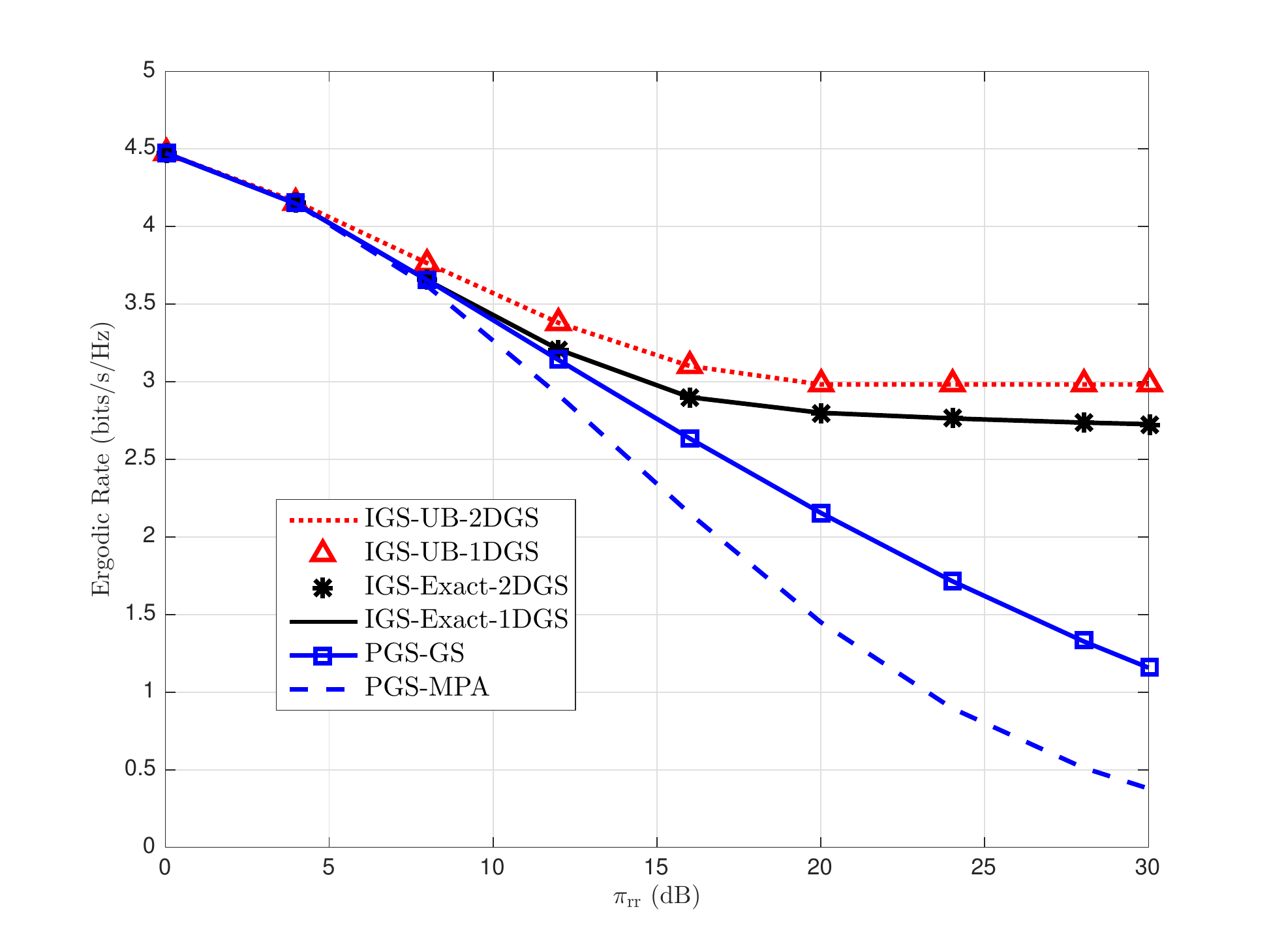}
        \caption{$m=2$}
    \end{subfigure}
    ~ 
      \hfill
    \begin{subfigure}[b]{0.5\textwidth}
        \includegraphics[width=\textwidth]{./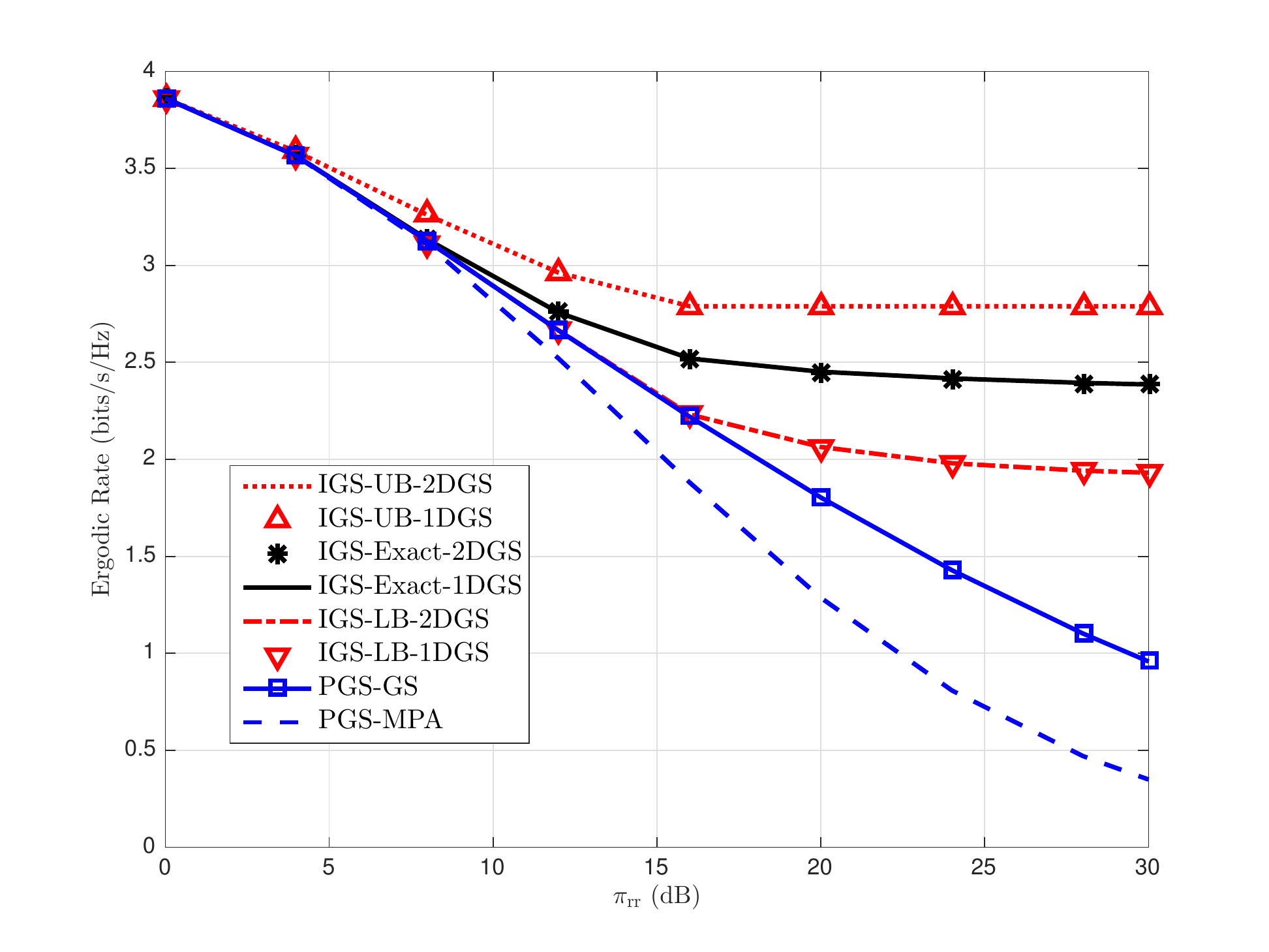}
         \caption{$m=1$ }
    \end{subfigure}
    \caption{Optimized ergodic rate vs. $\pi_{\rm{rr}}$, for different $m$. }
    \label{ER_opt}
\end{figure*}
\glsresetall
\section{Conclusion}\label{sec:conc}
In this work, we study the potential merits of employing \ac{IGS} in \ac{FDR} with non-negligible \ac{RSI} over Nakagami-$m$ fading channels. To analyze the benefits of \ac{IGS}, we present exact integral forms as well as analytical lower and upper bound expressions for the end-to-end outage probability and ergodic rate in terms of the relay's transmit power and circularity coefficient, which measures the degree of the signal impropriety. In order to 
optimize the FDR system performance, we numerically tune the relay  power and circularity coefficient based only on the relay's knowledge of the channel statistics. The findings of this work show that \ac{IGS} yields promising performance merits over \ac{PGS} in \ac{FDR} channels. Specifically, for strong \ac{RSI}, \ac{IGS} tends to leverage higher powers to enhance the performance while alleviating the \ac{RSI} impact by tuning the relay's circularity coefficient. Further, we show that unlike \ac{PGS}, as the \ac{RSI} value increases, \ac{IGS} is able to maintain a fixed performance that depends on the channel statistics and the target rate. It is also shown that, from an end-to-end throughput standpoint, \ac{IGS}-\ac{FDR} can outperform not only \ac{PGS}-\ac{FDR} but also half-duplex relaying with/without maximum ratio combining over certain regions of the target rate. Furthermore, numerical simulations show that it is usually sufficient to individually optimize the circularity coefficient to obtain nearly the same performance as the joint optimization of the power and circularity. Future research lines include obtaining efficient optimization algorithms for the FDR system performance in the Nakagami-$m$ case based on the derived bounds. Moreover,  by capitalizing on the results reported herein, a more general multi-antenna setting with RSI could be considered while adopting IGS at both the source and relay. In this case, it would be interesting, yet challenging, to investigate the IGS benefits via the efficient design of the covariance/pseudo-covariance matrices of the transmit signals.
%
\section*{Appendix A: Proof of Proposition \ref{prop_1}}\label{prop_1_proof}
We prove the convexity of the exponential term inside the expectation operator in \eqref{p_out_sr_expectation_rayleigh} by expressing it as ${{e^{ - f\left( {{g_{\rm{rr}}}} \right)}}}$. In fact, one can show easily that $f\left( {{g_{\rm{rr}}}} \right)$ can be written as
\begin{equation}
f\left( {{g_{\rm{rr}}}} \right) = \sqrt {Ag_{\rm{rr}}^2 + B{g_{\rm{rr}}} + C}  - (D{g_{\rm{rr}}} + F),
\end{equation}
where $A=\frac{{P_{\rm{r}}}^2 \left(1+\gamma (1-{{\cal{C}}_x}^2)\right)}{{P_{\rm{s}}}^2 {\pi_{\rm{sr}}}^2}$, $B=\frac{2 (1+\gamma) {P_{\rm{r}}}}{{P_{\rm{s}}}^2 {\pi_{\rm{sr}}}^2}$, $C=\frac{(1+\gamma)}{{P_{\rm{s}}}^2 {\pi_{\rm{sr}}}^2}$, $D=\frac{{P_{\rm{r}}}}{{P_{\rm{s}}} {\pi_{\rm{sr}}}}$, and $F=\frac{1}{{P_{\rm{s}}} {\pi_{\rm{sr}}}}$ are positive. Indeed, the second derivative of $f\left( {{g_{\rm{rr}}}} \right)$ with respect to $g_{\rm{rr}}$ is
\begin{equation}
\frac{{{\partial ^2}f\left( {{g_{\rm{rr}}}} \right)}}{{\partial g_{\rm{rr}}^2}} = \frac{{4AC - {B^2}}}{{4{{\left( {C + {g_{\rm{rr}}}\left( {B + A{g_{\rm{rr}}}} \right)} \right)}^{3/2}}}} \leq 0,
\end{equation}  
since $1+\gamma(1-{{\cal{C}}_x}^2)\leq 1+\gamma$ for ${0\leq {\cal{C}}_x}\leq 1$. Hence, $f\left( {{g_{\rm{rr}}}} \right)$ is concave and ${{e^{ - f\left( {{g_{\rm{rr}}}} \right)}}}$ is convex.
\vspace{-6pt}
\section*{Appendix B: Proof of Theorem \ref{theorem_unimodality}}\label{theorem_unimodality_proof}
\subsubsection{Unimodality in ${{\cal{C}}_x}$}
The outage upper bound in \eqref{p_out_UB} as a function of ${{\cal{C}}_x}$, denoted here as $x$, is given on the form.
\begin{align}
f(x) = 1 - \frac{{{e^{ - a \frac{{\Psi_r \left( x \right)}}{{\left( {1 - x^2} \right)}} - b \Psi_r \left( \alpha x \right)}} }}{{c \frac{{{\Psi_r}\left( x \right)}}{{\left( {1 - x^2} \right)}} + 1}},
\end{align}
where $0\leq x \leq 1$, $a=\frac{1}{P_{\rm{r}} \pi _{\rm{rd}}}$, $b=\frac{{{{P_{\rm{r}}}{\pi _{\rm{rr}}} + 1} }}{{{P_{\rm{s}}}{\pi _{\rm{sr}}}}}$ and $c=\frac{P_{\rm{s}} \pi _{\rm{sd}}}{P_{\rm{r}} \pi _{\rm{rd}}}$. In the following, we analyze the stationary points of $\overline{f}(x)=1-f(x)$. Its derivative is given by
\begin{align}
\frac{d \overline{f}(x)}{d x} = x \frac{ e^{-a \frac{\Psi_r\left(x\right)}{1-x^2}-b \Psi_r \left( \alpha x \right)}}{\left(c\frac{\Psi_r\left(x\right)}{1-x^2}+1\right)^2} S(x),
\end{align}
\vspace{-5pt}
where
\begin{align}
S(x) =&\left(c\frac{\Psi_r\left(x\right)}{1-x^2}\!+1\!\right)\!\!\Bigg[\!\frac{a \left(2 \Psi_r\left(x\right)+\gamma 
   \left(x^2-1\right)\right)}{(\Psi_r\left(x\right)+1) \left(1-x^2\right)^2}\!\nonumber \\
   &+\!\frac{b \gamma
    \alpha^2}{\Psi_r \left( \alpha x \right)+1} \!\Bigg]\!+\!\frac{\gamma  c}{(\Psi_r\left(x\right)+1)
   \left(1-x^2\right)}\!-\!\frac{2 c \Psi_r\left(x\right)}{\left(1-x^2\right)^2}\cdot
   \end{align}
From the given form, and in addition to the roots of $S(x)$, it is clear that $\frac{d \overline{f}(x)}{d x}$ admits only a zero at $x=0$. Now, we investigate the roots for $S(x)$, and use the change of variables, $z={\Psi_r}\left( x \right) + 2$. Hence, $1-x^2 = \frac{z (z-2)}{\gamma}$. After substitution and some manipulations, $S(z)$ is hence given for our region of interest, $2 \leq z \leq 1+ \sqrt{1+\gamma}$, by{
\begin{align}
S(z) =\left(c\frac{\gamma}{z}+1\right) \hspace{-3pt}\left(\frac{- a \gamma^2}{z^2 (z-1)}+\frac{b \gamma
    \alpha^2}{\Psi_r \left( \alpha x \right)+1}\right)-\frac{\gamma^2  c}{z^2 (z-1)}\cdot
   \end{align}
Since $0<\alpha<1$, we know that $1-\alpha^2 x^2\geq 1-x^2$. Hence, $\Psi_r \left(\alpha x\right)+1\geq\Psi_r \left(x\right)+1=z-1$. Let $\Psi_r \left(\alpha x\right)+1=t_z(z-1)$, where $t_z\geq1$. Therefore,
\begin{eqnarray}
S(z) &=& \frac{(c \gamma+z)(- a \gamma^2 t_z + b \gamma
    \alpha^2 z^2)-\gamma^2  c t_z z}{t_z z^3 (z-1)}\cdot
\end{eqnarray}
The numerator is a cubic polynomial in $z$ which is given by
$
T(z) = b \alpha^2 \gamma z^3     + b \alpha^2 c \gamma^2 z^2 - (a+c) \gamma^2 t_z z - a c \gamma^3 t_z\cdot
$
To find the number of positive roots for $T(z)$, we use the Descartes rule of signs \cite{prasolov2009polynomials}. Specifically, for the sequence formed by the descending order of the cubic equation coefficients, i.e., the sequence  $\{b \alpha^2 \gamma, b \alpha^2 c \gamma^2, - (a+c) \gamma^2 t_z, - a c \gamma^3 t_z\}$,  the number of sign changes is only one. For our real cubic polynomial, this determines the number of positive roots to be exactly one root. Hence, in the positive region of interest, $2 \leq z \leq 1 + \sqrt{1 + \gamma}$, either one or no \emph{feasible} roots exist for $T(z)$, and hence for $S(z)$. This shows that $\overline{f}(x)$ is either monotonic or unimodal due to the existence of one root at maximum in its interior.

\vspace{5pt}
\subsubsection{Unimodality in ${P_{\rm{r}}}$}
The outage probability upper bound in \eqref{p_out_UB} as a function of ${P_{\rm{r}}}$, also denoted here as $x$, is given by
\begin{align}
\PUB \left( \PR \right) =  1 - c\underbrace{\frac{e^{ -\Big(\frac{a}{\PR}+ {{b\PR} }\Big)}}{\frac{d}{\PR} + 1}}_{D(\PR)},
\end{align}
where $a=\frac{{\Psi_r \left( {{{\cal C}_x}} \right)}}{{{\pi _{\rm{rd}}}\left( {1 - {\cal C}_x^2} \right)}}$, $b={\pi _{\rm{rr}}}\frac{\Psi_r \left( \alpha \mathcal{C}_x \right)}{{{P_{\rm{s}}}{\pi _{\rm{sr}}}}}$, $c=e^{-\frac{\Psi_r \left( \alpha \mathcal{C}_x \right)}{{{P_{\rm{s}}}{\pi _{\rm{sr}}}}}}$, and $d=\frac{{P_{\rm{s}}}{\pi _{\rm{sd}}}{{\Psi_r}\left( {{{\cal C}_x}} \right)}}{{{\pi _{\rm{rd}}}\left( {1 - {\cal C}_x^2} \right)}}$. 
In what follows, we analyze the stationary points of the interior of the function $D(\PR)$. 
Differentiating $D(\PR)$ with respect to $\PR$ yields
\begin{eqnarray}
\frac{d D}{d \PR} 
&=&\frac{\overbrace{d\PR + (a-b\PR^2)(\PR+d)}^{S(\PR)}}{\PR(\PR+d)^2}e^{-\left(\frac{a}{\PR}+b\PR\right)}.
\end{eqnarray}
Now, the stationary points of $D(\PR)$ originate from the roots of the cubic polynomial $S(x)=-b x^3-bd x^2+(a+d)x+ad$.
Again, there exists only one sign change  in the coefficients of the cubic equation, and hence, only one root exists for $S(\PR)$. Therefore, in the positive region of interest, i.e.,  $0<{P_{\rm{r}}}<{P_{\max }}$, $D(\PR)$ and hence $\PUB (\PR)$, are either monotonic or unimodal in $\PR$.
\bibliographystyle{IEEEtran}

\bibliography{IEEEabrv,mgaafar_ref_October_2015}

\end{document}